\newtheorem{thm}{Theorem}
\newtheorem{lem}{Lemma}
\newtheorem{df}{Definition}
\newtheorem{rem}{Remark}
\newcommand{\GF}{\mathrm{GF}}
\newcommand{\GFq}{\mathrm{GF}(q)}
\newcommand{\hB}{\widehat{B}}
\newcommand{\hcB}{\widehat{\mathcal{B}}}
\newcommand{\A}{\mathcal{A}}
\newcommand{\B}{\mathcal{B}}
\newcommand{\C}{\mathcal{C}}
\newcommand{\cS}{\mathcal{S}}
\newcommand{\T}{\mathcal{T}}
\newcommand{\U}{\mathcal{U}}
\newcommand{\bU}{\overline{\mathcal{U}}}
\newcommand{\V}{\mathcal{V}}
\newcommand{\X}{\mathcal{X}}
\newcommand{\hcX}{\widehat{\mathcal{X}}}
\newcommand{\tcX}{\widetilde{\mathcal{X}}}
\newcommand{\tX}{\widetilde{X}}
\newcommand{\Y}{\mathcal{Y}}
\newcommand{\Z}{\mathcal{Z}}
\newcommand{\G}{\mathcal{G}}
\newcommand{\bcA}{\boldsymbol{\mathcal{A}}}
\newcommand{\bcB}{\boldsymbol{\mathcal{B}}}
\newcommand{\bchB}{\boldsymbol{\widehat{\mathcal{B}}}}
\newcommand{\aalpha}{\boldsymbol{\alpha}}
\newcommand{\bbeta}{\boldsymbol{\beta}}
\newcommand{\kkappa}{\boldsymbol{\kappa}}
\newcommand{\cc}{\boldsymbol{c}}
\newcommand{\mm}{\boldsymbol{m}}
\newcommand{\bp}{\boldsymbol{p}}
\newcommand{\uu}{\boldsymbol{u}}
\newcommand{\vv}{\boldsymbol{v}}
\newcommand{\ww}{\boldsymbol{w}}
\newcommand{\xx}{\boldsymbol{x}}
\newcommand{\yy}{\boldsymbol{y}}
\newcommand{\zz}{\boldsymbol{z}}
\newcommand{\txx}{\widetilde{\boldsymbol{x}}}
\newcommand{\tzz}{\widetilde{\boldsymbol{z}}}
\newcommand{\hxx}{\widehat{\boldsymbol{x}}}
\newcommand{\hx}{\widehat{x}}
\newcommand{\hX}{\widehat{X}}
\newcommand{\hY}{\widehat{Y}}
\newcommand{\tg}{\widetilde{g}}
\newcommand{\tZ}{\widetilde{Z}}
\newcommand{\tz}{\widetilde{z}}
\newcommand{\tcZ}{\widetilde{\mathcal{Z}}}
\newcommand{\FF}{\boldsymbol{F}}
\newcommand{\e}{\varepsilon}
\newcommand{\lrB}[1]{\left[{#1}\right]}
\newcommand{\lrb}[1]{\left\{{#1}\right\}}
\newcommand{\lrsb}[1]{\left({#1}\right)}
\newcommand{\Error}{\mathrm{Error}}
\newcommand{\zero}{\boldsymbol{0}}
\newcommand{\one}{\boldsymbol{1}}
\newcommand{\limn}{\lim_{n\to\infty}}
\newcommand{\Encoder}{\varphi}
\newcommand{\Decoder}{\varphi^{-1}}
\newcommand{\Security}{\mathrm{Leakage}}
\newcommand{\Capacity}{\mathrm{Capacity}}
\newcommand{\Rate}{\mathrm{Rate}}
\newcommand{\im}{\mathrm{Im}}
\newcommand{\qed}{$\blacksquare$}
\newcommand{\lA}{l_{\A}}
\newcommand{\lB}{l_{\B}}
\newcommand{\lhB}{l_{\hcB}}
\newcommand{\eA}{\e_{\A}}
\newcommand{\ehB}{\e_{\hcB}}
\newcommand{\RA}{R_{\A}}
\newcommand{\RB}{R_{\B}}
\newcommand{\RhB}{R_{\hcB}}
\newcommand{\markov}{\leftrightarrow}
\title{
Construction of Codes
for Wiretap Channel and Secret Key Agreement
from Correlated Source Outputs
by Using Sparse Matrices
}
\author{
Jun~Muramatsu~\IEEEmembership{Member,~IEEE,}
and~Shigeki Miyake~\IEEEmembership{Member,~IEEE,}
 \thanks{J.~Muramatsu is with
        NTT Communication Science Laboratories, NTT Corporation,
        2-4, Hikaridai, Seika-cho, Soraku-gun, Kyoto 619-0237, Japan
        (E-mail: pure@cslab.kecl.ntt.co.jp).
        S.~Miyake is with
        NTT Network Innovation Laboratories, NTT Corporation,
        Midori-cho 3-9-11, Musashino-shi, Tokyo 180-8585, Japan
        (E-mail: miyake.shigeki@lab.ntt.co.jp).
	}
  \thanks{This paper was presented in part at
  ``Construction of wiretap channel codes by using sparse matrices,''
  {\em Proc. 2009 IEEE Information Theory Workshop (ITW2009)}, Taormina,
  Italy, pp.\ 105--109, 2009.
  This paper is submitted to
  {\it IEEE Transactions on Information Theory}, Feb.\ 2010.
  }
}
\begin{document}
\maketitle

\begin{abstract}
 The aim of this paper is to prove
 coding theorems for the wiretap channel
 and secret key agreement
 based on the the notion of a hash property
 for an ensemble of functions.
 These theorems imply that codes using sparse
 matrices can achieve the optimal rate.
 Furthermore, fixed-rate universal coding theorems for a wiretap channel
 and a secret key agreement are also proved.
\end{abstract}
\begin{keywords}
 Shannon theory,
 hash property, linear codes,
 sparse matrix,
 maximum-likelihood decoding,
 minimum-divergence encoding,
 minimum-entropy decoding,
 secret key agreement from correlated source outputs,
 wiretap channel,
 universal codes
\end{keywords}

\section{Introduction}
The aim of this paper is to prove the coding theorems
for the wiretap channel (Fig.\ \ref{fig:wiretap}) introduced in \cite{W75b}
and secret key agreement problem (Fig.\ \ref{fig:ska})
introduced in \cite{Mau93}\cite{AC93}.
The proof of theorems is based on the notion of a hash property for an
ensemble of functions introduced in~\cite{HASH}\cite{HASH-UNIV}.
This notion provides a sufficient condition for the achievability
of coding theorems.
Since an ensemble of sparse matrices has a hash property,
we can construct codes by using sparse matrices
where the rate of codes is close to the optimal rate.
In the construction of codes, we use minimum-divergence encoding,
maximum-likelihood decoding, and minimum-entropy decoding,
where we can use the approximation methods introduced in
\cite{FWK05}\cite{CME05} to realize these operations.

Wiretap channel coding using a sparse matrices
is studied in \cite{TDCMM07} for binary erasure wiretap channels.
On the other hand, our construction can be applied
to any stationary memoryless channel.
It should be noted here that the encoder design is based on
the standard channel code presented in
\cite{SWLDPC}\cite{HASH}\cite{HASH-UNIV}\cite{MM08}.
Furthermore, we prove the fixed-rate universal coding theorem for a wiretap
channel, where our construction is reliable and secure
for any channel under some conditions specified by the encoding
rate. Universality is not considered in \cite{W75b}\cite{TDCMM07}.

The secret key agreement from correlated source outputs
using sparse matrices is studied in \cite{ISIT04}\cite{CRYPTLDPC},
where both non-universal and universal codes are considered.
Our construction is the same as that proposed in \cite{CRYPTLDPC}.
It should be noted that the linearity of functions is not assumed
in our proof of reliability and security
while it is assumed in \cite{CRYPTLDPC}.
Furthermore, an expurgated ensemble of sparse matrices
is not assumed in our proof while it is assumed in \cite{CRYPTLDPC}.

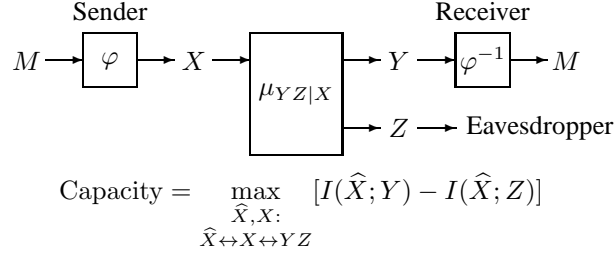
\begin{figure}
 \begin{center}
  \unitlength 0.5mm
  \begin{picture}(152,60)(0,-10)
   \put(5,35){\makebox(0,0){$M$}}
   \put(10,35){\vector(1,0){10}}
   \put(27,48){\makebox(0,0){Sender}}
   \put(20,28){\framebox(14,14){$\Encoder$}}
   \put(34,35){\vector(1,0){10}}
   \put(49,35){\makebox(0,0){$X$}}
   \put(54,35){\vector(1,0){10}}
   \put(64,10){\framebox(24,32){$\mu_{YZ|X}$}}
   \put(88,35){\vector(1,0){10}}
   \put(88,17){\vector(1,0){10}}
   \put(103,35){\makebox(0,0){$Y$}}
   \put(103,17){\makebox(0,0){$Z$}}
   \put(108,35){\vector(1,0){10}}
   \put(108,17){\vector(1,0){10}}
   \put(140,17){\makebox(0,0){Eavesdropper}}
   \put(125,48){\makebox(0,0){Receiver}}
   \put(118,28){\framebox(14,14){$\Decoder$}}
   \put(132,35){\vector(1,0){10}}
   \put(147,35){\makebox(0,0){$M$}}
   \put(76,-5){\makebox(0,0){
   $\Capacity=\displaystyle{\max_{\substack{\hX,X:\\\hX\markov X\markov YZ}}}[I(\hX;Y)-I(\hX;Z)]$}}
  \end{picture}
 \end{center}
 \caption{Wiretap Channel Coding}
 \label{fig:wiretap}
\end{figure}

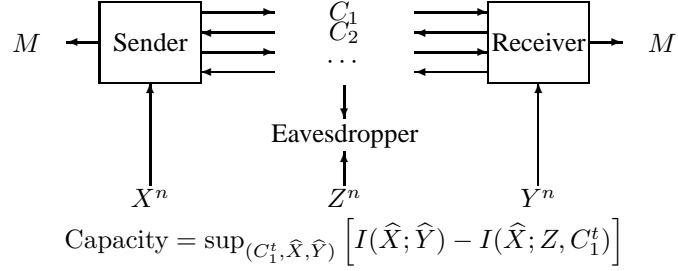
\begin{figure}
 \begin{center}
  \unitlength 0.44mm
  \begin{picture}(170,80)(5,-35)
   \put(-5,30){\makebox(0,0){$M$}}
   \put(17,30){\vector(-1,0){10}}
   \put(17,18){\framebox(30,24){Sender}}
   \put(47,39){\vector(1,0){22}}
   \put(90,39){\makebox(0,0){$C_1$}}
   \put(111,39){\vector(1,0){22}}
   \put(69,33){\vector(-1,0){22}}
   \put(90,33){\makebox(0,0){$C_2$}}
   \put(133,33){\vector(-1,0){22}}
   \put(47,27){\vector(1,0){22}}
   \put(90,24){\makebox(0,0){$\cdots$}}
   \put(111,27){\vector(1,0){22}}
   \put(69,21){\vector(-1,0){22}}
   \put(133,21){\vector(-1,0){22}}
   \put(133,18){\framebox(30,24){Receiver}}
   \put(163,30){\vector(1,0){10}}
   \put(185,30){\makebox(0,0){$M$}}
   \put(32,-13){\vector(0,1){31}}
   \put(32,-17){\makebox(0,0){$X^n$}}
   \put(148,-13){\vector(0,1){31}}
   \put(148,-17){\makebox(0,0){$Y^n$}}
   \put(90,17){\vector(0,-1){10}}
   \put(90,2){\makebox(0,0){Eavesdropper}}
   \put(90,-13){\vector(0,1){10}}
   \put(90,-17){\makebox(0,0){$Z^n$}}
   \put(90,-30){\makebox(0,0){$\Capacity=\sup_{(C_1^t,\hX,\hY)}\lrB{I(\hX;\hY)-I(\hX;Z,C_1^t)}$}}
  \end{picture}
 \end{center}
 \caption{Secret Key Agreement from Correlated Source Outputs}
 \label{fig:ska}
\end{figure}

\section{Definitions and Notations}
Throughout this paper, we use the following definitions and notations.

The cardinality of a set $\U$ is denoted by $|\U|$,
$\U^c$ denotes the compliment of $\U$,
and $\U\setminus\V\equiv\U\cap\V^c$.

Column vectors and sequences are denoted in boldface.
Let $A\uu$ denote a value taken by a function $A:\U^n\to\bU$ at
$\uu\equiv(u_1,\ldots,u_n)\in\U^n$,
where $\U^n$ is a domain of the function.
It should be noted that $A$ may be nonlinear.
When $A$ is a linear function expressed by an $l\times n$ matrix,
we assume that $\U$ is a finite field and the range of functions is
defined by $\bU\equiv\U^l$.
It should be noted that this assumption is not essential
for general (nonlinear) functions
because discussion is not changed if $l\log|\U|$ is replaced by $\log|\bU|$.
For a set $\A$ of functions, let $\im \A$ be defined as
\begin{align*}
 \im\A &\equiv \bigcup_{A\in\A}\{A\uu: \uu\in\U^n\}.
\end{align*}
We define sets $\C_A(\cc)$, $\C_{AB}(\cc,\mm)$,
and $\C_{AB\hB}(\cc,\mm,\ww)$ as
\begin{align*}
 \C_A(\cc)
 &\equiv\{\uu: A\uu = \cc\}
 \\
 \C_{AB}(\cc,\mm)
 &\equiv\{\uu: A\uu = \cc, B\uu = \mm\}
 \\
 \C_{AB\hB}(\cc,\mm,\ww)
 &\equiv\{\uu: A\uu = \cc, B\uu = \mm, \hB\uu = \ww\}.
\end{align*}
In the context of linear codes, $\C_A(\cc)$ is call a coset determined
by $\cc$.

Let $p$ and $p'$ be probability distributions
and let $q$ and $q'$ be conditional probability distributions.
Then entropy $H(p)$, conditional entropy $H(q|p)$,
divergence $D(p\|p')$, and conditional divergence $D(q\|q'|p)$
are defined as
\begin{align*}
 H(p)
 &\equiv\sum_{u}p(u)\log\frac 1{p(u)}
 \\
 H(q|p)
 &\equiv\sum_{u,v}q(u|v)p(v)\log\frac 1{q(u|v)}
 \\
 D(p\parallel p')
 &\equiv \sum_{u}p(u) \log\frac{p(u)}{p'(u)}
 \\
 D(q\parallel q' | p)
 &\equiv \sum_{v} p(v)\sum_{u}q(u|v) \log\frac{q(u|v)}{q'(u|v)},
\end{align*}
where we assume the base $2$ of the logarithm.

Let $\mu_{UV}$ be the joint probability distribution of random variables
$U$ and $V$.
Let  $\mu_{U}$ and $\mu_{V}$ be the respective marginal distributions
and $\mu_{U|V}$ be the conditional probability distribution.
Then the entropy $H(U)$, the conditional entropy $H(U|V)$, and the mutual
information $I(U;V)$ of random variables are defined as
\begin{align*}
 H(U)
 &\equiv H(\mu_U)
 \\
 H(U|V)
 &\equiv H(\mu_{U|V}|\mu_{V})
 \\
 I(U;V)
 &\equiv H(U)-H(U|V).
\end{align*}

Let $\nu_{\uu}$ and $\nu_{\uu|\vv}$ be defined as
\begin{align*}
 \nu_{\uu}(u)
 &\equiv
 \frac {|\{1\leq i\leq n : u_{i}=u\}|}n
 \\
 \nu_{\uu|\vv}(u|v)
 &\equiv \frac{\nu_{\uu\vv}(u,v)}{\nu_{\vv}(v)}.
\end{align*}
We call $\nu_{\uu}$ a type of $\uu\in\U^n$
and $\nu_{\uu|\vv}$ a conditional type.
Let $U\equiv\nu_U$ be the type of a sequence
and $U|V\equiv\nu_{U|V}$ be the conditional type of a sequence
given a sequence of type $U$.
Then a set of typical sequences $\T_{U}$
and a set of conditionally typical sequences $\T_{U|V}(\vv)$
are defined as
\begin{align*}
 \T_{U}
 &\equiv
 \lrb{\uu:
 \nu_{\uu}=\nu_U
 }
 \\
 \T_{U|V}(\vv)
 &\equiv
 \lrb{\uu:
 \nu_{\uu|\vv}=\nu_{U|V}
 }.
\end{align*}
The empirical entropy, the empirical conditional entropy, and
empirical mutual information are defined as
\begin{align*}
 H(\uu)
 &\equiv H(\nu_{\uu})
 \\
 H(\uu|\vv)
 &\equiv H(\nu_{\uu|\vv}|\nu_{\vv})
 \\
 I(\uu;\vv)
 &\equiv H(\uu)-H(\uu|\vv).
\end{align*}
A set of typical sequences $\T_{U,\gamma}$
and a set of conditionally typical sequences $\T_{U|V,\gamma}(\vv)$
are defined as
\begin{align*}
 \T_{U,\gamma}
 &\equiv
 \lrb{\uu:
 D(\nu_{\uu}\|\mu_{U})<\gamma
 }
 \\
 \T_{U|V,\gamma}(\vv)
 &\equiv
 \lrb{\uu:
 D(\nu_{\uu|\vv}\|\mu_{U|V}|\nu_{\vv})<\gamma
 }.
\end{align*}
We use several lemmas for the method of the types described in Appendix.

In the construction of codes, we use a minimum-divergence encoder
\begin{align}
 \tg_{AB\hB}(\cc,\mm,\ww)
 &\equiv\arg\min_{\xx'\in\C_{AB\hB}(\cc,\mm,\ww)}D(\nu_{\xx'}\|\mu_X),
 \label{eq:md}
\end{align}
a maximum-likelihood decoder
\begin{align}
 g_A(\cc|\yy)&\equiv\arg\max_{\xx'\in\C_A(\cc)}\mu_{X|Y}(\xx'|\yy),
 \label{eq:ml}
\end{align}
and a minimum-entropy decoder
\begin{align}
 \tg_{A}(\cc|\yy)
 &\equiv\arg\min_{\xx'\in\C_{A}(\cc)}H(\xx'|\yy).
 \label{eq:me}
\end{align}
The minimum-divergence encoder
assigns a message to a typical sequence as close as
possible to the input distribution, where the typical sequence is in the
coset determined by $\cc$.
The time complexity of encoding and decoding is exponential with respect
to the block length by using the exhaustive search.
It should be noted that
the linear programming method introduced \cite{FWK05}
and \cite{CME05} can be applied to these encoder and decoders
by assuming that $\X=\Y=\GF(2)$ and $A$, $B$, and $\hB$ are linear
functions,
where the linear programming method may not find the integral solution.
Details are described in Section~\ref{sec:binary}.
It should be noted here that
we do not discuss the performance of the linear programming methods
in this paper.

We define $\chi(\cdot)$ as
\begin{align*}
 \chi(a = b)
 &\equiv
 \begin{cases}
  1,&\text{if}\ a = b
  \\
  0,&\text{if}\ a\neq b
 \end{cases}
 \\
 \chi(a \neq b)
 &\equiv
 \begin{cases}
  1,&\text{if}\ a \neq b
  \\
  0,&\text{if}\ a = b.
 \end{cases}
\end{align*}

Finally, we use the following definitions in Appendix.
For $\gamma,\gamma'>0$, we define
\begin{align}
 \lambda_{\U}
 &\equiv \frac{|\U|\log[n+1]}n
 \label{eq:lambda}
 \\
 \zeta_{\U}(\gamma)
 &\equiv
 \gamma-\sqrt{2\gamma}\log\frac{\sqrt{2\gamma}}{|\U|}
 \label{eq:zeta}
 \\
 \zeta_{\U|\V}(\gamma'|\gamma)
 &\equiv
 \gamma'-\sqrt{2\gamma'}\log\frac{\sqrt{2\gamma'}}{|\U||\V|}
 +\sqrt{2\gamma}\log|\U|
 \label{eq:zetac}
 \\
 \eta_{\U}(\gamma)
 &\equiv
 -\sqrt{2\gamma}\log\frac{\sqrt{2\gamma}}{|\U|}
 +\frac{|\U|\log[n+1]}n
 \label{eq:def-eta}
\end{align}
It should be noted here that
the product set $\U\times\V$ is denoted by $\U\V$
when it appears in the subscript of these functions.
We define $h(\theta)$ for $0\leq \theta\leq 1$ as
\begin{equation}
 h(\theta)\equiv -\theta\log\theta-[1-\theta]\log(1-\theta).
  \label{eq:def-h}
\end{equation}
We define $|\cdot|^+$ as
\begin{equation}
 |\theta|^+
  \equiv
  \begin{cases}
   \theta,&\text{if}\ \theta>0,\\
   0,&\text{if}\ \theta\leq 0.
  \end{cases}
  \label{eq:def-plus}
\end{equation}

\section{{$(\aalpha,\bbeta)$-hash Property}}

In the following, we review the notion of the hash property for an
ensemble of functions, which is introduced in \cite{HASH}.
This provides a sufficient condition
for coding theorems, where the linearity of functions is not assumed.
We prove coding theorems based on this notion.

\begin{df}[\cite{HASH}]
 Let $\bcA\equiv\{\A_n\}_{n=1}^{\infty}$
 be a sequence of sets such that
 $\A_{n}$ is a set of functions $A:\U^n\to\bU_{\A_{n}}$
 satisfying
 \begin{equation}
  \limn \frac{\log\frac{|\bU_{\A_n}|}{|\im\A_n|}}n=0.
   \tag{H1}
   \label{eq:imA}
 \end{equation}
 For a probability distribution $p_{A,n}$ on $\A_n$, we
 call a sequence $(\bcA,\bp_{A})\equiv\{(\A_n,p_{A,n})\}_{n=1}^{\infty}$
 an {\em ensemble}.
 Then, $(\bcA,\bp_{A})$ has an
 $(\aalpha_{A},\bbeta_{A})$-{\em hash property} if
 there are two sequences
 $\aalpha_{A}\equiv\{\alpha_{A}(n)\}_{n=1}^{\infty}$ and
 $\bbeta_{A}\equiv\{\beta_{A}(n)\}_{n=1}^{\infty}$
 such that
 \begin{align}
  &\limn \alpha_{A}(n)=1
  \tag{H2}
  \label{eq:alpha}
  \\
  &\limn \beta_{A}(n)=0
  \tag{H3}
  \label{eq:beta}
 \end{align}
 and
 \begin{equation}
  \sum_{\substack{
   \uu\in\T
  \\
  \uu'\in\T'
  }}
  p_{A,n}\lrsb{\lrb{A: A\uu = A\uu'}}
  \leq
  |\T\cap\T'|
  +\frac{|\T||\T'|\alpha_{A}(n)}{|\im\A_n|}
  +\min\{|\T|,|\T'|\}\beta_{A}(n)
  \tag{H4}
  \label{eq:hash}
 \end{equation}
 for any $\T,\T'\subset\U^n$.
 Throughout this paper,
 we omit dependence of $\A$, $p_{A}$, $\alpha_{A}$ and $\beta_{A}$ on $n$.
\end{df}

In the following, we present two examples of ensembles that have a hash
property.

\noindent{\bf Example 1:}
In this example, we consider
a universal class of hash functions introduced in \cite{CW}.
A set $\A$ of functions $A:\U^n\to\bU_{\A}$ is called
a {\em universal class of hash functions} if 
\[
 |\lrb{A: A\uu=A\uu'}|\leq \frac{|\A|}{|\bU_{\A}|}
\]
for any $\uu\neq\uu'$.
For example, the set of all functions on $\U^n$
and the set of all linear functions $A:\U^n\to\U^{\lA}$
are classes of universal hash functions (see \cite{CW}).
When $\A$ is a universal class of hash functions
and  $p_A$ is the uniform distribution on $\A$,
we have
\begin{align*}
 \sum_{\substack{
 \uu\in\T
 \\
 \uu'\in\T'
 }}
 p_A\lrsb{\lrb{A: A\uu=A\uu'}}
 \leq
 |\T\cap\T'|+\frac{|\T||\T'|}{|\im\A|}.
\end{align*}
This implies that $(\bcA,\bp_A)$ has a $(\one,\zero)$-hash property,
where $\one(n)\equiv 1$ and $\zero(n)\equiv 0$ for every $n$.

\noindent{\bf Example 2:}
In this example, we consider a set of linear functions
$A:\U^n\to\U^{\lA}$.
It was discussed in the above example that
the uniform distribution on
the set of all linear functions has a $(\one,\zero)$-hash property.
In the following, we introduce the ensemble of
$q$-ary sparse matrices proposed in \cite{HASH}.
Let $\U\equiv\GFq$ and $\lA\equiv nR$ for given $0<R<1$.
We generate an $\lA\times n$ matrix $A$ with the following procedure,
where at most $\tau$ random nonzero elements are introduced
in every row.
\begin{enumerate}
\item Start from an all-zero matrix.
\item For each $i\in\{1,\ldots,n\}$, repeat the following
      procedure $\tau$ times:
      \begin{enumerate}
       \item Choose $(j,a)\in\{1,\ldots,\lA\}\times\GFq$ uniformly at random.
       \item Add  $a$ to the  $(j,i)$ component of $A$.
      \end{enumerate}
\end{enumerate}
Let $(\bcA,\bp_{A})$ be an ensemble corresponding to the above procedure,
where $\tau=O(\log \lA)$ is even.
It is proved in \cite[Theorem 2]{HASH}
that there is $(\aalpha_A,\bbeta_A)$ such that
$(\bcA,\bp_A)$ has an $(\aalpha_A,\bbeta_A)$-hash property.

In the following,
let $\A$ be a set of functions $A:\U^n\to\bU_{\A}$
and assume that
$p_C$ is the uniform distribution on $\im\A$,
and random variables $A$ and $C$ are mutually independent, that is,
\begin{align*}
 p_C(\cc)
 &=
 \begin{cases}
  \frac 1{|\im\A|},&\text{if}\ \cc\in\im\A
  \\
  0,&\text{if}\ \cc\in\bU_{\A}\setminus\im\A
 \end{cases}
 \\
 p_{AC}(A,\cc)
 &=p_A(A)p_C(\cc)
\end{align*}
for any $A$ and $\cc$.
We have the following lemmas,
where it is not necessary to assume the linearity of functions.

\begin{lem}[{\cite[Lemma 1]{HASH}}]
 \label{lem:Anotempty}
 If $(\A,p_A)$ satisfies (\ref{eq:hash}), then
 \[
 p_A\lrsb{\lrb{
 A: \lrB{\G\setminus\{\uu\}}\cap\C_A(A\uu)\neq \emptyset
 }}
 \leq 
 \frac{|\G|\alpha_A}{|\im\A|} + \beta_A
 \]
 for all $\G\subset\U^n$ and all $\uu\in\U^n$.
\end{lem}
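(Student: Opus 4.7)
The plan is to rewrite the event on the left-hand side as a union of pairwise collision events and then apply the hash inequality (\ref{eq:hash}) with suitably chosen sets $\T$ and $\T'$. Observe that $[\G\setminus\{\uu\}]\cap\C_A(A\uu)\neq\emptyset$ holds if and only if there exists $\uu'\in\G\setminus\{\uu\}$ with $A\uu'=A\uu$. Hence the event in question is exactly $\bigcup_{\uu'\in\G\setminus\{\uu\}}\{A:A\uu=A\uu'\}$.

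Next I would apply the union bound to obtain
\begin{align*}
 p_A\lrsb{\lrb{A:[\G\setminus\{\uu\}]\cap\C_A(A\uu)\neq\emptyset}}
 \leq
 \sum_{\uu'\in\G\setminus\{\uu\}} p_A\lrsb{\lrb{A:A\uu=A\uu'}}.
\end{align*}
The right-hand side is precisely the double sum appearing in (\ref{eq:hash}) with $\T\equiv\{\uu\}$ and $\T'\equiv\G\setminus\{\uu\}$. Since $\uu\notin\G\setminus\{\uu\}$ we have $\T\cap\T'=\emptyset$, so $|\T\cap\T'|=0$; also $|\T|=1$ and $|\T'|\leq|\G|$, which gives the middle term $|\T||\T'|\alpha_A/|\im\A|\leq|\G|\alpha_A/|\im\A|$, while $\min\{|\T|,|\T'|\}\leq 1$ yields the bound $\beta_A$ on the last term. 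Putting these together delivers the claim.

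The argument is short and there is no real obstacle: the only subtlety is recognising that the quantifier ``there exists $\uu'\in\G\setminus\{\uu\}$'' translates into a union over singletons whose total probability is captured by (\ref{eq:hash}) when one side of the pair is frozen to the single element $\uu$. No linearity of $A$ is used, which matches the remark preceding the lemma.
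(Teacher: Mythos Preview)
Your argument is correct. Note, however, that the present paper does not supply its own proof of this lemma: it is quoted verbatim from \cite[Lemma~1]{HASH}. The approach you give---translate the nonemptiness event into the union $\bigcup_{\uu'\in\G\setminus\{\uu\}}\{A:A\uu=A\uu'\}$, apply the union bound, and then invoke (\ref{eq:hash}) with $\T=\{\uu\}$ and $\T'=\G\setminus\{\uu\}$---is exactly the natural derivation and coincides with the proof in the cited reference.
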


\begin{lem}[{\cite[Lemma 2]{HASH}}]
\label{lem:saturating}
 If $(\A,p_A)$ satisfies (\ref{eq:hash}), then
 \begin{equation*}
  p_{AC}\lrsb{\lrb{(A,\cc):
  \T\cap\C_{A}(\cc)=\emptyset
  }}
  \leq
  \alpha_{A}-1+\frac{|\im\A|\lrB{\beta_{A}+1}}{|\T|}
 \end{equation*}
 for all $\T\neq\emptyset$.
\end{lem}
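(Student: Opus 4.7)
The plan is a standard second-moment (Paley--Zygmund/Chebyshev) argument applied to the count $N(A,\cc)\equiv|\T\cap\C_A(\cc)|$, because the event in question is exactly $\{N(A,\cc)=0\}$. The strategy is: compute $E[N]$ exactly, bound $E[N^2]$ via the hash property (\ref{eq:hash}), then use Chebyshev's inequality in the form $\Pr[N=0]\leq\mathrm{Var}(N)/E[N]^2 = E[N^2]/E[N]^2-1$, and finally read off the stated bound.

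The first step is the mean computation. For any fixed $\uu\in\U^n$ and any $A\in\A$, one has $A\uu\in\im\A$ by definition of $\im\A$, so under the product law $p_{AC}=p_A\cdot p_C$ with $p_C$ uniform on $\im\A$ we get $p_{AC}(\{(A,\cc):A\uu=\cc\})=1/|\im\A|$. Summing over $\uu\in\T$ yields
\begin{equation*}
 E[N]=\sum_{\uu\in\T}p_{AC}\lrsb{\lrb{(A,\cc):A\uu=\cc}}=\frac{|\T|}{|\im\A|}.
\end{equation*}

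For the second moment, I would decompose the joint event $\{A\uu=\cc,\ A\uu'=\cc\}$ into the intersection $\{A\uu=A\uu'\}\cap\{A\uu=\cc\}$, then integrate out $\cc$ first to get a factor $1/|\im\A|$, leaving
\begin{equation*}
 E[N^2]=\frac1{|\im\A|}\sum_{\uu,\uu'\in\T}p_A\lrsb{\lrb{A:A\uu=A\uu'}}.
\end{equation*}
Applying (\ref{eq:hash}) with $\T'=\T$ bounds the inner sum by $|\T|+|\T|^2\alpha_A/|\im\A|+|\T|\beta_A$, so
\begin{equation*}
 E[N^2]\leq\frac{|\T|}{|\im\A|}+\frac{|\T|^2\alpha_A}{|\im\A|^2}+\frac{|\T|\beta_A}{|\im\A|}.
\end{equation*}

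Finally, since $\{N=0\}\subset\{|N-E[N]|\geq E[N]\}$, Chebyshev gives $\Pr[N=0]\leq E[N^2]/E[N]^2-1$. Dividing the bound on $E[N^2]$ by $E[N]^2=|\T|^2/|\im\A|^2$ produces $\alpha_A+|\im\A|(1+\beta_A)/|\T|$, and subtracting $1$ yields exactly $\alpha_A-1+|\im\A|[\beta_A+1]/|\T|$, as required. There is no real obstacle here; the only point that needs a moment of care is justifying $p_{AC}(A\uu=\cc)=1/|\im\A|$ uniformly in $\uu$ and $A$ (which uses the definition of $\im\A$), and noting that the computation is meaningful only when $|\T|\neq0$, which is the hypothesis of the lemma.
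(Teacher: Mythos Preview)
Your proof is correct and follows the standard second-moment/Chebyshev argument that is the natural route to this bound. The paper does not actually give a proof here---it cites the result from \cite[Lemma 2]{HASH}---but the argument there is precisely this one: compute $E[N]=|\T|/|\im\A|$ via Lemma~\ref{lem:E}, bound $E[N^2]$ using (\ref{eq:hash}) with $\T'=\T$, and apply $\Pr[N=0]\le E[N^2]/E[N]^2-1$.
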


Finally, we consider the independent joint ensemble $p_{AB}$
of linear matrices.
The following lemma asserts that it is sufficient to
assume the hash property of $(\bcA,\bp_A)$ and 
$(\bcB,\bp_B)$ to satisfy the hash property of
$(\bcA\times\bcB,\bp_{AB})$ when they are ensembles of linear matrices.
\begin{lem}[{\cite[Lemma 7]{HASH}}]
 \label{lem:hash-linApB}
 For two ensembles $(\bcA,\bp_A)$ and
 $(\bcB,\bp_B)$, of $\lA\times n$ and $\lB\times n$ linear matrices,
 respectively,
 let $p_{AB}$ be the joint distribution defined as
 \[
 p_{AB}(A,B)\equiv p_A(A)p_B(B).
 \]
 Then $(\bcA\times\bcB,\bp_{AB})$ has an
 $(\aalpha_{AB},\bbeta_{AB})$-hash property
 for the ensemble of functions $A\oplus B:\U^n\to\U^{\lA+\lB}$ defined as
 \[
 A\oplus B(\uu)\equiv(A\uu,B\uu),
 \]
 where
 \begin{align*}
  \alpha_{AB}(n)
  &=\alpha_A(n)\alpha_B(n)
  \\
  \beta_{AB}(n)
  &=\beta_A(n)+\beta_B(n).
 \end{align*}
\end{lem}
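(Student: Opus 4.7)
The plan is to verify conditions (H1)--(H4) for the joint ensemble $(\bcA\times\bcB,\bp_{AB})$ of stacked linear maps $A\oplus B:\U^n\to\U^{\lA+\lB}$ with $\alpha_{AB}=\alpha_A\alpha_B$ and $\beta_{AB}=\beta_A+\beta_B$. Conditions (H2) and (H3) follow at once from $\alpha_A,\alpha_B\to 1$ and $\beta_A,\beta_B\to 0$. For (H1), I would use $\im(\bcA\oplus\bcB)\subseteq\im\bcA\times\im\bcB$ (projecting the joint image onto either factor lands inside the corresponding single image), which gives
\[
\log\frac{|\U|^{\lA+\lB}}{|\im(\bcA\oplus\bcB)|}\leq\log\frac{|\U|^{\lA}}{|\im\bcA|}+\log\frac{|\U|^{\lB}}{|\im\bcB|};
\]
dividing by $n$, the joint (H1) follows from the two individual (H1)'s.

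The substance is (H4). Independence of $A$ and $B$ factors the summand:
\[
p_{AB}(\{(A,B):(A\oplus B)\uu=(A\oplus B)\uu'\})=p_A(\{A:A\uu=A\uu'\})\,p_B(\{B:B\uu=B\uu'\}).
\]
I apply (\ref{eq:hash}) for $\bcA$ and $\bcB$ in sequence. For each realization of $B$ and each $c\in\U^{\lB}$, let $\T_c\equiv\T\cap B^{-1}(c)$ and $\T'_c\equiv\T'\cap B^{-1}(c)$; the pairs $(\uu,\uu')\in\T\times\T'$ with $B\uu=B\uu'$ partition as $\bigsqcup_c\T_c\times\T'_c$. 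Applying the $\bcA$-hash inequality to each $(\T_c,\T'_c)$ and summing over $c$, using $\sum_c|\T_c\cap\T'_c|=|\T\cap\T'|$ and $\sum_c\min\{|\T_c|,|\T'_c|\}\leq\min\{|\T|,|\T'|\}$, yields, for every $B$,
\[
\sum_{\uu\in\T,\uu'\in\T'}\chi(B\uu=B\uu')\,p_A(\{A:A\uu=A\uu'\})\leq|\T\cap\T'|+\frac{\alpha_A}{|\im\bcA|}\sum_{\uu,\uu'}\chi(B\uu=B\uu')+\beta_A\min\{|\T|,|\T'|\}.
\]
Taking expectation in $B$ turns the inner indicator into $p_B(\{B:B\uu=B\uu'\})$, to which the $\bcB$-hash inequality applies; the main term produced is $|\T||\T'|\alpha_A\alpha_B/(|\im\bcA|\cdot|\im\bcB|)$, which is bounded above by $|\T||\T'|\alpha_{AB}/|\im(\bcA\oplus\bcB)|$ by the inclusion established above.

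The main obstacle I expect is the bookkeeping at the end: the two nested applications of (\ref{eq:hash}) produce extra residuals, notably $\alpha_A|\T\cap\T'|/|\im\bcA|$ from the inner diagonal and $\alpha_A\beta_B\min\{|\T|,|\T'|\}/|\im\bcA|$ from the cross term, that are not literally in the target form. Since (H1) for $\bcA$ forces $|\im\bcA|$ to grow exponentially in $n$ while $\alpha_A\to 1$, these residuals are small enough to be absorbed into the $|\T\cap\T'|$ and $(\beta_A+\beta_B)\min\{|\T|,|\T'|\}$ terms, so that the final bound retains exactly the form of (\ref{eq:hash}) with $\alpha_{AB}=\alpha_A\alpha_B$ and $\beta_{AB}=\beta_A+\beta_B$.
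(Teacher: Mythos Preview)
The paper does not prove this lemma; it is quoted from \cite[Lemma~7]{HASH}. So there is no in-paper proof to compare your attempt against. That said, two issues in your argument deserve attention.

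\textbf{The (H1) step is wrong.} The inclusion $\im(\bcA\oplus\bcB)\subseteq\im\bcA\times\im\bcB$ gives $|\im(\bcA\oplus\bcB)|\leq|\im\bcA|\,|\im\bcB|$, hence
\[
\log\frac{|\U|^{\lA+\lB}}{|\im(\bcA\oplus\bcB)|}\;\geq\;\log\frac{|\U|^{\lA}}{|\im\bcA|}+\log\frac{|\U|^{\lB}}{|\im\bcB|},
\]
the reverse of what you wrote. Condition (H1) asks for an \emph{upper} bound on the left side, i.e.\ a \emph{lower} bound on $|\im(\bcA\oplus\bcB)|$; the set inclusion you use gives only an upper bound on that image and therefore says nothing about (H1). (The same inclusion is, correctly, what you need later in (H4) to pass from $1/(|\im\bcA|\,|\im\bcB|)$ to $1/|\im(\bcA\oplus\bcB)|$.) Indeed, (H1) for the joint ensemble does not follow from the individual (H1)'s for arbitrary function ensembles: take $\lA=\lB=n$ and $\A=\B=\{I\}$, where each factor trivially satisfies (H1) but the stacked map has image of size $|\U|^n$ inside $|\U|^{2n}$. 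This is exactly where the linearity hypothesis---which you never invoke---must enter: for linear $A,B$ one argues about the rank of the stacked matrix to get the needed lower bound on the joint image. Your outline is silent on this and the stated argument fails.

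\textbf{The (H4) bookkeeping does not deliver the exact constants.} Your two-stage application of (\ref{eq:hash}) is the right idea and yields
\[
|\T\cap\T'|\Bigl(1+\tfrac{\alpha_A}{|\im\bcA|}\Bigr)+\frac{|\T||\T'|\alpha_A\alpha_B}{|\im\bcA|\,|\im\bcB|}+\min\{|\T|,|\T'|\}\Bigl(\beta_A+\tfrac{\alpha_A\beta_B}{|\im\bcA|}\Bigr).
\]
You cannot ``absorb'' the residual $\alpha_A|\T\cap\T'|/|\im\bcA|$ into the $|\T\cap\T'|$ term without changing its coefficient from $1$; pushing it into the $\min$-term instead (via $|\T\cap\T'|\leq\min\{|\T|,|\T'|\}$) gives $\beta_{AB}=\beta_A+\alpha_A(1+\beta_B)/|\im\bcA|$, not $\beta_A+\beta_B$. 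This still satisfies (H3), so you do obtain \emph{some} hash property for the joint ensemble, but not with the specific $\beta_{AB}=\beta_A+\beta_B$ stated in the lemma. If you want the constants as stated, a sharper argument (again presumably exploiting linearity, as in the cited reference) is needed.
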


\section{Wiretap Channel Coding}

In this section we consider 
the wiretap channel coding problem illustrated in
Fig.\ \ref{fig:wiretap}, where no common message and perfect secrecy are
assumed.
A wiretap channel is characterized by
the conditional probability distribution $\mu_{YZ|X}$,
where $X$, $Y$, and $Z$ are random variables corresponding to
the channel input of a sender, the channel output of a legitimate receiver
and the channel output of an eavesdropper.
Then the capacity\footnote{
It is stated in \cite{ITW09} that
the auxiliary random variable can be eliminated
by applying \cite[Theorem 7]{Hayashi06} and \cite[Theorem 3]{KS05}.
In fact, because of the authors misunderstanding about the result of
\cite[Theorem 3]{KS05}, the statement of \cite{ITW09} may not be true.
They wish to thank Prof. Shamai (Shitz),
Prof. Oohama, and Prof. Koga, for helpful discussions.
}
of this channel is derived in \cite[Eq.~(11)]{CK78} as
\begin{equation}
 \Capacity\equiv\max_{\substack{
  \hX,X:\\
 \hX\markov X\markov YZ
  }}
  \lrB{I(\hX;Y)-I(\hX;Z)},
  \label{eq:wiretap-capacity}
\end{equation}
where the maximum is taken over all probability distribution
$\mu_{\hX X}$ and the joint distribution $\mu_{\hX XYZ}$
is given by
\begin{equation}
 \mu_{\hX XYZ}(\hx,x,y,z)
  \equiv\mu_{YZ|X}(y,z|x)\mu_{\hX X}(\hx,x).
  \label{eq:markov-wiretap}
\end{equation}
If a channel between $X$ and $Y$ is more capable than a channel
between $X$ and $Z$, that is,
\[
 I(X;Y)\geq I(X;Z)
\]
is satisfied for every input $X$,
then the capacity of this channel is simplified as
\begin{equation}
 \Capacity\equiv\max_{X}\lrB{I(X;Y)-I(X;Z)},
  \label{eq:wiretap-capacity-capable}
\end{equation}
where the maximum is taken over all random variables $X$
and the joint distribution of random variable $(X,Y,Z)$
is given by
\begin{align}
 \mu_{XYZ}(x,y,z)
 \equiv \mu_{YZ|X}(y,z|x)\mu_{X}(x).
 \label{eq:markov-wiretap-capable}
\end{align}
This capacity formula is derived in \cite{W75b}
for a degraded broadcast channel,
extended in \cite{CK78} to the case where a channel between
$X$ and $Y$ is more capable than a channel between $X$ and $Z$.

\begin{figure}[t]
 \begin{center}
  \unitlength 0.4mm
  \begin{picture}(156,99)(0,0)
   \put(82,89){\makebox(0,0){Encoder}}
   \put(54,61){\makebox(0,0){$\cc$}}
   \put(59,61){\vector(1,0){10}}
   \put(24,43){\makebox(0,0){$\mm$}}
   \put(34,43){\vector(1,0){35}}
   \put(24,25){\makebox(0,0){$\ww$}}
   \put(34,25){\vector(1,0){35}}
   \put(69,18){\framebox(26,50){$g_{AB\hB}$}}
   \put(95,43){\vector(1,0){35}}
   \put(140,43){\makebox(0,0){$\xx$}}
   \put(44,5){\framebox(76,76){}}
  \end{picture}
  \\
  \begin{picture}(156,70)(0,0)
   \put(82,60){\makebox(0,0){Decoder}}
   \put(30,35){\makebox(0,0){$\cc$}}
   \put(35,35){\vector(1,0){10}}
   \put(0,17){\makebox(0,0){$\yy$}}
   \put(10,17){\vector(1,0){35}}
   \put(45,10){\framebox(18,32){$g_A$}}
   \put(63,26){\vector(1,0){10}}
   \put(83,26){\makebox(0,0){$\xx$}}
   \put(93,26){\vector(1,0){10}}
   \put(103,19){\framebox(30,14){$B$}}
   \put(133,26){\vector(1,0){20}}
   \put(167,26){\makebox(0,0){$\mm$}}
   \put(20,0){\framebox(124,52){}}
  \end{picture}
 \end{center}
\caption{Construction of Wiretap Channel Code}
\label{fig:wiretap-code}
\end{figure}
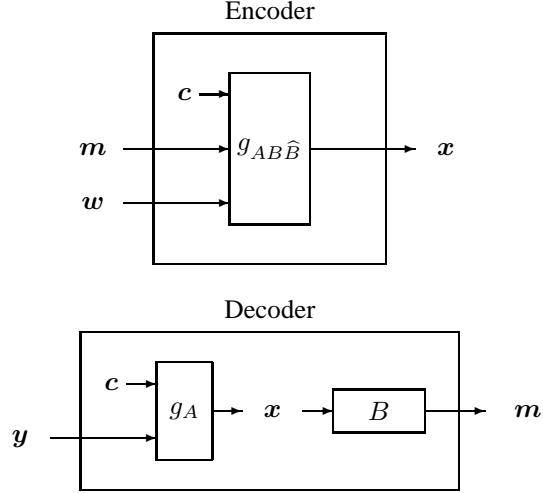

In the following, we assume that $\mu_{X}$ and $\mu_{YZ|X}$ are given,
where it is not necessary to assume that
a channel is degraded or a channel between $X$ and $Y$ is more capable
than that between $X$ and $Z$.
We fix functions
\begin{align*}
 A&:\X^n\to\X^{\lA}
 \\
 B&:\X^n\to\X^{\lB}
 \\
 \hB&:\X^n\to\X^{\lhB}
\end{align*}
and a vector $\cc\in\X^{\lA}$
available for an encoder, a decoder, and an eavesdropper, where
\begin{align*}
 \lA
 &\equiv \frac{n[H(X|Y)+\eA]}{\log|\X|}
 \\
 \lB
 &\equiv \frac{n[H(X|Z)-H(X|Y)]}{\log|\X|}
 \\
 &=\frac{n[I(X;Y)-I(X;Z)]}{\log|\X|}
 \\
 \lhB
 &\equiv \frac{n[I(X;Z)-\ehB]}{\log|\X|}.
\end{align*}

We construct a stochastic encoder and assume that
the encoder uses a random sequence $\ww\in\X^{\lhB}$,
which is generated uniformly at random and independently
of the channel and the message $\mm\in\X^{\lB}$.
We define the encoder and the decoder
\begin{align*}
 \Encoder&:\X^{\lB}\times\X^{\lhB}\to\X^n
 \\
 \Decoder&:\Y^n\to\X^{\lB}
\end{align*}
as
\begin{align*}
 \Encoder(\mm,\ww)
 &\equiv g_{AB\hB}(\cc,\mm,\ww)
 \\
 \Decoder(\yy)
 &\equiv Bg_A(\cc|\yy),
\end{align*}
where
$g_{AB\hB}(\cc,\mm,\ww)$ and $g_{A}(\cc|\yy)$ are
defined by (\ref{eq:md}) and (\ref{eq:ml}), respectively.
It is noted that $g_{AB\hB}$ is a deterministic map.

Let $M$ and $W$ be random variables corresponding to
$\mm$ and $\ww$, respectively,
where the probability distributions $p_M$ and $p_W$ are given by
\begin{align}
 p_M(\mm)
 &\equiv
 \begin{cases}
  \frac 1{|\im\B|}
  &\text{if}\ \mm\in\im\B
  \\
  0,
  &\text{if}\ \mm\notin\im\B
 \end{cases}
 \label{eq:pM}
 \\
 p_W(\ww)
 &\equiv
 \begin{cases}
  \frac 1{|\im\hcB|}
  &\text{if}\ \ww\in\im\hcB
  \\
  0,
  &\text{if}\ \ww\notin\im\hcB
 \end{cases}
 \label{eq:pW}
\end{align}
and the joint distribution $p_{MWYZ}$ of the messages,
and the channel outputs is
given by
\begin{equation*}
 p_{MWYZ}(\mm,\ww,\yy,\zz)
 \equiv
 \mu_{YZ|X}(\yy,\zz|\Encoder(\mm,\ww))p_M(\mm)p_W(\ww).
\end{equation*}

The rate of this code is given by
\begin{align*}
 \Rate
 &\equiv\frac{\log|\im\B|}n
 \\
 &=I(X;Y)-I(X;Z)-\frac{\log\frac{|\X|^{l_B}}{|\im\B|}}n
\end{align*}
which converges to $I(X;Y)-I(X;Z)$ as $n$ goes to infinity
by assuming the condition (\ref{eq:imA}) for an ensemble $(\B,p_B)$.
The decoding error probability $\Error_{Y|X}(A,B,\hB,\cc)$
is given by
\begin{equation}
 \Error_{Y|X}(A,B,\hB,\cc)
 \equiv
 \sum_{\mm,\ww,\yy}\mu_{Y|X}(\yy|\Encoder(\mm,\ww))p_M(\mm)p_W(\ww)
 \chi(\Decoder(\yy)\neq\mm).
 \label{eq:def-error-wiretap}
\end{equation}
The information leakage $\Security_{Z|X}(A,B,\hB,\cc)$ is given by
\begin{equation}
 \Security_{Z|X}(A,B,\hB,\cc)\equiv\frac {I(M;Z^n)}n.
  \label{eq:def-leakage-wiretap}
\end{equation}
It should be noted that
the vector $\cc$ is considered to be part of a deterministic map,
which is known by the eavesdropper.

We have the following theorem.
It should be noted that
alphabets $\X$ and $\Y$ is allowed to be non-binary,
and the channel is allowed to be asymmetric, non-degraded.
\begin{thm}
 \label{thm:wiretap}
 Let $\mu_{YZ|X}$ be the conditional probability distribution
 of a stationary memoryless channel.
 For given $\lA$ and $\lB$, assume that ensembles
 $(\bcA,\bp_A)$, $(\bcA\times\bcB,\bp_{AB})$,
 and $(\bcA\times\bcB\times\bchB,\bp_{AB\hB})$
 have a hash property.
 Then for any $\delta>0$ and all sufficiently large $n$,
 there are $\ehB>\eA>0$,
 functions (sparse matrices) $A\in\A$, $B\in\B$, $\hB\in\hcB$,
 and a vector $\cc\in\im\A$ such that
 \begin{gather}
  \Rate > I(X;Y)-I(X;Z)-\delta
  \label{eq:rate-wiretap}
  \\
  \Error_{Y|X}(A,B,\hB,\cc)<\delta
  \label{eq:error-wiretap}
  \\
  \Security_{Z|X}(A,B,\hB,\cc)<\delta.
  \label{eq:leakagebound-wiretap}
 \end{gather}
 By assuming that the channel between $X$ and $Y$ is more capable
 than that between $X$ and $Z$,
 $\mu_X$ attains the secrecy capacity defined by
 (\ref{eq:wiretap-capacity-capable}), and $\delta\to 0$,
 the rate of the proposed code is close to the secrecy capacity.

 For a general wiretap channel $\mu_{YZ|X}$,
 let $F:\hcX\to\X$ be a channel (non-deterministic map)  corresponding
 to a conditional probability distribution $\mu_{X|\hX}$ and assume that
 \[
  \mu_{\hX X}(\hx,x)\equiv\mu_{X|\hX}(x|\hx)\mu_{\hX}(\hx)
 \]
 achieves the maximum of the right hand side of
 (\ref{eq:wiretap-capacity}).
 By using a proposed code for the channel $\mu_{YZ|\hX}$
 defined as
 \[
  \mu_{YZ|\hX}(y,z|\hx)\equiv\sum_{x}\mu_{YZ|X}(y,z|x)\mu_{X|\hX}(x|\hx)
 \]
 with the input distribution $\mu_{\hX}$, we construct a code
 for the channel $\mu_{YZ|X}$ as
 \begin{align*}
  \Encoder(\mm,\ww)
  &\equiv \FF(g_{AB\hB}(\cc,\mm,\ww))
  \\
  \Decoder(\yy)
  &\equiv Bg_A(\cc|\yy),
 \end{align*}
 where $g_{AB\hB}(\cc,\mm,\ww)$
 outputs the channel input  $\hxx\equiv(\hx_1,\ldots,\hx_n)\in\hcX^n$ of
 outer channel  $\mu_{YZ|\hX}$,
 $\FF$ is defined as
 \[
  \FF(\hxx)\equiv(F(\hx_1),\ldots,F(\hx_n)),
 \]
 and $g_A(\cc|\yy)$ reproduces $\hxx$ with small error probability.
 Then the rate of this code is
 close to the secrecy capacity
 of the channel $\mu_{YZ|X}$ defined by (\ref{eq:wiretap-capacity}).
\end{thm}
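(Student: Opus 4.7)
The proof proceeds by showing that each of the three displayed bounds (\ref{eq:rate-wiretap})--(\ref{eq:leakagebound-wiretap}) holds on average over the random choice of $A,B,\hB$ drawn independently from their respective ensembles and $\cc$ uniformly from $\im\A$; a union-bound/averaging argument then extracts a single quadruple $(A,B,\hB,\cc)$ satisfying all three simultaneously. The rate bound is immediate from (\ref{eq:imA}) applied to $(\bcB,\bp_B)$: since $\log(|\X|^{\lB}/|\im\B|)=o(n)$ and $\lB\log|\X|=n[I(X;Y)-I(X;Z)]$, the definition of $\Rate$ yields (\ref{eq:rate-wiretap}) for any $\delta>0$ once $n$ is large.

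For reliability, the analysis follows the standard template for minimum-divergence encoding paired with ML decoding. I would first apply Lemma~\ref{lem:saturating} to the joint ensemble $(\bcA\times\bcB\times\bchB,\bp_{AB\hB})$ with $\T\equiv\T_{X,\gamma}$: because $\lA+\lB+\lhB$ falls short of $nH(X)/\log|\X|$ whenever $\ehB>\eA$, the coset $\C_{AB\hB}(\cc,\mm,\ww)$ contains a $\gamma$-typical sequence for almost every $(A,B,\hB,\cc,\mm,\ww)$, so $\xx\equiv g_{AB\hB}(\cc,\mm,\ww)$ is typical and is jointly typical with $\yy$ generated by $\mu_{Y|X}^n(\cdot|\xx)$ with probability tending to $1$. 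Failure of $g_A(\cc|\yy)$ on such a pair forces the existence of a distinct competitor $\xx'\in\C_A(\cc)$ with $\mu_{X|Y}(\xx'|\yy)\geq\mu_{X|Y}(\xx|\yy)$; by routine type counting, the set of such competitors has cardinality at most $2^{n[H(X|Y)+\eA/2]}$ for typical $(\xx,\yy)$. Applying Lemma~\ref{lem:Anotempty} to this competitor set with ensemble $(\bcA,\bp_A)$, whose rate satisfies $\lA\log|\X|=n[H(X|Y)+\eA]$, bounds the expected decoding error by $2^{-n\eA/2}\alpha_A+\beta_A\to 0$.

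The security bound is the main obstacle. I would reduce it to a channel-resolvability-style estimate by bounding
\[
 \mathbb{E}_{A,B,\hB,C}\!\left[\frac{I(M;Z^n)}{n}\right]\leq \mathbb{E}_{A,B,\hB,C}\!\left[\frac{1}{n}\sum_{\mm} p_M(\mm)\,D\!\left(p_{Z^n|M=\mm}\,\big\|\,\mu_Z^n\right)\right],
\]
and showing the right-hand side is $o(1)$. The plan is to exploit the joint hash property of $(\bcA\times\bcB\times\bchB,\bp_{AB\hB})$ together with Lemma~\ref{lem:saturating} applied separately to each conditional typical set $\T_{X|Z,\gamma}(\zz)$ for typical $\zz$: this yields a balanced-coloring statement asserting that, for most $(\cc,\mm)$ and typical $\zz$, the sub-coset $\{g_{AB\hB}(\cc,\mm,\ww):\ww\in\im\hcB\}$ intersects $\T_{X|Z,\gamma}(\zz)$ in close to its proportional share, so that $p_{Z^n|M=\mm}(\zz)\approx\mu_Z^n(\zz)$. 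Converting this pointwise closeness into a divergence bound, then absorbing atypical-$\zz$ contributions through log-sum-type inequalities, delivers the required $o(1)$ bound. The delicate point is that the slack $\ehB-\eA>0$ must be simultaneously large enough to drive the resolvability estimate and small enough (relative to $\lA$) for reliability to survive; choosing $0<\eA<\ehB$ both sufficiently small depending on $\delta$ makes the three bounds consistent.

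Finally, the extension to a general wiretap channel is a routine application of the auxiliary-random-variable reduction. Having proved the theorem for the outer channel $\mu_{YZ|\hX}$ with input distribution $\mu_{\hX}$, we compose the encoder output $\hxx$ with the memoryless simulator $\FF$ of $\mu_{X|\hX}$ to obtain an $\X^n$-valued input to the physical channel $\mu_{YZ|X}$. Because the cascade $\mu_{YZ|X}\circ\mu_{X|\hX}$ equals $\mu_{YZ|\hX}$, both the decoding-error probability of $g_A(\cc|\yy)$ reproducing $\hxx$ and the leakage $I(M;Z^n)$ coincide with the quantities already controlled on the outer channel, and choosing $\mu_{\hX X}$ to attain the maximum in (\ref{eq:wiretap-capacity}) makes $\Rate$ approach the secrecy capacity.
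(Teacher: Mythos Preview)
Your rate and reliability arguments are essentially what the paper does: property (H1) for $(\bcB,\bp_B)$ handles the rate, Lemma~\ref{lem:saturating} applied to a typical set guarantees the encoder outputs a typical codeword, and the $(\bcA,\bp_A)$-hash property (via Lemma~\ref{lem:Anotempty} or its coset variant) bounds the ML-decoding error. The auxiliary-variable reduction for general channels is also the same.

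Your security argument, however, departs from the paper and contains a genuine gap. The paper does \emph{not} go through channel resolvability. Instead it exploits the complementary feature of the rate allocation,
\[
\frac{(\lA+\lB)\log|\X|}{n}=H(X|Z)+\eA>H(X|Z),
\]
so that the pair $(A,B)$ is a valid Slepian--Wolf code for $X$ with side information $Z$: an eavesdropper who is \emph{told} $\mm=B\xx$ (in addition to $\cc=A\xx$ and $\zz$) can recover $\xx$ via $g_{AB}(\cc,\mm|\zz)\equiv\arg\max_{\xx'\in\C_{AB}(\cc,\mm)}\mu_{X|Z}(\xx'|\zz)$ with vanishing error. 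This is packaged together with the receiver-side decoding into a single key lemma (Lemma~\ref{lem:error-wiretap}). Fano's inequality then gives $H(\tX^n|Z^n,M)\leq h(\delta')+n\delta'\log|\X|$, and a short chain of entropy identities (using $M=B\tX^n$ and a typicality lower bound on $H(\tX^n,Z^n)$) yields $I(M;Z^n)\leq n\delta$.

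Your resolvability plan cannot be carried out under the paper's rate allocation: here $\lhB\log|\X|/n=I(X;Z)-\ehB<I(X;Z)$, so the dummy codebook $\{g_{AB\hB}(\cc,\mm,\ww):\ww\in\im\hcB\}$ is \emph{below} the soft-covering threshold for the eavesdropper channel, and there is no reason for $D(p_{Z^n|M=\mm}\|\mu_Z^n)$ to be $o(1)$. Concretely, the ``proportional share'' you invoke is $|\im\hcB|\cdot|\T_{X|Z,\gamma}(\zz)|/|\T_{X,\gamma}|\approx 2^{-n\ehB}<1$, so for a typical $\zz$ the sub-coset typically \emph{misses} $\T_{X|Z,\gamma}(\zz)$ entirely; Lemma~\ref{lem:saturating} (which gives only nonemptiness, not balance) is of no help here. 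The paper's Fano route works precisely because it uses the inequality $\RA+\RB>H(X|Z)$ rather than $\RhB>I(X;Z)$, and this is why the hypothesis $\ehB>\eA>0$ is stated as it is.
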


\section{Universal Wiretap Channel Coding}

In this section we consider 
the fixed-rate universal wiretap channel coding
for any stationary memoryless channel $\mu_{YZ|X}$,
where an input distribution $\mu_X$ is given
and it is enough to know the upper bound of $H(X|Y)$ and 
the lower bound of $I(X;Z)$ before constructing the code.
It should be noted here that we have to know
the sizes of $\X$, $\Y$, and $\Z$ in advance.

For a given $\RA,\RB>0$, let $p_A$ and $p_B$ be ensembles of functions
\begin{align*}
 A&:\X^n\to\X^{\lA}
 \\
 B&:\X^n\to\X^{\lB}
 \\
 \hB&:\X^n\to\X^{\lhB}
\end{align*}
satisfying
\begin{align*}
 \RA&=\frac{\log|\im\A|}n
 \\
 \RB&=\frac{\log|\im\B|}n
 \\
 \RhB&=\frac{\log|\im\hcB|}n,
\end{align*}
respectively.
It should be noted that $\im\B$ represents the set of all messages,
$\RB$ represents the encoding rate of a confidential message.

We fix functions $A$, $B$, $\hB$, and a vector $\cc\in\X^{\lA}$
available for an encoder, a decoder, and an eavesdropper.
We construct a stochastic encoder and assume that
the encoder uses a random sequence $\ww\in\X^{\lhB}$,
which is generated uniformly at random and independently
of the channel and the message $\mm\in\X^{\lB}$.
We define the same encoder and decoder
as defined in the last section except to replace
and $g_A$ by $\tg_A$ defined by (\ref{eq:me}).

Let $M$ and $W$ be random variables corresponding to
$\mm$ and $\ww$, respectively,
where the probability distributions $p_M$ and $p_W$ are given by
(\ref{eq:pM}) and (\ref{eq:pW}), respectively.
The decoding error probability $\Error_{Y|X}(A,B,\hB,\cc)$
and the information leakage $\Security_{Z|X}(A,B,\hB,\cc)$
are given by (\ref{eq:def-error-wiretap}) 
and (\ref{eq:def-leakage-wiretap}), respectively.

We have the following theorem.
It should be noted that
alphabets $\X$ and $\Y$ is allowed to be non-binary,
and the channel is allowed to be asymmetric.
\begin{thm}
 \label{thm:universal}
 For $\RA$, $\RB$, and $\RhB$,
 Assume that ensembles
 $(\bcA,\bp_A)$, $(\bcA\times\bcB,\bp_{AB})$,
 and $(\bcA\times\bcB\times\bchB,p_{AB\hB})$
 have a hash property.
 Let $\mu_{X}$ be the distribution of the channel input
 satisfying
 \begin{equation}
  \RA+\RB+\RhB<H(X),
   \label{eq:rate-HX}
 \end{equation}
 where $\RB$ represents the encoding rate of a confidential message.
 Then for any $\delta>0$ and all sufficiently large $n$,
 there are functions (sparse matrices) $A\in\A$, $B\in\B$, $\hB\in\hcB$,
 and a vector $\cc\in\im\A$ such that
 \begin{gather}
  \Error_{Y|X}(A,B,\hB,\cc)<\delta
  \label{eq:error-uwiretap}
  \\
  \Security_{Z|X}(A,B,\hB,\cc)<\delta
  \label{eq:leakbound-uwiretap}
 \end{gather}
 for any stationary memoryless channel $\mu_{YZ|X}$ 
 satisfying
 \begin{gather}
  \RA>H(X|Y)
  \label{eq:rate-HXgY}
  \\
  \RhB\geq I(X;Z).
  \label{eq:rate-IXZ}
 \end{gather}
\end{thm}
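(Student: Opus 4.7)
The plan is to bound the expected decoding error probability and the expected information leakage separately, averaged over $(A,B,\hB)$ from the joint ensemble together with $\cc$ uniform on $\im\A$, and then combine via Markov's inequality and the union bound to extract a concrete quadruple $(A,B,\hB,\cc)$ for which both quantities are below $\delta$. The structural point is that $\tg_{AB\hB}$ and $\tg_A$ depend only on $\mu_X$ and on empirical quantities, so the same code is reliable and secure simultaneously for every $\mu_{YZ|X}$ obeying (\ref{eq:rate-HXgY})--(\ref{eq:rate-IXZ}); this is what provides universality and what forces the decoder to be minimum-entropy rather than maximum-likelihood.

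For the decoding error, I would first apply Lemma~\ref{lem:saturating} to $\T_{X,\gamma}$: since $|\T_{X,\gamma}|$ is exponentially $2^{nH(X)}$ and $\RA+\RB+\RhB<H(X)$ by (\ref{eq:rate-HX}), with probability tending to $1$ the coset $\C_{AB\hB}(\cc,M,W)$ meets $\T_{X,\gamma}$, so the minimum-divergence encoder returns an $\xx\in\T_{X,\gamma}$. The channel then produces $\yy$ with $H(\xx|\yy)\le H(X|Y)+o(1)$ off a vanishing set. The minimum-entropy decoder errs only when some competitor $\xx'\in\C_A(\cc)\setminus\{\xx\}$ has $H(\xx'|\yy)$ no larger, and the count of such $\xx'$ is at most $2^{n[H(X|Y)+o(1)]}$ by type counting; Lemma~\ref{lem:Anotempty} combined with $\RA>H(X|Y)$ then drives the expected error probability to zero.

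The information leakage is the main obstacle. Using the standard upper bound $I(M;Z^n)\le D(p_{Z^n|M}\parallel\mu_Z^n\,|\,p_M)$, I would show that for almost every $(A,B,\hB,\cc)$ and every $\mm$ the conditional distribution $p_{Z^n|M=\mm}$ is close in divergence to the memoryless reference $\mu_Z^n$. For fixed $\mm$, letting $W$ range uniformly over $\im\hcB$ partitions the coset $\C_{AB}(\cc,\mm)$ into $|\im\hcB|$ sub-cosets, and the min-divergence encoder picks one near-typical representative from each. The condition $\RhB\ge I(X;Z)$ supplies enough input randomness to play the role of a channel-resolvability code for $\mu_{Z|X}$: applying the joint hash property to the conditionally typical sets $\T_{X|Z,\gamma}(\zz)$ shows that each $\mu_Z^n$-typical $\zz$ is covered by approximately the expected number of encoded $\xx$'s, so $p_{Z^n|M=\mm}(\zz)$ lies within a factor $1+o(1)$ of $\mu_Z^n(\zz)$ on the typical set while the atypical contribution is negligible.

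The delicate point is that the min-divergence encoder is a deterministic selection rather than a uniform draw from the coset; I would handle this by showing, via (\ref{eq:hash}) and the condition (\ref{eq:rate-HX}), that the near-typical part of each coset is exponentially concentrated about its mean size, so the deterministic choice behaves like a uniform surrogate up to method-of-types slack controlled by $\zeta_\X$ and $\eta_\X$. Once the expected error probability and the expected leakage are both shown to be $o(1)$, Markov's inequality yields a quadruple $(A,B,\hB,\cc)$ satisfying both (\ref{eq:error-uwiretap}) and (\ref{eq:leakbound-uwiretap}) simultaneously.
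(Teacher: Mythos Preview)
Your treatment of the decoding error is essentially the paper's: Lemma~\ref{lem:saturating} together with $\RA+\RB+\RhB<H(X)$ puts the min-divergence encoder into $\T_{X,\gamma}$, and Lemma~\ref{lem:Anotempty} plus type counting controls the min-entropy decoder once $\RA>H(X|Y)$. The paper packages exactly this into Lemma~\ref{lem:error-uwiretap}.

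The leakage analysis, however, has a genuine gap. Your plan is a channel-resolvability argument: for a fixed good $(A,B,\hB,\cc)$ and every $\mm$, show $p_{Z^n|M=\mm}$ is close to $\mu_Z^n$ in divergence. But the hash property (H4) and Lemmas~\ref{lem:Anotempty}--\ref{lem:saturating} give only one-sided collision/covering bounds averaged over the \emph{ensemble}; they do not give the two-sided concentration of $|\{\ww:\tg_{AB\hB}(\cc,\mm,\ww)\in\T_{X|Z,\gamma}(\zz)\}|$ about its mean that resolvability needs for a \emph{fixed} code. Your proposed fix---that the near-typical part of each sub-coset is exponentially concentrated---is precisely the missing lemma, and it does not follow from (H4). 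The deterministic min-divergence selection also correlates the codewords $\{\tg_{AB\hB}(\cc,\mm,\ww)\}_{\ww}$ in a way the usual resolvability averaging does not address; you flag this but do not resolve it.

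The paper avoids resolvability altogether by a virtual-receiver/Fano argument. By Lemma~\ref{lem:R} it augments the eavesdropper's observation to $\tZ$ with $Z=f(\tZ)$ and $I(X;\tZ)=\RhB+\e$, hence $H(X|\tZ)<\RA+\RB$. Lemma~\ref{lem:error-uwiretap} then shows, universally in the channel, that a min-entropy decoder $\tg_{AB}(\cc,\mm|\tzz)$ at this virtual receiver recovers the transmitted $\xx$ with high probability; Fano turns this into $H(\tg_{AB\hB}(\cc,M,W)\mid \tZ^n,M)=o(n)$. Data processing and an entropy count then give $I(M;Z^n)\le I(M;\tZ^n)\le n[I(X;\tZ)-\RhB]+o(n)=n\e+o(n)$. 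Thus the secrecy proof is reduced to a second Slepian--Wolf--type reliability statement, which the hash property handles directly---no covering or concentration beyond Lemmas~\ref{lem:Anotempty}--\ref{lem:saturating} is needed.
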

\begin{rem}
 It should be noted that (\ref{eq:rate-HX}), (\ref{eq:rate-HXgY}),
 and (\ref{eq:rate-IXZ}) imply
 \begin{gather*}
  0<\RA < H(X|Z)
  \\
  0<\RB < I(X;Y)-I(X;Z).
 \end{gather*}
\end{rem}

\section{Secret Key Agreement from Correlated Source Outputs}
\label{sec:ska}
In this section we construct codes for secret key agreement
from the correlated source outputs $(X,Y,Z)$ introduced in \cite{Mau93}
(see Fig.\ \ref{fig:ska}),
where a sender, a receiver, and an eavesdropper
have access to $X$, $Y$, and $Z$, respectively.
The secret key capacity, which represents
the optimal key generation rate, is given in \cite{SKCADC} as
\begin{equation}
 \Capacity=\sup_{n,t,(C_1^t,\hX,\hY)} \frac 1n\lrB{I(\hX;\hY)-I(\hX;Z,C_1^t)},
  \label{eq:skc}
\end{equation}
where the supremum is taken over all $n$, $t$, and protocols $(C_1^t,\hX,\hY)$
satisfying Markov conditions
\begin{gather*}
 Y^nZ^n C_{i+1}^t\hX\hY\markov X^n C_1^{i-1}\markov C_i,
 \ \text{if $i$ is odd}
 \\
 X^nZ^n C_{i+1}^t\hX\hY\markov Y^nC_1^{i-1}\markov C_i,
 \ \text{if $i$ is even}
 \\
 Y^nZ^n\hY\markov X^nC_1^t\markov \hX
 \\
 X^nZ^n\hY\markov Y^nC_1^t\markov \hY
\end{gather*}
in which $C_1^t$ represents the communication between the sender and the
receiver via a public channel and finally the sender and the receiver
generate $\hX$ and $\hY$, respectively.
It should be noted that $\hX\neq\hY$ is allowed with high probability.
According to \cite{BBCM95}\cite{CM97},
there are three steps in a secret key agreement:
advantage distillation, information reconciliation, and 
privacy amplification.
This section deals with
the combination of information reconciliation
and privacy amplification studied
in \cite{AC93}\cite{CM97}\cite{ISIT04}\cite{CRYPTLDPC}.
In the following, we assume that a fixed joint distribution
$\mu_{XYZ}$ satisfies
\[
 I(X;Y)-I(X;Z)=H(X|Z)-H(X|Y)>0
\]
and do not deal with advantage distillation.
From (\ref{eq:skc}),
we can construct a protocol whose rate is close to the secret key capacity
by combining an advantage distillation protocol $(C_1^t,\hX,\hY)$
with the following one-way secret key agreement protocol,
where $I(\hX;\hY)-I(\hX;Z,C_1^t)$ is close to the secret key capacity.

In the following, we focus on the one way secret key agreement protocol.
When secret key agreement is allowed to be one-way from the sender
to the receiver,
the forward secret key capacity is given in \cite{AC93} by
\begin{equation}
 \Capacity=\max_{C,\hX} \lrB{I(\hX;Y|C)-I(\hX;Z|C)},
  \label{eq:owskc}
\end{equation}
where the maximum is taken over all random variables $C$ and $\hX$
that satisfy the Markov condition
\[
 \hX\longleftrightarrow C\longleftrightarrow X\longleftrightarrow YZ.
\]
Since
\[
 I(\hX;Y|C)-I(\hX;Z|C)=I(\hX;Y,C)-I(\hX;Z,C),
\]
then we can construct an optimal one-way secret key agreement protocol
by applying the following protocol
to the correlated source $(\hX,(Y,C),(Z,C))$,
which achieves the maximum on the right hand side of (\ref{eq:owskc}).

The following construction is based on \cite{CRYPTLDPC}.
We fix functions
\begin{align*}
 A&:\X^n\to\X^{\lA}
 \\
 B&:\X^n\to\X^{\lB}
\end{align*}
available for an encoder, a decoder, and an eavesdropper, where
\begin{align*}
 \lA
 &\equiv \frac{n[H(X|Y)+\eA]}{\log|\X|}
 \\
 \lB
 &\equiv \frac{n[H(X|Z)-H(X|Y)]}{\log|\X|}
 \\
 &=\frac{n[I(X;Y)-I(X;Z)]}{\log|\X|}.
\end{align*}

Then a secret key agreement protocol
is described below (see Fig.\ \ref{fig:owska}).
\paragraph*{\bf Encoding}
Let $\xx\in\X^n$ be a sender's random sequence.
The sender transmits $\cc$ to a legitimate receiver via a public
channel and generates a secret key by $\mm$,
where
$\cc$ and $\mm$ are defined as
\begin{align}
 \cc&\equiv A\xx
 \label{eq:ska-C}
 \\
 \mm&\equiv B\xx,
 \label{eq:ska-M}
\end{align}
respectively.

\paragraph*{\bf Decoding}
Let $\yy\in\Y^n$ be a receiver's random sequence,
and $\cc\equiv A\xx$ be a codeword received from the sender
via a public channel.
The receiver generates a secret key by $Bg_{A}(\cc|\yy)$,
where $g_A$  is defined by (\ref{eq:ml}).

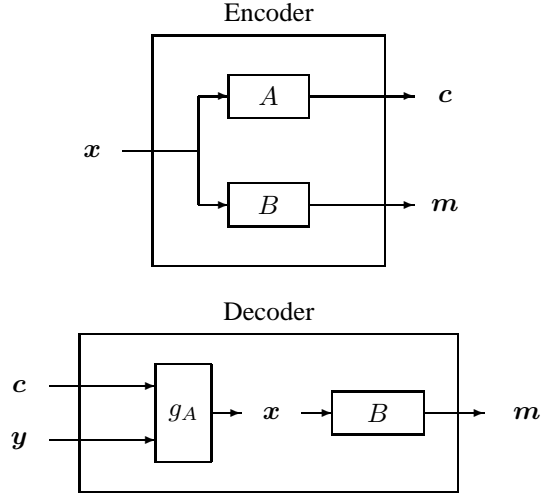
\begin{figure}[t]
 \begin{center}
  \unitlength 0.4mm
  \begin{picture}(156,99)(0,0)
   \put(82,89){\makebox(0,0){Encoder}}
   \put(24,43){\makebox(0,0){$\xx$}}
   \put(34,43){\line(1,0){25}}
   \put(59,25){\line(0,1){36}}
   \put(59,61){\vector(1,0){10}}
   \put(59,25){\vector(1,0){10}}
   \put(69,54){\framebox(26,14){$A$}}
   \put(69,18){\framebox(26,14){$B$}}
   \put(140,61){\makebox(0,0){$\cc$}}
   \put(140,25){\makebox(0,0){$\mm$}}
   \put(95,61){\vector(1,0){35}}
   \put(95,25){\vector(1,0){35}}
   \put(44,5){\framebox(76,76){}}
  \end{picture}
  \\
  \begin{picture}(156,70)(0,0)
   \put(82,60){\makebox(0,0){Decoder}}
   \put(0,35){\makebox(0,0){$\cc$}}
   \put(10,35){\vector(1,0){35}}
   \put(0,17){\makebox(0,0){$\yy$}}
   \put(10,17){\vector(1,0){35}}
   \put(45,10){\framebox(18,32){$g_A$}}
   \put(63,26){\vector(1,0){10}}
   \put(83,26){\makebox(0,0){$\xx$}}
   \put(93,26){\vector(1,0){10}}
   \put(103,19){\framebox(30,14){$B$}}
   \put(133,26){\vector(1,0){20}}
   \put(167,26){\makebox(0,0){$\mm$}}
   \put(20,0){\framebox(124,52){}}
  \end{picture}
 \end{center}
 \caption{Construction of One-way Secret Key Agreement Protocol}
 \label{fig:owska}
\end{figure}

Let $C$ and $M$ be random variables corresponding to $\cc$ and $\mm$
defined by (\ref{eq:ska-C}) and (\ref{eq:ska-M}), respectively.
The key generation rate is given by
\begin{align}
 \Rate&\equiv\frac{H(M)}n.
 \label{eq:def-rate-ska}
\end{align}
The error probability
of the secret key agreement is given by
\begin{equation}
 \Error_{XY}(A,B)\equiv
  \mu_{XY}\lrsb{\lrb{
  (\xx,\yy): Bg_A(A\xx,\yy)\neq B\xx
  }}.
  \label{eq:def-error-ska}
\end{equation}
The information leakage is given by
\begin{equation}
 \Security_{XYZ}(A,B)\equiv\frac {I(M;Z^n,C)}n.
 \label{eq:def-leakage-ska}
\end{equation}

We have the following theorem.
\begin{thm}
\label{thm:ska}
 For given $\lA$ and $\lB$,
 assume that ensembles
 $(\bcA,\bp_A)$ and $(\bcA\times\bcB,\bp_{AB})$
 have a hash property.
 For all $\delta>0$ and sufficiently large $n$,
 there are $\eA>0$ and functions (sparse matrices) $A\in\B$ and $B\in\B$
 such that the above secret key agreement protocol satisfies
 \begin{gather}
  \Rate > I(X;Y)-I(X;Z)-\delta
  \label{eq:rate-ska}
  \\
  \Error_{XY}(A,B)< \delta
  \label{eq:error-ska}
  \\
  \Security_{XYZ}(A,B)< \delta.
  \label{eq:leakage-ska}
 \end{gather}

 By assuming that random variables $C$ and $\hX$ attain
 the forward secret key capacity given by (\ref{eq:owskc})
 and the sender sends message $C$ via public channel
 before the protocol,
 the rate of the proposed secret key agreement protocol
 for correlated sources $(\hX,(Y,C),(Z,C))$
 is closed to the forward secret key capacity.
\end{thm}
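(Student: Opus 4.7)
The plan is to bound each of the three quantities (rate, error probability, leakage) by its expectation over the ensemble, then extract a single pair $(A,B)$ satisfying all three conditions via Markov's inequality; throughout fix some $0 < \eA < \delta/2$. For the error probability, under the maximum-likelihood decoder $g_A$ a decoding error occurs only if there exists $\xx' \in \C_A(A\xx) \setminus \{\xx\}$ with $\mu_{X|Y}(\xx'|\yy) \geq \mu_{X|Y}(\xx|\yy)$; for typical $(\xx,\yy)$ such $\xx'$ must lie in the conditional typical set $\T_{X|Y,\gamma}(\yy)$ of cardinality at most $2^{n[H(X|Y)+o(1)]}$. Applying Lemma \ref{lem:Anotempty} to $\G = \T_{X|Y,\gamma}(\yy)$ and using $|\im\A| \geq 2^{n[H(X|Y)+\eA-o(1)]}$ bounds the averaged probability of such a collision by $2^{-n\eA+o(n)}\alpha_A + \beta_A \to 0$, so the expectation of $\Error_{XY}(A,B)$ vanishes.

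For the leakage and rate, since $(M,C)$ is determined by $X^n$ and the source is i.i.d., the chain rule gives
\begin{equation*}
 H(M \mid Z^n, C) = nH(X|Z) - H(C \mid Z^n) - H(X^n \mid Z^n, M, C),
\end{equation*}
which combined with $H(M) \leq \log|\im\B|$, $H(C|Z^n) \leq \log|\im\A|$, and the definitions of $\lA$ and $\lB$ yields the twin estimates
\begin{align*}
 I(M; Z^n, C) &\leq n\eA + H(X^n \mid Z^n, M, C) + o(n), \\
 \log|\im\B| - H(M) &\leq n\eA + H(X^n \mid Z^n, M, C) + o(n).
\end{align*}
Both conclusions follow once $H(X^n \mid Z^n, M, C) = o(n)$ is established. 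By Fano's inequality it suffices to produce a decoder of $X^n$ from $(Z^n, M, C)$ with vanishing error probability; take the one outputting the unique element of $\T_{X|Z,\gamma}(\zz) \cap \C_{AB}(\cc,\mm)$ when one exists and failing otherwise. An error requires either $X^n \notin \T_{X|Z,\gamma}(Z^n)$ (probability $o(1)$ by standard typicality) or some $\xx' \in \T_{X|Z,\gamma}(Z^n) \setminus \{X^n\}$ colliding with $X^n$ under both $A$ and $B$. Applying Lemma \ref{lem:Anotempty} to the joint ensemble $(\bcA\times\bcB, \bp_{AB})$ (whose hash property is assumed) with $\G = \T_{X|Z,\gamma}(\zz)$ bounds the latter probability by $|\G|\alpha_{AB}/(|\im\A||\im\B|) + \beta_{AB} \leq 2^{-n\eA+o(n)}\alpha_{AB} + \beta_{AB} \to 0$. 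Hence the expectation of $H(X^n \mid Z^n, M, C)$ is $o(n)$, so for our choice of $\eA$ the expectations of $I(M;Z^n,C)/n$ and of $\log|\im\B|/n - H(M)/n$ are each below $\delta$ for all large $n$.

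By Markov's inequality applied simultaneously to the three vanishing expectations, we obtain a single pair $(A,B)$ satisfying all three target inequalities. For the forward secret key capacity extension, let $(C,\hX)$ attain the maximum in (\ref{eq:owskc}); the sender samples $C^n$ from $X^n$ via the specified Markov kernel and transmits it publicly, then applies the above protocol to the induced stationary memoryless source $(\hX,(Y,C),(Z,C))$. The resulting key rate is $I(\hX; Y, C) - I(\hX; Z, C) = I(\hX; Y|C) - I(\hX; Z|C)$, i.e., the forward secret key capacity.

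The main obstacle is establishing $H(X^n \mid Z^n, M, C) = o(n)$ via this Fano decoder: the joint hash property of $(\bcA\times\bcB, \bp_{AB})$, rather than the separate hash properties of $(\bcA, \bp_A)$ and $(\bcB, \bp_B)$, is essential because $(A,B)$ must act jointly as a near-injective hash on $\T_{X|Z,\gamma}(\zz)$ even though $|\im\A||\im\B|$ exceeds $|\T_{X|Z,\gamma}(\zz)|$ by only the factor $2^{n\eA}$, which itself shrinks with $\delta$.
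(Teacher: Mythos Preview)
Your approach is essentially the paper's: both arguments show that, on average over $(A,B)$, (i) the ML decoder $g_A$ recovers $\xx$ from $(A\xx,\yy)$ and (ii) a decoder recovers $\xx$ from $(A\xx,B\xx,\zz)$ (the paper uses the ML decoder $g_{AB}$, you use a typical-set decoder); Fano then turns (ii) into $H(X^n\mid Z^n,C,M)=o(n)$, and your chain-rule identity for $H(M\mid Z^n,C)$ is exactly the computation the paper carries out to bound both leakage and rate. The extraction of a single $(A,B)$ by Markov/union is the same.

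One slip worth fixing: it is \emph{not} true that an ML competitor $\xx'$ with $\mu_{X|Y}(\xx'|\yy)\geq\mu_{X|Y}(\xx|\yy)$ must lie in the conditionally typical set $\T_{X|Y,\gamma}(\yy)$ --- a high-probability sequence need not be typical (think of the mode). The correct containing set, which the paper uses, is $\G(\yy)=\{\xx':\mu_{X|Y}(\xx'|\yy)\geq 2^{-n[H(X|Y)+\zeta]}\}$; since probabilities sum to one this set has cardinality at most $2^{n[H(X|Y)+\zeta]}$, and your application of Lemma~\ref{lem:Anotempty} goes through verbatim with $\G(\yy)$ in place of $\T_{X|Y,\gamma}(\yy)$. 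The same remark applies to the $Z$-side step.
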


\section{Universal Secret Key Agreement from Correlated Source Outputs}
\label{sec:ska-universal}
In this section, we
construct a fixed-rate universal secret key agreement scheme
for any stationary memoryless sources $(X,Y,Z)$,
where
it is enough to know the upper bound of $H(X|Y)$ and 
the lower bound of $H(X|Z)$
before constructing the code.
It should be noted here that we have to know
the sizes of $\X$, $\Y$, and $\Z$ in advance.

For a given $\RA, \RB>0$, let $p_A$ and $p_B$ be
ensembles of functions
\begin{align*}
 A&:\X^n\to\X^{\lA}
 \\
 B&:\X^n\to\X^{\lB},
\end{align*}
where
\begin{align*}
 \lA&\equiv \frac{n\RA}{\log|\X|}
 \\
 \lB&\equiv \frac{n\RB}{\log|\X|}.
\end{align*}

We use the same secret key agreement protocol as that
described in the last section except that we
replace $g_A$ by $\tg_A$ defined by (\ref{eq:me}).

The key generation rate $\Rate$,
the error probability $\Error_{XY}(A,B)$,
and the information leakage $\Security(A,B)$
are defined by (\ref{eq:def-rate-ska}), (\ref{eq:def-error-ska}),
and (\ref{eq:def-leakage-ska}), respectively.

We have the following theorem.
\begin{thm}
 \label{thm:uska}
 For given $\RA$ and $\RB$, assume that ensembles
 $(\bcA,\bp_A)$ and $(\bcA\times\bcB,\bp_{AB})$
 have a hash property.
 For all $\delta>0$ and sufficiently large $n$,
 there are functions (sparse matrices) $A\in\A$ and $B\in\B$ such that
 the above secret key agreement scheme satisfies
 \begin{gather}
  \Rate> \RB-\delta
  \label{eq:rate-uska}
  \\
  \Error_{XY}(A,B)< \delta
  \label{eq:error-uska}
  \\
  \Security_{XYZ}(A,B)< \delta.
  \label{eq:leakagebound-uska}
 \end{gather}
 for any stationary memoryless source $(X,Y,Z)$ satisfying
 \begin{align}
  \RA&>H(X|Y)
  \label{eq:RA-uska}
  \\
  \RA+\RB&\leq H(X|Z).
  \label{eq:RAB-uska}
 \end{align}
\end{thm}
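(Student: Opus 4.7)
The proof mirrors that of Theorem~\ref{thm:ska}, but every appeal to the fixed distribution $\mu_{XYZ}$ is replaced by a type-uniform argument so that the same code $(A,B)$ works simultaneously against every stationary memoryless source satisfying (\ref{eq:RA-uska})-(\ref{eq:RAB-uska}). Since the number of joint types on $\X\times\Y\times\Z$ grows only polynomially in $n$, a final union bound over types delivers a single deterministic pair $(A,B)$ from a random draw of the ensembles. The rate bound (\ref{eq:rate-uska}) is immediate from (H1) combined with the near-uniformity of $M$ produced as a by-product of the leakage analysis, so the work reduces to the error and the leakage.

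\textbf{Error.} The minimum-entropy decoder $\tg_A$ fails on $(\xx,\yy)$ only if $\C_A(A\xx)\setminus\{\xx\}$ meets the level set $\G(\yy,H(\xx|\yy))\equiv\{\xx':H(\xx'|\yy)\leq H(\xx|\yy)\}$. Summing the type-class bound $|\T_{V|U}(\yy)|\leq 2^{nH(V|U)}$ over the polynomial family of conditional types gives $|\G(\yy,h)|\leq(n+1)^{|\X||\Y|}2^{nh}$. Applying Lemma~\ref{lem:Anotempty} with this $\G$ bounds the $A$-probability of a bad collision by $|\G|\alpha_A/|\im\A|+\beta_A$. Restricting to the jointly typical set $\T_{XY,\gamma}$, which has $\mu_{XY}^n$-mass $\to 1$ by the AEP, continuity of entropy in the type forces $H(\xx|\yy)\leq H(X|Y)+\zeta_{\X|\Y}(\gamma|\gamma)$, so (\ref{eq:RA-uska}) makes $|\G|/|\im\A|$ exponentially small for small enough $\gamma$. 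Averaging over $A$ and $(\xx,\yy)$ then gives $\mathbb{E}_A[\Error_{XY}(A,B)]\to 0$ uniformly over every admissible source.

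\textbf{Leakage.} This is the main obstacle. By Lemma~\ref{lem:hash-linApB}, $(\bcA\times\bcB,\bp_{AB})$ has a hash property for $A\oplus B$. Applying (H4) with $\T=\T'=\T_V(\zz)$ for a conditional type $V$ gives
\[
\sum_{\cc,\mm}\mathbb{E}_{AB}\lrsb{|\C_{AB}(\cc,\mm)\cap\T_V(\zz)|^2}\leq|\T_V(\zz)|+\frac{|\T_V(\zz)|^2\alpha_{AB}}{|\im\A||\im\B|}+|\T_V(\zz)|\beta_{AB},
\]
a second-moment bound around the mean $|\T_V(\zz)|/(|\im\A||\im\B|)$. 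For typical $\zz$ and conditional types $V$ close to $\mu_{X|Z}$ in divergence, $|\T_V(\zz)|\geq 2^{n[H(X|Z)-o(1)]}$, while (\ref{eq:RAB-uska}) gives $|\im\A||\im\B|\leq 2^{nH(X|Z)}$, so Chebyshev forces the intersection counts to concentrate around the mean for most $(\cc,\mm)$. This drives $p_{M|C,Z^n}(\cdot|\cc,\zz)$ to be close in total variation to the uniform distribution on $\im\B$ for typical $(\cc,\zz)$, and the standard entropy-continuity estimate $\log|\im\B|-H(p)\leq\epsilon\log|\im\B|+h(\epsilon)$ converts this to a pointwise divergence of order $o(n)$, yielding $I(M;Z^n,C)/n\to 0$. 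The delicate step is keeping this conversion uniform in the source type: summing only over the polynomial family of typical conditional types and discarding atypical contributions (of exponentially small mass) absorbs the polynomial factors. A final union bound over the admissible joint types in $\X\times\Y\times\Z$ produces a single deterministic $(A,B)$ meeting all three bounds simultaneously.
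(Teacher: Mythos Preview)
Your leakage argument takes a genuinely different route from the paper and, as written, has a real gap.

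The paper does \emph{not} prove the leakage bound by balanced coloring. Instead, it reuses the \emph{decoder}. Given any admissible source $(X,Y,Z)$ it first manufactures (Lemma~\ref{lem:R-uska}) a degraded variable $\tZ$ with $Z=f(\tZ)$ and $H(X|\tZ)=\RA+\RB-\e$. By the universal error lemma (Lemma~\ref{lem:error-uska}), the \emph{same} pair $(A,B)$ already satisfies $\tg_{AB}(A\xx,B\xx|\tzz)=\xx$ with high probability for this virtual source, so Fano gives $H(X^n|\tZ^n,C,M)=o(n)$, hence $H(\tZ^n,C,M)\ge H(X^n,\tZ^n)-o(n)$. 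One line of entropy bookkeeping and data processing ($I(M;Z^n,C)\le I(M;\tZ^n,C)$) then yields $I(M;Z^n,C)\le n\e+o(n)$. The point is that universality is purchased once, in the error lemma, via an error exponent that is uniformly positive over the compact set of admissible conditionals; the leakage bound then follows \emph{per source} from Fano, with no further randomness argument needed.

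Your second-moment route (essentially a leftover-hash argument) is a legitimate alternative idea, but two steps do not go through as you describe them. First, at the boundary $\RA+\RB=H(X|Z)$, which the theorem explicitly allows, the mean coset count $|\T_V(\zz)|/(|\im\A||\im\B|)$ is $O(1)$; Chebyshev then gives no useful concentration, so you cannot conclude that $p_{M|C,Z^n}(\cdot|\cc,\zz)$ is close to uniform in total variation. The paper's $\tZ$ device is precisely what creates the slack $\e$ needed here. Second, your ``union bound over admissible joint types'' does not establish universality: the admissible sources form a continuum, not the finitely many $n$-types, and the leakage is a real-valued functional of $(A,B)$ and the \emph{distribution}, not an indicator depending only on the empirical type of the sample. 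Showing $E_{AB}[\Security]\to 0$ for each fixed source, plus Markov, gives a good $(A,B)$ for that one source; it does not give a single $(A,B)$ good for all sources simultaneously. You would need either a uniform (source-free) exponential bound on the collision surplus---which your Chebyshev step does not provide---or to fall back on exactly the universal-decoder/Fano mechanism the paper uses.

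Your error paragraph is closer in spirit to the paper, but note that the paper's universal error bound does not restrict to a distribution-dependent typical set $\T_{XY,\gamma}$; it sums over \emph{all} joint types $UV$ and bounds by the error exponent $F_{XY}(\RA)=\min_{UV}[D(\nu_{UV}\|\mu_{XY})+|\RA-H(U|V)|^+]$, whose infimum over admissible $\mu_{XY}$ is strictly positive. Your version, which bounds $H(\xx|\yy)$ via $\zeta_{\X|\Y}(\gamma|\gamma)$ on a $\mu_{XY}$-typical set, is the non-universal argument of Lemma~\ref{lem:error-ska} and again depends on the particular source.
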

\begin{rem}
 It should be noted that
 (\ref{eq:RA-uska}) and (\ref{eq:RAB-uska}) imply
 \begin{gather*}
  0<\RA<H(X|Z)
  \\
  0<\RB<I(X;Y)-I(X;Z).
 \end{gather*}
\end{rem}

\section{Applying Linear Programming Method to Minimum-divergence
Encoder, Maximum-likelihood Decoder, and Minimum-entropy Decoder}
\label{sec:binary}

In this section, we apply the linear programming method introduced by
\cite{FWK05}\cite{CME05} by assuming that $\X=\Y=\GF(2)$ and
$A$, $B$, and $\hB$ are linear functions (sparse matrices).
It should be noted again that
this method may not find integral solutions
and we do not discuss the performance of the linear programming methods
in this paper.

First, we construct the minimum-divergence encoder $g_{AB\hB}$
defined by (\ref{eq:md}).
The following construction is presented in \cite{HASH-UNIV}.
We use the fact that the analysis of error probability in the proof of
theorems is not changed if
we replace the minimum-divergence encoder $g_{AB\hB}$ by
\begin{equation*}
 g'_{AB\hB}(\cc,\mm,\ww)
  \equiv
  \begin{cases}
   \xx'
   &\text{if $\xx'\in\C_{AB\hB}(\cc,\mm,\ww)\cap\T_U$ exists,}
   \\
   \text{`error'}
   &\text{otherwise},
  \end{cases}
  \label{eq:cwe}
\end{equation*}
where $U$ is defined by (\ref{eq:U}) which appears in Appendix \ref{sec:lemma-uwiretap}.
Let
\begin{equation}
 t\equiv\arg\min_{t'\in\{0,1,\ldots,n\}}D(\nu_{t'}\|\mu_X),
  \label{eq:t}
\end{equation}
where $(\nu_t(0),\nu_t(1))\equiv(1-t/n,t/n)$.
Then the function $g'_{AB\hB}$ is realized by finding
$\xx'$ that satisfies
$A\xx'=\cc$, $B\xx'=\mm$, $\hB\xx'=\ww$,
and $\sum_{i=1}^n x'_i = t$
and declaring the encoding error if there is no such $\xx'$
that satisfies $A\xx'=\cc$, $B\xx'=\mm$, $\hB\xx'=\ww$,
and $\sum_{i=1}^n x'_i = t$,
where we consider $\xx'$ as a real-valued vector
in the third condition.
It should be noted that
it is realized by the linear programming method
because the conditions $A\xx'=\cc$, $B\xx'=\mm$, $\hB\xx'=\ww$
can be represented by
linear inequalities by using the technique of \cite{FWK05}.

Next, we construct the maximum-likelihood decoder $g_{A}$
defined by (\ref{eq:ml}).
The following construction is equivalent to \cite{FWK05}.
The function $g_A$ is realized by
\begin{gather*}
 g_{A}(\cc|\yy)
 \equiv
 \begin{cases}
  \displaystyle
  \arg\min_{\xx': A\xx'=\cc}\sum_{i=1}^n x'_i,
  &\text{if}\ 0\leq\mu_{X|Y}(1|0),\mu_{X|Y}(1|1)\leq 1/2
  \\
  \displaystyle
  \arg\min_{\xx': A\xx'=\cc}\sum_{i=1}^n [-1]^{y_i}x'_i,
  &\text{if}\ 0\leq\mu_{X|Y}(1|0),\mu_{X|Y}(0|1)\leq 1/2
  \\
  \displaystyle
  \arg\max_{\xx': A\xx'=\cc}\sum_{i=1}^n [-1]^{y_i}x'_i,
  &\text{if}\ 0\leq\mu_{X|Y}(0|0),\mu_{X|Y}(1|1)\leq 1/2
  \\
  \displaystyle
  \arg\max_{\xx': A\xx'=\cc}\sum_{i=1}^n x'_i,
  &\text{if}\ 0\leq\mu_{X|Y}(0|0),\mu_{X|Y}(0|1)\leq 1/2,
 \end{cases}
\end{gather*}
where $\xx'$ and $\yy$ are considered as real-valued vectors
in $\sum_{i=1}^nx'_i$ and $\sum_{i=1}^n[-1]^{y_i}x_i'$.
The above minimizations and maximizations are the linear programming
problems because the condition $A\xx'=\cc$ can be represented
by linear inequalities by using the technique of \cite{FWK05}.

Finally, we construct the minimum-entropy decoder $\tg_{A}$
defined by (\ref{eq:me}).
The following construction is presented in \cite{HASH-UNIV},
which is based on the idea presented in \cite{CME05}.
The function $\tg_A$ can be realized as
\begin{align}
 \xx_{t,\min}
 &\equiv\arg\min_{\substack{
 \xx':
 \\
 A\xx'=\cc
 \\
 \sum_{i=1}^n x'_i=t
 }}
 \sum_{i=1}^ny_ix'_i
 \label{eq:xmin}
 \\
 \xx_{t,\max}
 &\equiv\arg\max_{\substack{
 \xx':
 \\
 A\xx'=\cc
 \\
 \sum_{i=1}^n x'_i=t
 }}
 \sum_{i=1}^ny_ix'_i
 \label{eq:xmax}
 \\
 \tg_{A}(\cc|\yy)
 &\equiv\arg\min_{\xx'\in\cup_{t=0}^n\{\xx_{t,\min},\xx_{t,\max}\}}
 H(\xx'|\yy),
 \label{eq:gacy}
\end{align}
where $\xx'$ and $\yy$ are considered as real-valued vectors
in $\sum_{i=1}^nx'_i$ and $\sum_{i=1}^ny_ix_i'$.
The derivation of (\ref{eq:gacy}) is presented
in \cite[Appendix A]{HASH-UNIV}.
We can use the linear programming method to obtain
$\xx_{t,\min}$ and
$\xx_{t,\max}$ because the constraint $A\xx'=\cc$
can be represented by linear inequalities
by using the technique introduced in \cite{FWK05}.
It should be noted that
$g_A$ can be replaced by
\begin{align}
 g'_{A}(\cc|\yy)
 &\equiv\arg\min_{\xx'\in\C_A(\cc)\cap\T_U}H(\xx'|\yy)
 \notag
 \\
 &=\arg\min_{\xx'\in\{\xx_{t,\min},\xx_{t,\max}\}}
 H(\xx'|\yy)
\end{align}
by assuming that $U$ 
defined by (\ref{eq:U}) or $t$ defined by (\ref{eq:t})
is shared by the encoder and the decoder,
where $\xx_{t,\min}$ and $\xx_{t,\max}$ are defined by
(\ref{eq:xmin}) and (\ref{eq:xmax}), respectively.

\section{Proof of Theorems}

\subsection{Proof of Theorem~\ref{thm:wiretap}}
We use the following lemma which is proved in Appendix.
\begin{lem}
 \label{lem:error-wiretap}
 Let $g_{AB}(\cc,\mm|\zz)$ be defined as
 \[
 g_{AB}(\cc,\mm|\zz)
 \equiv
 \arg\max_{\xx'\in\C_{AB}(\cc,\mm)}\mu_{X|Z}(\xx'|\zz).
 \]

 Then, for all $\delta'>0$, all sufficiently small $\gamma>0$,
 and all sufficiently large $n$,
 there are functions (sparse matrices)
 $A\in\A$, $B\in\B$, $\hB\in\hcB$, and a vector $\cc\in\im\A$ such that
 \begin{equation}
  p_{MWYZ}\lrsb{\lrb{
   (\mm,\ww,\yy,\zz):
   \begin{aligned}
    &g_{AB\hB}(\cc,\mm,\ww)\notin\T_{X,\gamma}
    \\
    &\text{or}\ \zz\notin\T_{Z|X,\gamma}(g_{AB\hB}(\cc,\mm,\ww))
    \\
    &\text{or}\ g_{AB\hB}(\cc,\mm,\ww)\neq g_{AB}(\cc,\mm|\zz)
    \\
    &\text{or}\ g_{AB\hB}(\cc,\mm,\ww)\neq g_{A}(\cc|\yy)
   \end{aligned}
   }}
   \leq
   \delta'.
   \label{eq:error-wiretap-lemma}
 \end{equation}
\end{lem}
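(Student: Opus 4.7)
The plan is a standard random-coding argument: we average the left-hand side of (\ref{eq:error-wiretap-lemma}) over the ensemble of $(A,B,\hB,\cc)$, where $(A,B,\hB)$ is drawn from $\bp_{AB\hB}$ and $\cc$ is uniform on $\im\A$ independently of everything else, union-bound the four component events, show each has vanishing average probability, and conclude existence of a specific $(A,B,\hB,\cc)$ achieving the bound. The needed parameter choices are $\ehB>\eA>0$ both small together with a typicality slack $\gamma>\gamma'>0$.

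For the first event $\{g_{AB\hB}(\cc,M,W)\notin\T_{X,\gamma}\}$, the saturating property (Lemma~\ref{lem:saturating}) applied to the joint ensemble $(\bcA\times\bcB\times\bchB,\bp_{AB\hB})$ with $\T=\T_{X,\gamma'}$ bounds the probability that $\T_{X,\gamma'}\cap\C_{AB\hB}(\cc,M,W)=\emptyset$ by a quantity proportional to $|\im\A||\im\B||\im\hcB|/|\T_{X,\gamma'}|$. Since $\log[|\im\A||\im\B||\im\hcB|]/n\to H(X)+\eA-\ehB$ while $\log|\T_{X,\gamma'}|/n\to H(X)$, choosing $\ehB>\eA$ makes this ratio decay exponentially. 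On the complement, the minimum-divergence encoder returns some $\xx$ with $D(\nu_\xx\|\mu_X)<\gamma'<\gamma$, so $\xx\in\T_{X,\gamma}$. The event $\{\zz\notin\T_{Z|X,\gamma}(\xx)\}$ conditioned on $\xx\in\T_{X,\gamma}$ is handled by conditional-typicality (Chernoff/Sanov) estimates for the stationary memoryless channel $\mu_{Z|X}$, independent of the code.

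For the two decoding events, conditional on $\xx\in\T_{X,\gamma}$ and joint typicality of $(\xx,\yy)$, $(\xx,\zz)$, the decoder $g_A(\cc|\yy)$ (resp.\ $g_{AB}(\cc,\mm|\zz)$) fails only when there is a competing $\xx'\neq\xx$ in the coset $\C_A(\cc)$ (resp.\ $\C_{AB}(\cc,\mm)$) with at least equal conditional likelihood; type-counting confines such $\xx'$ to a set $\G$ of cardinality essentially $\exp[n(H(X|Y)+o(1))]$ (resp.\ $\exp[n(H(X|Z)+o(1))]$). Lemma~\ref{lem:Anotempty} then bounds the corresponding average failure probabilities by $|\G|\alpha_A/|\im\A|+\beta_A$ and $|\G|\alpha_{AB}/[|\im\A||\im\B|]+\beta_{AB}$, and since $\RA=H(X|Y)+\eA>H(X|Y)$ and $\RA+\RB=H(X|Z)+\eA>H(X|Z)$, both bounds vanish as $n\to\infty$. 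Summing the four bounds and invoking $\alpha\to 1$, $\beta\to 0$ for each ensemble yields the lemma. The main technical obstacle is precisely defining the competing sets $\G$ in terms of the divergence-typical sets $\T_{X|Y,\gamma}(\yy)$, $\T_{X|Z,\gamma}(\zz)$ and controlling their cardinalities via the appendix estimates~(\ref{eq:zeta})--(\ref{eq:def-eta}); the rest is bookkeeping to order $\gamma,\gamma',\eA,\ehB$ so that all four bounds vanish simultaneously.
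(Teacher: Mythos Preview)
Your outline handles the encoder-typicality and channel-typicality events correctly, but there is a genuine gap in the two decoding-error steps. You invoke Lemma~\ref{lem:Anotempty} as though the channel input $\xx$ were fixed (or at least independent of the code), but here $\xx=g_{AB\hB}(\cc,\mm,\ww)$ is the output of the minimum-divergence encoder and hence a deterministic function of the random quadruple $(A,B,\hB,\cc)$. Lemma~\ref{lem:Anotempty} bounds $p_A(\{A:[\G\setminus\{\uu\}]\cap\C_A(A\uu)\neq\emptyset\})$ only for a \emph{fixed} $\uu$; once $\uu$ depends on $A$ the lemma does not apply, and the conditional law of $A$ given $\{g_{AB\hB}(\cc,\mm,\ww)=\xx\}$ is not $p_A$. (Contrast Lemma~\ref{lem:error-ska} for secret key agreement, where $\xx$ is a source output independent of $A$ and the argument you sketch does go through verbatim.)

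The paper resolves this dependence by an overcounting step: one sums over all $\xx$ in a set $\T$, replaces the indicator $\chi(g_{AB\hB}(\cc,\mm,\ww)=\xx)$ by the larger product $\chi(A\xx=\cc)\chi(B\xx=\mm)\chi(\hB\xx=\ww)$, decouples the expectations via Lemma~\ref{lem:E}, and then applies Lemma~\ref{lem:ACnotempty} (not Lemma~\ref{lem:Anotempty}) to pick up an extra factor $1/|\im\A|$. The cost is a prefactor $|\T|/[|\im\A||\im\B||\im\hcB|]$ multiplying $|\G|\alpha_A/|\im\A|+\beta_A$. This is what forces the construction of Lemma~\ref{lem:T}: taking $\T=\T_{X,\gamma'}$ as you propose makes $|\T|/[|\im\A||\im\B||\im\hcB|]\approx 2^{n(\ehB-\eA)}$ exponentially large, so the resulting term $2^{n(\ehB-\eA)}\beta_A$ need not vanish for sparse-matrix ensembles whose $\beta_A$ decays only subexponentially. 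Instead one must choose $\T\subset\T_U$ with $|\T|/[|\im\A||\im\B||\im\hcB|]\asymp\kappa(n)$ for a sequence satisfying simultaneously $\kappa\to\infty$ (so Lemma~\ref{lem:saturating} still gives a vanishing bound for the first event) and $\kappa\beta_A\to 0$, $\log\kappa/n\to 0$ (so the decoding bound vanishes). The overcounting, the use of Lemma~\ref{lem:ACnotempty}, and this $\kappa$-balancing are the ingredients missing from your proposal.
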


Now we prove Theorem \ref{thm:wiretap}.
The equality (\ref{eq:rate-wiretap}) has already been shown.
Since $g_{AB\hB}(\cc,\mm,\ww)=g_A(\cc|\yy)$ implies
\begin{align*}
 \varphi^{-1}(\yy)
 &=
 Bg_A(\cc|\yy)
 \\
 &=
 Bg_{AB\hB}(\cc,\mm,\ww)
 \\
 &=
 \mm
\end{align*}
for all $\cc$ and $\ww$,
the inequality (\ref{eq:error-wiretap}) comes immediately from
Lemma~\ref{lem:error-wiretap} by letting $\delta'<\delta$.

In the following we prove (\ref{eq:leakagebound-wiretap}).
From Lemma~\ref{lem:error-wiretap} and Fano's inequality,
we have
\begin{align}
 H(g_{AB\hB}(\cc,M,W)|Z^n,M)
 &\leq h(\delta')+n\delta'\log|\X|
 \label{eq:fano}
\end{align}
for all $\delta>0$ and all sufficiently large $n$,
where $h$ is defined by (\ref{eq:def-h}).

Let $\txx\equiv g_{AB\hB}(\cc,\mm,\ww)$, $\tX^n\equiv g_{AB\hB}(\cc,M,W)$
and $\tcX$ be defined as
\[
 \tcX\equiv\lrb{
 g_{AB\hB}(\cc,\mm,\ww):
 \mm\in\X^{\lB}, \ww\in\X^{\lhB}
 }.
\]
Then the probability distribution $P_{\tX Z}$ is given by
\begin{align}
 P_{\tX Z}(\txx,\zz)
 &=
 \sum_{\substack{
 \mm,\ww:\\
 \txx=g_{AB\hB}(\cc,\mm,\ww)
 }}
 \mu_{Z|X}(\zz|\txx)P_M(\mm)P_W(\ww),
 \notag
 \\
 &=
 \begin{cases}
  \frac{\mu_{Z|X}(\zz|\txx)}{|\im\B||\im\hcB|},
  &\text{if}\ \txx\in\tcX, \mm\in\im\B, \ww\in\im\hcB
  \\
  0,
  &\text{otherwise}
 \end{cases}
 \label{eq:ptXZ}
\end{align}
where the summation equals zero when $\txx\notin\tcX$
and the second equality comes from the fact that
if $\txx\in\tcX$ then
there is a unique pair $(\mm,\ww)$ such that
$\txx=g_{AB\hB}(\cc,\mm,\ww)$.
From Lemma \ref{lem:typical-aep}, we have
\begin{align}
 \mu_{Z|X}(\zz|\xx)
 &\leq
 2^{-n[H(Z|X)-\zeta_{\Z|\X}(\gamma|\gamma)]}
 \label{eq:muZgX}
\end{align}
for $\xx\in\T_{X,\gamma}$ and $\zz\in\T_{Z|X,\gamma}(\xx)$.
Then the joint entropy $H(\tX^n,Z^n)$ is given by
\begin{align}
 H(\tX^n,Z^n)
 &\geq
 \sum_{\txx\in\T_{X,\gamma}}\sum_{\zz\in\T_{Z|X,\gamma}(\txx)}
 P_{\tX Z}(\txx,\zz)
 \log\frac 1{P_{\tX,Z}(\txx,\zz)}
 \notag
 \\
 &\geq
 \sum_{\txx\in\T_{X,\gamma}}\sum_{\zz\in\T_{Z|X,\gamma}(\txx)}
 P_{\tX Z}(\txx,\zz)
 \lrB{n[H(Z|X)-\zeta_{\Z|\X}(\gamma|\gamma)]+\log|\im\B||\im\hcB|}
 \notag
 \\
 &\geq
 n[1-\delta']
 \lrB{H(Z|X)+\frac1n\log|\im\B||\im\hcB|-\zeta_{\Z|\X}(\gamma|\gamma)}
 \notag
 \\
 &\geq
 n[H(Z|X)+I(X;Y)]-\log\frac{|\X|^{\lB+\lhB}}{|\im\B||\im\hcB|}
 -n\lrB{
 \delta'\log|\X||\Z|+\zeta_{\Z|\X}(\gamma|\gamma)+\ehB
 }
 \label{eq:HtXZ}
\end{align}
for sufficiently large $n$,
where the second inequality comes from (\ref{eq:ptXZ}) and (\ref{eq:muZgX}),
and the third inequality comes from Lemma~\ref{lem:error-wiretap}.
Then we have
\begin{align}
 I(M;Z^n)
 &=
 H(M)+H(Z^n)-H(Z^n,M)
 \notag
 \\
 &=
 H(M)+H(Z^n)
 -H(Z^n,M,g_{AB\hB}(\cc,M,W))
 +H(g_{AB\hB}(\cc,M,W)|Z^n,M)
 \notag
 \\
 &=
 H(M)+H(Z^n)
 -H(Z^n,g_{AB\hB}(\cc,M,W))
 +H(g_{AB\hB}(\cc,M,W)|Z^n,M)
 \notag
 \\
 &\leq
 H(M)+H(Z^n)
 -H(Z^n,g_{AB\hB}(\cc,M,W))
 +h(\delta')+n\delta'\log|\X|
 \notag
 \\
 &\leq
 n[I(X;Y)-I(X;Z)]+H(Z^n)
 -n[H(Z|X)+I(X;Y)]+\log\frac{|\X|^{\lB+\lhB}}{|\im\B||\im\hcB|}
 \notag
 \\*
 &\quad
 +n\lrB{\delta'\log|\X||\Z|+\zeta_{\Z|\X}(\gamma|\gamma)+\ehB}
 +h(\delta')+n\delta'\log|\X|
 \notag
 \\
 &<
 n\delta
\end{align}
for sufficiently large $n$,
where the third equality comes from the fact that $Bg(\cc,M,W)=M$,
the first inequality comes from (\ref{eq:fano}),
the second inequality comes from (\ref{eq:HtXZ}),
and we choose suitable $\ehB,\gamma,\delta'>0$
to satisfy the last inequality.
From this inequality we have (\ref{eq:leakagebound-wiretap}).

\subsection{Proof of Theorem~\ref{thm:universal}}

We use the following lemmas, which are proved in Appendix.
\begin{lem}
 \label{lem:R}
 If  $I(X;Z)\leq R$,
 then for all $\e>0$ there is a random variable  $\tZ$ taking values in 
 $\tcZ\equiv\X\times\Z$ and a function $f$ such that
 \begin{gather*}
  I(X;\tZ)=R+\e
  \\
  Z=f(\tZ).
 \end{gather*}
\end{lem}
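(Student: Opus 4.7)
The plan is to take $\tZ$ as the pair $(\tX, Z)$ valued in $\tcZ \equiv \X \times \Z$, and let $f$ be the coordinate projection onto the second component, i.e.\ $f(\widetilde{x}, z) \equiv z$. This choice automatically satisfies $Z = f(\tZ)$, so the only remaining task is to design the conditional distribution of $\tX$ given $X$ so that $I(X; \tX, Z)$ attains the prescribed value $R + \e$.

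To that end I would introduce a one-parameter interpolation $\{\tX_\theta\}_{\theta \in [0,1]}$. Fix any reference symbol $x_0 \in \X$ and, conditionally on $X = x$ and conditionally independently of $Z$, set $\tX_\theta = x$ with probability $\theta$ and $\tX_\theta = x_0$ with probability $1 - \theta$. Writing $\tZ_\theta \equiv (\tX_\theta, Z)$, at $\theta = 0$ the value $\tX_0 = x_0$ is deterministic so $I(X; \tZ_0) = I(X; Z)$, while at $\theta = 1$ we have $\tX_1 = X$ so $I(X; \tZ_1) = H(X)$. Since the joint distribution of $(X, \tX_\theta, Z)$ depends continuously on $\theta$ and mutual information is continuous in the joint distribution on a fixed finite alphabet, the map $\theta \mapsto I(X; \tZ_\theta)$ is continuous on $[0,1]$.

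Applying the intermediate value theorem then yields some $\theta^\ast \in [0,1]$ with $I(X; \tZ_{\theta^\ast}) = R + \e$, provided $I(X; Z) \leq R + \e \leq H(X)$. The left inequality is immediate from the hypothesis $I(X; Z) \leq R$. The right inequality is the one delicate point, since $I(X; \tZ) \leq H(X)$ holds for every $\tZ$ and so the lemma implicitly requires $R + \e \leq H(X)$. At the place the lemma is used in the proof of Theorem~\ref{thm:universal}, the rate constraint $\RA + \RB + \RhB < H(X)$ together with $\RA, \RB > 0$ forces $\RhB < H(X)$, which for sufficiently small $\e$ gives the needed bound $R + \e = \RhB + \e \leq H(X)$.

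I expect the main obstacle to be merely verifying this boundary condition at the call sites; the construction itself is elementary and the remainder is a routine continuity-plus-IVT argument on a one-parameter family of discrete joint distributions, with no extra probabilistic or coding-theoretic machinery required.
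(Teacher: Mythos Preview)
Your proposal is correct and follows essentially the same route as the paper: set $\tZ=(\tX,Z)$ with $f$ the projection onto $Z$, then use continuity of the mutual information in the auxiliary conditional law together with the intermediate value theorem between the endpoints $\tX$ independent of $X$ (giving $I(X;\tZ)=I(X;Z)$) and $\tX=X$ (giving $I(X;\tZ)=H(X)$). Your explicit one-parameter family is a concrete instantiation of the paper's more abstract IVT argument, and your observation that the hidden constraint $R+\e\le H(X)$ is guaranteed at the call site by $\RA+\RB+\RhB<H(X)$ is a point the paper leaves implicit.
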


\begin{lem}
 \label{lem:error-uwiretap}
 Let $\tg_{AB}(\cc,\mm|\tzz)$ be defined as
 \[
 \tg_{AB}(\cc,\mm|\tzz)
 \equiv
 \arg\min_{\xx'\in\C_{AB}(\cc,\mm)}H(\xx'|\tzz).
 \]
 Then,
 for all $\delta'>0$, all sufficiently small $\gamma>0$,
 and sufficiently large $n$,
 there are functions (sparse matrices)
 $A\in\A$, $B\in\B$, $\hB\in\hcB$,
 and a vector $\cc\in\im\A$ such that
 \begin{align}
  p_{MWY\tZ}\lrsb{\lrb{
  (\mm,\ww,\yy,\tzz):
  \begin{aligned}
   &\tg_{AB\hB}(\cc,\mm,\ww)\notin\T_{X,\gamma}
   \\
   &\text{or}\ \tzz\notin\T_{\tZ|X,\gamma}(\tg_{AB\hB}(\cc,\mm,\ww))
   \\
   &\text{or}\ \tg_{AB\hB}(\cc,\mm,\ww)\neq \tg_{A}(\cc|\yy)
   \\
   &\text{or}\ \tg_{AB\hB}(\cc,\mm,\ww)\neq \tg_{AB}(\cc,\mm|\tzz)
  \end{aligned}
  }}
  \leq
  \delta'
  \label{eq:error-uwiretap-lemma}
 \end{align}
 for any $\mu_{Y\tZ|X}$ satisfying
 \begin{gather}
  \RA+\RB+\RhB<H(X)
  \label{eq:RABhB-uwiretap-lemma}
  \\
  \RA>H(X|Y)
  \label{eq:RA-uwiretap-lemma}
  \\
  \RA+\RB>H(X|\tZ).
  \label{eq:RAB-uwiretap-lemma}
 \end{gather}
\end{lem}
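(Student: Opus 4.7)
The plan is to decompose the four-way event on the left-hand side of~(\ref{eq:error-uwiretap-lemma}) by the union bound, average each summand over the joint ensemble $(\bcA\times\bcB\times\bchB,\bp_{AB\hB})$ together with $\cc$ drawn uniformly from $\im\A$, drive every expectation to zero using the three hash properties and the rate gaps~(\ref{eq:RABhB-uwiretap-lemma})--(\ref{eq:RAB-uwiretap-lemma}), and finally invoke the probabilistic method to fix $(A,B,\hB,\cc)$. The template is the proof of Lemma~\ref{lem:error-wiretap}, but with the maximum-likelihood analyses replaced by type-combinatorial analyses of the minimum-entropy decoders $\tg_A$ and $\tg_{AB}$; since the resulting error bounds depend on the channel only through $H(X|Y)$ and $H(X|\tZ)$, this is what buys universality.

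\textbf{Typicality of the encoder output and of the channel output.} Apply Lemma~\ref{lem:saturating} to the joint ensemble $(\bcA\times\bcB\times\bchB,\bp_{AB\hB})$ with $\T=\T_{X,\gamma}$. Since $|\T_{X,\gamma}|=2^{n[H(X)-o(1)]}$ while $|\im\A||\im\B||\im\hcB|=2^{n[\RA+\RB+\RhB]}$, the strict gap in~(\ref{eq:RABhB-uwiretap-lemma}) forces the bound of Lemma~\ref{lem:saturating} to vanish, so outside a vanishing event $\C_{AB\hB}(\cc,\mm,\ww)\cap\T_{X,\gamma}\neq\emptyset$; the minimum-divergence encoder then outputs $\xx\equiv\tg_{AB\hB}(\cc,\mm,\ww)\in\T_{X,\gamma}$. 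Given such a typical $\xx$, the memoryless channel $\mu_{Y\tZ|X}^n$ concentrates $(\yy,\tzz)$ on $\T_{Y\tZ|X,\gamma'}(\xx)$ with probability $1-o(1)$ by standard type-class arguments, which handles the second line of~(\ref{eq:error-uwiretap-lemma}) and supplies the auxiliary joint typicality $\yy\in\T_{Y|X,\gamma}(\xx)$ needed below.

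\textbf{Decoding at the receiver and at the eavesdropper.} Joint typicality of $(\xx,\yy)$ yields $H(\xx|\yy)\leq H(X|Y)+\eta_{\X\Y}(\gamma)$, so $\tg_A(\cc|\yy)\neq\xx$ forces some $\xx'\in\C_A(\cc)\setminus\{\xx\}$ into
\[
 \G_1(\yy)\equiv\lrb{\xx'\in\X^n:H(\xx'|\yy)\leq H(X|Y)+\eta_{\X\Y}(\gamma)},
\]
and a standard type-counting bound gives $|\G_1(\yy)|\leq (n+1)^{|\X||\Y|}2^{n[H(X|Y)+\eta_{\X\Y}(\gamma)]}$. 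Lemma~\ref{lem:Anotempty} now bounds the $p_A$-probability of a confusing $\xx'$ by $|\G_1(\yy)|\alpha_A/|\im\A|+\beta_A$, which vanishes for sufficiently small $\gamma$ thanks to $|\im\A|=2^{n\RA-o(1)}$ and~(\ref{eq:RA-uwiretap-lemma}). The eavesdropper's error event is handled symmetrically: apply Lemma~\ref{lem:Anotempty} to the joint ensemble $(\bcA\times\bcB,\bp_{AB})$ (via $A\oplus B$) with the confusion set
\[
 \G_2(\tzz)\equiv\lrb{\xx'\in\X^n:H(\xx'|\tzz)\leq H(X|\tZ)+\eta_{\X\tcZ}(\gamma)},
\]
whose cardinality is at most $(n+1)^{|\X||\tcZ|}2^{n[H(X|\tZ)+\eta_{\X\tcZ}(\gamma)]}$; the gap~(\ref{eq:RAB-uwiretap-lemma}) against $|\im(A\oplus B)|=2^{n[\RA+\RB]-o(1)}$ makes this term vanish as well.

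\textbf{Assembly and main obstacle.} Union-bounding the four vanishing contributions gives an expected total below $\delta'$ for all sufficiently small $\gamma$ and large $n$, and the probabilistic method then yields a specific $(A,B,\hB,\cc)$ attaining the bound. The main obstacle is the simultaneous choice of $\gamma$ and of the induced slacks $\eta_{\X\Y}(\gamma)$ and $\eta_{\X\tcZ}(\gamma)$: they must all fall below the respective channel-independent rate gaps \emph{uniformly in the channel $\mu_{Y\tZ|X}$}. Because the confusion-set cardinalities $|\G_1(\yy)|,|\G_2(\tzz)|$ depend on the channel only through $H(X|Y)$ and $H(X|\tZ)$, the orchestration succeeds for every $\mu_{Y\tZ|X}$ satisfying~(\ref{eq:RABhB-uwiretap-lemma})--(\ref{eq:RAB-uwiretap-lemma}), which is precisely the source of the universality.
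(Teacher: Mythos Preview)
Your overall strategy and the typicality steps match the paper, but the decoding step has a real gap: you invoke Lemma~\ref{lem:Anotempty} to bound the $p_A$-probability that $[\G_1(\yy)\setminus\{\xx\}]\cap\C_A(A\xx)\neq\emptyset$, but that lemma requires $\xx$ to be \emph{fixed} before $A$ is drawn. Here $\xx=\tg_{AB\hB}(\cc,\mm,\ww)$ depends on $A$ through the coset constraint $A\xx=\cc$, so the lemma does not apply directly. (This is exactly what distinguishes the wiretap case from the secret-key-agreement case in Lemma~\ref{lem:error-ska}, where $\xx$ is a source output independent of the code and your shortcut would be fine.) The paper decouples $\xx$ from $A$ by the relaxation $\chi(\tg_{AB\hB}(C,M,W)=\xx)\leq\chi(A\xx=C)\chi(B\xx=M)\chi(\hB\xx=W)$, summing over all $\xx$ in a designated set $\T$, and then averaging: Lemma~\ref{lem:E} extracts a factor $1/(|\im\B||\im\hcB|)$ from the $(B,\hB,M,W)$ average, and the joint $(A,C)$ average of $\chi(A\xx=C)\chi(\tg_A(C|\yy)\neq\xx)$ contributes another $1/|\im\A|$ via the computation in~(\ref{eq:uwiretap-error3sub}). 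The net effect is a prefactor $|\T|/(|\im\A||\im\B||\im\hcB|)$ multiplying the per-$\xx$ decoding bound.

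This prefactor is also why your choice $\T=\T_{X,\gamma}$ would not survive the full argument even if you inserted the relaxation: since $\RA+\RB+\RhB<H(X)$ by~(\ref{eq:RABhB-uwiretap-lemma}), one has $|\T_{X,\gamma}|/(|\im\A||\im\B||\im\hcB|)\approx 2^{n[H(X)-\RA-\RB-\RhB]}\to\infty$, and the bound blows up. The paper instead invokes Lemma~\ref{lem:T} to carve out $\T\subset\T_U\subset\T_{X,\gamma}$ with $\kappa\leq|\T|/(|\im\A||\im\B||\im\hcB|)\leq 2\kappa$ for a subexponentially growing sequence $\kappa$ satisfying $\kappa\beta_A\to 0$; the prefactor is then $\leq 2\kappa$ and every term still vanishes. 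A minor further difference: rather than your single typicality-threshold confusion sets $\G_1,\G_2$ indexed by $H(X|Y)$ and $H(X|\tZ)$, the paper sums over all conditional types $V|U$ to obtain the error-exponent form $F_{Y|X}(R)=\min_{V|U}[D(\nu_{V|U}\|\mu_{Y|X}|\nu_U)+|R-H(U|V)|^+]$, which is what underpins the uniformity over channels; your route could be made to work as well, but only after the dependence issue above is repaired.
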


Now we prove Theorem \ref{thm:universal}.
The inequality (\ref{eq:error-uwiretap}) is shown similarly to the
proof of (\ref{eq:error-wiretap}).

In the following we prove (\ref{eq:leakbound-uwiretap}).
From Lemma~\ref{lem:R},
there is $\tZ\in\tcZ$ such that
\begin{equation}
 I(X;\tZ)=\RhB+\e,
 \label{eq:uwiretap-RhB}
\end{equation}
where $\e>0$ is specified later.
From Lemma~\ref{lem:error-uwiretap} and Fano's inequality,
we have
\begin{align}
 H(\tg_{AB\hB}(\cc,M,W)|\tZ^n,M)
 &\leq h(\delta')+n\delta'\log|X|
 \label{eq:fano-universal}
\end{align}
for all $\delta'>0$ and sufficiently large $n$,
where $h$ is defined by (\ref{eq:def-h}).
Similarly to the proof of (\ref{eq:HtXZ}),
we have
\begin{align}
 H(\tZ^n,\tg_{AB\hB}(\cc,M,W))
 &\geq
 n[1-\delta']\lrB{
 H(\tZ|X)+\frac 1n\log|\im\B||\im\hcB|-\zeta_{\tcZ|\X}(\gamma|\gamma)
 }
 \notag
 \\
 &\geq
 nH(\tZ|X)+\log|\im\B||\im\hcB|
 -n\lrB{
 \delta'\log|\X||\tcZ|+\zeta_{\tcZ|\X}(\gamma|\gamma)
 }
 \notag
 \\
 &\geq
 n[H(\tZ|X)+\RB+\RhB]
 -n\lrB{
 \delta'\log|\X||\tcZ|+\zeta_{\tcZ|\X}(\gamma|\gamma)
 },
 \label{eq:HtXZ-universal}
\end{align}
where the second inequality comes from the fact that
$\RB+\RhB<H(X)\leq\log|\X|$.
Then we have
\begin{align}
 I(M;Z^n)
 &=
 I(M;f(\tZ_1),\ldots,f(\tZ_n))
 \notag
 \\
 &\leq
 I(M;\tZ^n)
 \notag
 \\
 &=
 H(M)+H(\tZ^n)
 -H(\tZ^n,M,\tg_{AB\hB}(\cc,M,W))
 +H(\tg_{AB\hB}(\cc,M,W)|\tZ^n,M)
 \notag
 \\
 &\leq
 H(M)+H(\tZ^n)
 -H(\tZ^n,\tg_{AB\hB}(\cc,M,W))
 +h(\delta')+n\delta'\log|\X|
 \notag
 \\
 &\leq
 n\RB+H(\tZ^n)
 -n[H(\tZ|X)+\RB+\RhB]
 +
 n\lrB{
 \delta'\log|\X||\tcZ|+\zeta_{\Z|\X}(\gamma|\gamma)
 }
 +h(\delta')+n\delta'\log|\X|
 \notag
 \\
 &\leq
 n[I(X;\tZ)-\RhB]
 +
 n\lrB{
 \delta'\log|\X||\tcZ|+\zeta_{\Z|\X}(\gamma|\gamma)
 }
 +h(\delta')+n\delta'\log|\X|
 \notag
 \\
 &\leq
 n\lrB{
 \e+\delta'\log|\X|^2|\tcZ|+\zeta_{\Z|\X}(\gamma|\gamma)
 }
 +h(\delta')
 \notag
 \\
 &<
 n\delta
\end{align}
where
the second inequality comes from (\ref{eq:fano-universal})
and $M=Bg_{AB\hB}(\cc,M,W)$,
the third inequality comes from (\ref{eq:HtXZ-universal}),
the fifth inequality comes from (\ref{eq:uwiretap-RhB}),
and we choose a suitable $\gamma>0$, a suitable $\e>0$, and a suitable
$\delta'>0$ to satisfy the last inequality.
From this inequality, we have (\ref{eq:leakbound-uwiretap}).

\subsection{Proof of Theorem~\ref{thm:ska}}
We use the following lemma which is proved in Appendix.
\begin{lem}
 \label{lem:error-ska}
 Let $g_{AB}(\cc,\mm|\zz)$ be defined as
 \[
 g_{AB}(\cc,\mm|\zz)
 \equiv
 \arg\max_{\xx'\in\C_{AB}(\cc,\mm)}\mu_{X|Z}(\xx'|\zz).
 \]
 Then,
 for any $\delta'>0$, and all sufficiently large $n$,
 there are functions (sparse matrices) $A\in\A$ and $B\in\B$ such that
 \begin{align}
  p_{XYZ}\lrsb{\lrb{
  (\xx,\yy,\zz):
  \begin{aligned}
   &g_{A}(A\xx|\yy)\neq \xx
   \\
   &\text{or}\ g_{AB}(A\xx,B\xx|\zz)\neq \xx
  \end{aligned}
 }}
  \leq
  \delta'.
  \label{eq:error-ska-lemma}
 \end{align}
\end{lem}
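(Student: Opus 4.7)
The plan is to use the probabilistic method: bound the expected joint error probability with respect to the random pair $(A,B)\sim p_{AB}$, and then extract a specific good pair by Markov's inequality. Since the error event in (\ref{eq:error-ska-lemma}) is the union of two events---the ML decoder from $\yy$ with syndrome $A\xx$ fails, or the ``ML decoder'' from $\zz$ with joint syndrome $(A\xx,B\xx)$ fails---it suffices by the union bound to control each separately and then pick the same $(A,B)$ for both.

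First I would restrict attention to jointly typical source outputs. By the AEP (the method-of-types versions are collected in the Appendix), $\mu_{XYZ}(\T_{XYZ,\gamma}^c)\to 0$ as $n\to\infty$ for every fixed $\gamma>0$. On this typical set, for any $(\xx,\yy)\in\T_{XY,\gamma}$ we have $\mu_{X|Y}(\xx|\yy)\geq 2^{-n[H(X|Y)+\zeta]}$ for a $\zeta=\zeta(\gamma)$ that vanishes with $\gamma$, so the confounding set
\[
 \G_1(\xx,\yy)\equiv\lrb{\xx'\in\X^n:\mu_{X|Y}(\xx'|\yy)\geq\mu_{X|Y}(\xx|\yy)}
\]
has cardinality at most $2^{n[H(X|Y)+\zeta]}$. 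The event $g_A(A\xx|\yy)\neq\xx$ can occur only when $[\G_1(\xx,\yy)\setminus\{\xx\}]\cap\C_A(A\xx)\neq\emptyset$, so Lemma~\ref{lem:Anotempty} with $\G=\G_1(\xx,\yy)$ together with the defining choice $\log|\im\A|/n\geq H(X|Y)+\eA-o(1)$ (using (\ref{eq:imA})) gives
\[
 p_A\lrsb{\lrb{A: g_A(A\xx|\yy)\neq\xx}}\leq\frac{|\G_1(\xx,\yy)|\alpha_A}{|\im\A|}+\beta_A\leq 2^{-n[\eA-\zeta]}\alpha_A+\beta_A,
\]
which tends to $0$ as long as $\gamma$ is taken small enough that $\zeta<\eA$.

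For the second event I would proceed identically on $\T_{XZ,\gamma}$, defining
\[
 \G_2(\xx,\zz)\equiv\lrb{\xx'\in\X^n:\mu_{X|Z}(\xx'|\zz)\geq\mu_{X|Z}(\xx|\zz)},
\]
whose cardinality is at most $2^{n[H(X|Z)+\zeta]}$. The event $g_{AB}(A\xx,B\xx|\zz)\neq\xx$ implies $[\G_2(\xx,\zz)\setminus\{\xx\}]\cap\C_{AB}(A\xx,B\xx)\neq\emptyset$, so applying Lemma~\ref{lem:Anotempty} to the joint ensemble $(\bcA\times\bcB,\bp_{AB})$ (whose hash property is a hypothesis of the lemma) and using $\log[|\im\A||\im\B|]/n\geq H(X|Z)+\eA-o(1)$, we obtain
\[
 p_{AB}\lrsb{\lrb{(A,B):g_{AB}(A\xx,B\xx|\zz)\neq\xx}}\leq\frac{|\G_2(\xx,\zz)|\alpha_{AB}}{|\im\A||\im\B|}+\beta_{AB}\leq 2^{-n[\eA-\zeta]}\alpha_{AB}+\beta_{AB}.
\]

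Putting the two bounds together by the union bound, averaging over $(\xx,\yy,\zz)\sim\mu_{XYZ}$, and bounding the atypical contribution by the AEP, the expected value under $p_{AB}$ of the left-hand side of (\ref{eq:error-ska-lemma}) can be made smaller than $\delta'$ for all sufficiently large $n$; Markov's inequality then supplies a concrete pair $(A,B)\in\A\times\B$ meeting (\ref{eq:error-ska-lemma}). The main bookkeeping obstacle I expect is relating the likelihood-based gap $\zeta$ to the method-of-types quantities $\zeta_{\X|\Y}$ and $\eta$ of the Appendix and checking that $\gamma$ can be shrunk enough to make $\zeta<\eA$ while simultaneously absorbing the hash-property slacks $\alpha_{\cdot}(n)-1$ and $\beta_{\cdot}(n)$; this is routine but requires care with the order of limits.
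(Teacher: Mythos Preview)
Your proposal is correct and follows essentially the same route as the paper: restrict to the jointly typical set $\T_{XYZ,\gamma}$, bound each decoding-failure event via Lemma~\ref{lem:Anotempty} applied to the confounding set of likelihood-dominating sequences (for the ensemble $(\bcA,\bp_A)$ in the first case and the joint ensemble $(\bcA\times\bcB,\bp_{AB})$ in the second), and extract a good $(A,B)$ from the bound on the average. The only cosmetic difference is that the paper thresholds $\G(\yy)$ at the fixed level $2^{-n[H(X|Y)+\zeta_{\X|\Y}(\gamma|\gamma)]}$ rather than at $\mu_{X|Y}(\xx|\yy)$, and writes the $|\im\A|$ slack explicitly as $|\X|^{\lA}/|\im\A|$; your version is equivalent on the typical set.
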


Now we prove Theorem \ref{thm:ska}.

First, we prove (\ref{eq:error-ska}).
Since $g_{A}(A\xx|\yy)=\xx$ implies $Bg_A(A\xx|\yy)=B\xx$,
then the inequality (\ref{eq:error-ska}) comes immediately from
Lemma~\ref{lem:error-ska} by letting $\delta'<\delta$.

Next, we prove (\ref{eq:leakage-ska}).
From Lemma~\ref{lem:error-ska} and Fano's inequality, we have
\begin{align*}
 H(X^n|Z^n,C,M)
 &\leq h(\delta')+n\delta'\log|X|
\end{align*}
for all $\delta>0$ and all sufficiently large $n$,
where $h$ is defined by (\ref{eq:def-h}).
This implies that
\begin{align}
 H(Z^n,C,M)
 &\geq H(X^n,Z^n,C,M)
 -h(\delta')-n\delta'\log|X|
 \notag
 \\
 &= H(X^n,Z^n)
 -h(\delta')-n\delta'\log|X|
 \label{eq:fano-ska}
\end{align}
for all $\delta>0$ and all sufficiently large $n$,
where the equality comes from the definitions (\ref{eq:ska-C}) and
(\ref{eq:ska-M}) of $C$ and $M$.
Then we have
\begin{align}
 I(M;Z^n,C)
 &=
 H(Z^n,C)+H(M)-H(Z^n,C,M)
 \notag
 \\
 &\leq
 H(Z^n)+H(C)+H(M)-H(Z^n,C,M)
 \notag
 \\
 &\leq
 H(Z^n)+H(C)+H(M)
 -H(X^n,Z^n)
 +h(\delta')+n\delta'\log|X|
 \notag
 \\
 &\leq
 H(Z^n)+n[H(X|Y)+\eA]+n[H(X|Z)-H(X|Y)]
 -H(X^n,Z^n)
 +h(\delta')+n\delta'\log|X|
 \notag
 \\
 &=
 n\eA+h(\delta')+n\delta'\log|X|
 \notag
 \\
 &<
 n\delta,
\end{align}
where
the second inequality comes from (\ref{eq:fano-ska}), 
the third inequality comes from the definitions (\ref{eq:ska-C})
and (\ref{eq:ska-M}) of $C$ and $M$,
and we choose a suitable $\eA>0$ and a suitable $\delta'>0$ to satisfy
the last inequality.
From this inequality we have (\ref{eq:leakage-ska}).

Finally, we prove (\ref{eq:rate-ska}).
We have
\begin{align}
 H(M)
 &=
 H(M)+H(Z^n,C)-H(Z^n,C)
 \notag
 \\
 &\geq
 H(Z^n,C,M)-H(Z^n)-H(C)
 \notag
 \\
 &\geq
 H(X^n,Z^n)-h(\delta')-n\delta'\log|X|
 -H(Z^n)-n[H(X|Y)+\eA]
 \notag
 \\
 &=
 n[I(X;Z)-I(X;Y)]
 -n\eA-h(\delta')-n\delta'\log|X|
 \notag
 \\
 &\geq
 n[I(X;Z)-I(X;Y)]-n\delta,
\end{align}
where the second inequality comes from (\ref{eq:fano-ska}),
and we choose a suitable $\eA>0$ and a suitable $\delta'>0$ to satisfy
the last inequality.
From this inequality we have (\ref{eq:rate-ska}).

\subsection{Proof of Theorem \ref{thm:uska}}
\label{sec:proof-uska}

We use the following lemmas which are proved in Appendix.
\begin{lem}
 \label{lem:R-uska}
 If  $H(X|Z)\geq R$,
 then for all $\e>0$ there is a random variable  $\tZ$ taking values in 
 $\tcZ\equiv\X\times\Z$ and a function $f$ such that
 \begin{gather*}
  H(X|\tZ)=R-\e
  \\
  Z=f(\tZ).
 \end{gather*}
 \hfill\QED
\end{lem}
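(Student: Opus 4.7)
The plan is to imitate the proof of Lemma~\ref{lem:R} via a continuity/intermediate-value argument. I take $\tZ \equiv (X',Z) \in \X\times\Z$ and $f:(x',z)\mapsto z$, so the requirement $Z=f(\tZ)$ is discharged automatically by projection, and the task reduces to constructing an auxiliary $X'\in\X$ for which $H(X|X',Z)=R-\e$.

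First I would parametrize a family of auxiliary random variables $X'_\alpha$ for $\alpha\in[0,1]$: fix an arbitrary $x_0\in\X$ and let $X'_\alpha=X$ with probability $\alpha$ and $X'_\alpha=x_0$ with probability $1-\alpha$, where the randomization is conditionally independent of $(Y,Z)$ given $X$. At the endpoint $\alpha=0$ the auxiliary is deterministic, hence $H(X|X'_0,Z)=H(X|Z)\geq R>R-\e$; at $\alpha=1$ we have $X'_1=X$, hence $H(X|X'_1,Z)=0\leq R-\e$ provided $\e\leq R$.

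Since the conditional law $p(x'|x;\alpha)$ is a convex combination of two fixed stochastic matrices, the joint distribution of $(X,X'_\alpha,Z)$ varies continuously in $\alpha$, and conditional entropy is a continuous function of the joint distribution. Consequently $\alpha\mapsto H(X|X'_\alpha,Z)$ is continuous on $[0,1]$, and the intermediate value theorem yields some $\alpha^\star\in(0,1)$ with $H(X|X'_{\alpha^\star},Z)=R-\e$. Setting $\tZ\equiv(X'_{\alpha^\star},Z)$ and $f(x',z)\equiv z$ completes the construction.

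The only subtlety is the implicit restriction $\e\leq R$ needed so that the target $R-\e$ lies in the achievable range $[0,H(X|Z)]$; this parallels the implicit constraint $R+\e\leq H(X)$ in Lemma~\ref{lem:R}. No substantial obstacle is anticipated, since once the interpolating family is fixed the argument is a routine continuity/IVT exercise.
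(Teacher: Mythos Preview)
Your proposal is correct and follows essentially the same approach as the paper: set $\tZ=(X',Z)$ with $f$ the projection onto the second coordinate, and obtain $H(X|X',Z)=R-\e$ via continuity and the intermediate value theorem between the endpoints $X'$ independent of $(X,Z)$ (giving $H(X|Z)$) and $X'=X$ (giving $0$). The only cosmetic difference is that you write down an explicit one-parameter interpolating family, whereas the paper appeals abstractly to continuity of $H(X|X',Z)$ over the (convex, hence path-connected) set of conditional distributions $p_{X'|XZ}$; your observation about the implicit restriction $\e\leq R$ is also apt and parallels the paper's implicit constraint in Lemma~\ref{lem:R}.
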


\begin{lem}
 \label{lem:error-uska}
 Let $\tg_{AB}(\cc,\mm|\tzz)$ be defined as
 \[
 \tg_{AB}(\cc,\mm|\tzz)
 \equiv
 \arg\min_{\xx'\in\C_{AB}(\cc,\mm)}H(\xx'|\tzz).
 \]
 Then,
 for any $\delta'>0$, and all sufficiently large $n$,
 there are functions (sparse matrices) $A\in\A$ and $B\in\B$ such that
 \begin{align}
  p_{XY\tZ}\lrsb{\lrb{
  (\xx,\yy,\tzz):
  \begin{aligned}
   &g_{A}(A\xx|\yy)\neq \xx
   \\
   &\text{or}\ g_{AB}(A\xx,B\xx|\tzz)\neq \xx
  \end{aligned}
  }}
  \leq
  \delta'
  \label{eq:error-uska-lemma}
 \end{align}
 for any $\mu_{XY\tZ}$ satisfying
 \begin{gather}
  \RA>H(X|Y)
  \label{eq:RA-uska-lemma}
  \\
  \RA+\RB>H(X|\tZ).
  \label{eq:RAB-uska-lemma}
 \end{gather}
 \hfill\QED
\end{lem}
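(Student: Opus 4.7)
The plan is to decompose the failure event into $\E_1\equiv\{\tg_A(A\xx|\yy)\neq\xx\}$ and $\E_2\equiv\{\tg_{AB}(A\xx,B\xx|\tzz)\neq\xx\}$ (reading the lemma's $g$-symbols as the minimum-entropy decoders of Section~\ref{sec:ska-universal}, consistent with its universal setup), bound the matrix-ensemble-averaged probability of each event by a quantity depending on $\mu_{XY\tZ}$ only through the rate inequalities, and finally extract a single universal pair $(A,B)$ by a Markov-style argument over joint types.

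First restrict to $(\xx,\yy,\tzz)\in\T_{XY\tZ,\gamma}$, whose complement has vanishing $\mu_{XY\tZ}^n$-mass by a standard typical-sequence estimate. On this typical set, the empirical conditional entropies satisfy $H(\xx|\yy)\leq H(X|Y)+\zeta_{\X|\Y}(\gamma|\gamma)$ and $H(\xx|\tzz)\leq H(X|\tZ)+\zeta_{\X|\tcZ}(\gamma|\gamma)$. For $\E_1$, failure of $\tg_A$ on $(A\xx,\yy)$ implies the existence of some $\xx'\in\C_A(A\xx)\setminus\{\xx\}$ with $H(\xx'|\yy)\leq H(\xx|\yy)$. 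Define
\[
\G_1(\yy)\equiv\lrb{\xx':H(\xx'|\yy)\leq H(X|Y)+\zeta_{\X|\Y}(\gamma|\gamma)};
\]
standard conditional-type counting gives $|\G_1(\yy)|\leq 2^{n[H(X|Y)+\eta_\gamma]}$ with $\eta_\gamma\to 0$ as $\gamma\to 0$. Lemma~\ref{lem:Anotempty} applied to $\G_1(\yy)$ and $\uu=\xx$ yields
\[
p_A\lrsb{\lrb{A:[\G_1(\yy)\setminus\{\xx\}]\cap\C_A(A\xx)\neq\emptyset}}\leq \frac{|\G_1(\yy)|\alpha_A}{|\im\A|}+\beta_A,
\]
which vanishes because $\log|\im\A|/n\to\RA>H(X|Y)$ by (\ref{eq:RA-uska-lemma}), for $\gamma$ small enough.

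For $\E_2$ the same argument is applied to the joint ensemble $(\bcA\times\bcB,\bp_{AB})$, which inherits the hash property from Lemma~\ref{lem:hash-linApB}. With
\[
\G_2(\tzz)\equiv\lrb{\xx':H(\xx'|\tzz)\leq H(X|\tZ)+\zeta_{\X|\tcZ}(\gamma|\gamma)}
\]
of size at most $2^{n[H(X|\tZ)+\eta'_\gamma]}$, Lemma~\ref{lem:Anotempty} bounds the matrix-averaged failure probability by a quantity vanishing because $\RA+\RB>H(X|\tZ)$ by (\ref{eq:RAB-uska-lemma}). The main obstacle is universality: one pair $(A,B)$ must handle every $\mu_{XY\tZ}$ satisfying (\ref{eq:RA-uska-lemma})--(\ref{eq:RAB-uska-lemma}). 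The key observation is that $\G_1,\G_2$ are defined purely via empirical entropies and the above ensemble bounds depend on $\mu_{XY\tZ}$ only through the rate conditions; consequently, taking the maximum over the polynomially many joint types of $(\xx,\yy,\tzz)$ compatible with those constraints inflates the matrix-averaged bound only by a subexponential factor of order $(n+1)^{|\X||\Y||\tcZ|}$. Applying Markov's inequality to this uniform-in-$\mu$ bound produces at least one pair $(A,B)\in\A\times\B$ simultaneously making the probability in (\ref{eq:error-uska-lemma}) at most $\delta'$ for every admissible $\mu_{XY\tZ}$, completing the proof.
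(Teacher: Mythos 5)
Your decomposition into the two decoder-failure events, the observation that a failure of the minimum-entropy decoder forces a competitor $\xx'\neq\xx$ in the same coset with $H(\xx'|\yy)\leq H(\xx|\yy)$, and the use of Lemma~\ref{lem:Anotempty} together with the counting bound of Lemma~\ref{lem:typeset-number} all match the paper's proof. The gap is in how you handle universality, and it is not cosmetic. By restricting to the typical set $\T_{XY\tZ,\gamma}$ and defining $\G_1(\yy)$ through the threshold $H(X|Y)+\zeta_{\X|\Y}(\gamma|\gamma)$, you have built the source distribution $\mu_{XY\tZ}$ into both the event decomposition and the competitor set. Consequently your ensemble bound for $\E_1$ has the form $2^{-n[\gamma-\lambda_{\X\Y\tcZ}]}+\alpha_A 2^{-n[\RA-H(X|Y)-\eta_\gamma]}+\beta_A$, which depends on the actual value of $H(X|Y)$ and not merely on the validity of (\ref{eq:RA-uska-lemma}); your claim that the bounds ``depend on $\mu_{XY\tZ}$ only through the rate inequalities'' is therefore false. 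The concluding step then conflates the polynomially many joint types of sequences with the continuum of admissible source distributions: a union bound or Markov argument over types requires a per-type, $\mu$-independent bound on the conditional error probability given the type, which your typical-set truncation never produces --- outside $\T_{XY\tZ,\gamma}$ you have only the crude estimate $2^{-n[\gamma-\lambda_{\X\Y\tcZ}]}$, and the set itself moves with $\mu$.

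The paper avoids this by never introducing typical sets in this lemma (contrast with the non-universal Lemma~\ref{lem:error-ska}, whose proof is essentially what you have transplanted here). It conditions on the exact joint type $UV$ of $(\xx,\yy)$, takes $\G(\yy)=\lrb{\xx':H(\xx'|\yy)\leq H(U|V)}$, obtains the per-sequence bound $\max\lrb{|\X|^{\lA}\alpha_A/|\im\A|,1}2^{-n[|\RA-H(U|V)|^{+}-\lambda_{\X\Y}]}+\beta_A$, and then sums over \emph{all} types using Lemma~\ref{lem:exprob}, so that atypical types are paid for by the divergence term rather than discarded. This yields the exponent $F_{XY}(\RA)=\min_{UV}\lrB{D(\nu_{\xx\yy}\|\mu_{XY})+|\RA-H(U|V)|^{+}}$, whose positivity under (\ref{eq:RA-uska-lemma}) encodes exactly the trade-off (small divergence forces $H(U|V)<\RA$, and vice versa) that your truncation throws away. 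To repair your argument you would need to redo the analysis type-by-type in this manner, at which point you would have reproduced the paper's proof.
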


Now we prove Theorem \ref{thm:uska}.
In the following, we prove
(\ref{eq:leakagebound-uska}) and (\ref{eq:rate-uska}).
The proof of (\ref{eq:error-uska}) is similar to that of 
(\ref{eq:error-uska-lemma}).

First, from Lemma~\ref{lem:R-uska} and (\ref{eq:RAB-uska-lemma}),
there is $\tZ\in\tcZ$ such that
\begin{equation}
 H(X|\tZ)=\RA+\RB-\e,
 \label{eq:uska-RAB}
\end{equation}
where $\e>0$ is specified later.
From Lemma~\ref{lem:error-uska} and Fano's inequality,
we have
\begin{align*}
 H(X^n|\tZ^n,C,M)
 &\leq h(\delta')+n\delta'\log|X|
\end{align*}
for all $\delta>0$ and all sufficiently large $n$,
where $h$ is defined by (\ref{eq:def-h}).
This implies that
\begin{align}
 H(\tZ^n,C,M)
 &\geq H(X^n,\tZ^n)-h(\delta')-n\delta'\log|X|.
 \label{eq:fano-uska}
\end{align}

Next, we prove (\ref{eq:leakagebound-uska}).
We have
\begin{align}
 I(M;Z^n,C)
 &=
 I(M;f(\tZ_1),\ldots,f(\tZ_n),C)
 \notag
 \\
 &\leq
 I(M;\tZ^n,C)
 \notag
 \\
 &\leq
 H(\tZ^n)+H(C)+H(M)-H(\tZ^n,C,M)
 \notag
 \\
 &\leq
 H(\tZ^n)+H(C)+H(M)
 -H(X^n,\tZ^n)
 +h(\delta')+n\delta'\log|X|
 \notag
 \\
 &\leq
 H(\tZ^n)+n\RA+n\RB
 -H(X^n,\tZ^n)
 +h(\delta')+n\delta'\log|X|
 \notag
 \\
 &=
 n\e+h(\delta')+n\delta'\log|X|
 \notag
 \\
 &<
 n\delta,
\end{align}
where
the third inequality comes from (\ref{eq:fano-uska}),
the fourth inequality comes from the definitions
(\ref{eq:ska-C}) and (\ref{eq:ska-M}) of $C$ and $M$,
the second equality comes from (\ref{eq:uska-RAB})
and we choose a suitable $\e>0$ and a suitable $\delta'>0$ to satisfy
the last inequality.
From this inequality, we have (\ref{eq:leakagebound-uska}).

Finally, we prove (\ref{eq:rate-uska}).
We have
\begin{align}
 H(M)
 &\geq
 H(\tZ^n,C,M)-H(\tZ^n)-H(C)
 \notag
 \\
 &\geq
 H(X^n,\tZ^n)-h(\delta')-n\delta'\log|X|
 -H(\tZ^n)-n\RA
 \notag
 \\
 &=
 n\RB-n\e-h(\delta')-n\delta'\log|X|,
 \notag
 \\
 &\geq
 n\RB-n\delta,
\end{align}
where the second inequality comes from (\ref{eq:fano-uska}),
the equality comes from (\ref{eq:uska-RAB}),
and we choose a suitable $\e>0$ and a suitable $\delta'>0$ to satisfy
the last inequality.
From this inequality, we have (\ref{eq:rate-uska}).

\section{Conclusion}
The constructions of codes for the wiretap channel
and secret key agreement from correlated source outputs
were presented.
The optimality, reliability, and security of the codes were proved
and the universal reliability and security were also proved.
The proof of the theorems
is based on the notion of a hash property for an ensemble of functions.
Since an ensemble of sparse matrices has a hash property,
we can construct codes by using sparse matrices
and practical encoding and decoding methods
are expected to be effective.
We believe that our construction can be 
applied to a quantum channel to realize a quantum cryptography.
However, it should be noted that
the security criteria should be revised to the quantum version.

\appendix
\subsection{Proof of Lemmas}
Before the proof of Lemmas \ref{lem:error-wiretap}
and \ref{lem:error-uwiretap}, we prepare the following lemmas.

\begin{lem}[{\cite[Lemma 8]{HASH}}]
 \label{lem:E}
 For any $A$ and $\uu\in\U^n$,
 \begin{equation*}
         p_C\lrsb{\lrb{\cc: A\uu=\cc}}
         =
         \sum_{c}p_C(\cc)\chi(A\uu=\cc)
         =\frac 1{|\im\A|}
 \end{equation*}
 and for any $\uu\in\U^n$
 \begin{equation*}
         E_{AC}\lrB{\chi(A\uu=\cc)}
         =\sum_{A,\cc}p_{AC}(A,\cc)\chi(A\uu=\cc)
         =\frac 1{|\im\A|}.
 \end{equation*}
\end{lem}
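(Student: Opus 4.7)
The plan is to unpack the two definitions in play, namely the uniform distribution $p_C$ on $\im\A$ and the independence $p_{AC}(A,\cc)=p_A(A)p_C(\cc)$, and observe that both equations follow by trivial bookkeeping once we verify that for any $A\in\A$ and $\uu\in\U^n$ the vector $A\uu$ is guaranteed to lie in $\im\A$.

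For the first equation, I would fix $A$ and $\uu$ and note that the indicator $\chi(A\uu=\cc)$ is nonzero for exactly one value of $\cc$, namely $\cc=A\uu$. Since $A\uu\in\{A\uu':\uu'\in\U^n\}\subseteq\im\A$ by the very definition $\im\A\equiv\bigcup_{A'\in\A}\{A'\uu':\uu'\in\U^n\}$, the corresponding term evaluates to $p_C(A\uu)=1/|\im\A|$, so the whole sum collapses to $1/|\im\A|$.

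For the second equation, I would factor $p_{AC}(A,\cc)=p_A(A)p_C(\cc)$ and interchange the order of summation to obtain
\[
\sum_{A,\cc}p_A(A)p_C(\cc)\chi(A\uu=\cc)
=\sum_A p_A(A)\lrB{\sum_{\cc}p_C(\cc)\chi(A\uu=\cc)}.
\]
Applying the first equation term by term inside the brackets yields $1/|\im\A|$ independently of $A$, and then $\sum_A p_A(A)=1$ finishes the computation.

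There is no real obstacle here; the only subtlety worth noting is the containment $A\uu\in\im\A$, which uses the fact that $\im\A$ is defined as a union over all functions in $\A$ rather than with respect to a single $A$. Everything else is direct substitution.
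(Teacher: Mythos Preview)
Your argument is correct. The paper does not actually prove this lemma; it is merely quoted from \cite{HASH}, so there is no in-paper proof to compare against. Your direct computation---noting $A\uu\in\im\A$ so that $p_C(A\uu)=1/|\im\A|$, and then factoring $p_{AC}=p_Ap_C$ and summing---is exactly the intended elementary verification.
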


\begin{lem}[{\cite[Lemma 3]{HASH}}]
 \label{lem:ACnotempty}
 If $(\A,p_A)$ satisfies (\ref{eq:hash}), then
\begin{align*}
 p_{AC}\lrsb{\lrb{
 (A,\cc):
 \begin{aligned}
	&\G\cap\C_A(\cc)\neq \emptyset
	\\
	&\uu\in\C_A(\cc)
 \end{aligned}
 }}
 \leq
 \frac{|\G|\alpha_A}{|\im\A|^2} + \frac{\beta_A}{|\im\A|}
\end{align*}
 for all $\G\subset\U^n$ and all $\uu\notin\G$.
\end{lem}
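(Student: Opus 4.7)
The plan is to unfold the probability as a double sum over $A$ and $\cc$, decouple the two events by summing out $\cc$ first, and then apply Lemma~\ref{lem:Anotempty} to the residual $p_A$-probability. The inequality should fall out as a two-step chain without any delicate estimation, so I expect no serious obstacle.

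Concretely, I would first write
\[
 p_{AC}\lrsb{\lrb{(A,\cc): \G\cap\C_A(\cc)\neq\emptyset,\ \uu\in\C_A(\cc)}}
 =\sum_{A,\cc}p_A(A)p_C(\cc)\,\chi(A\uu=\cc)\,\chi(\G\cap\C_A(\cc)\neq\emptyset),
\]
using the definition $\C_A(\cc)=\{\uu':A\uu'=\cc\}$ to read $\uu\in\C_A(\cc)$ as $A\uu=\cc$. The inner indicator $\chi(A\uu=\cc)$ forces $\cc=A\uu$, so the sum over $\cc$ collapses; by Lemma~\ref{lem:E} (which gives $\sum_\cc p_C(\cc)\chi(A\uu=\cc)=1/|\im\A|$), the expression reduces to
\[
 \frac{1}{|\im\A|}\sum_{A}p_A(A)\,\chi\bigl(\G\cap\C_A(A\uu)\neq\emptyset\bigr)
 =\frac{1}{|\im\A|}\,p_A\lrsb{\lrb{A:\G\cap\C_A(A\uu)\neq\emptyset}}.
\]

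Next, I would use the hypothesis $\uu\notin\G$ to rewrite $\G=\G\setminus\{\uu\}$, so that the set of $A$ under consideration is exactly $\{A:[\G\setminus\{\uu\}]\cap\C_A(A\uu)\neq\emptyset\}$. Lemma~\ref{lem:Anotempty} then applies directly and yields
\[
 p_A\lrsb{\lrb{A:[\G\setminus\{\uu\}]\cap\C_A(A\uu)\neq\emptyset}}
 \leq \frac{|\G|\alpha_A}{|\im\A|}+\beta_A.
\]
Multiplying by $1/|\im\A|$ gives the claimed bound $\dfrac{|\G|\alpha_A}{|\im\A|^2}+\dfrac{\beta_A}{|\im\A|}$.

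The only point requiring care is the justification that the assumption $\uu\notin\G$ is exactly what lets us absorb the event $\uu\in\C_A(\cc)$ into the hypothesis of Lemma~\ref{lem:Anotempty}: without it one would get an extra term corresponding to $\uu'=\uu$ in the collision count. Aside from that bookkeeping, the proof is a mechanical two-step reduction and no nontrivial obstacle is anticipated.
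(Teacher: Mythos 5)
Your proof is correct and is essentially the standard derivation (the paper itself imports this lemma from \cite{HASH} without reproducing the proof): reading $\uu\in\C_A(\cc)$ as $A\uu=\cc$, collapsing the sum over $\cc$ via Lemma~\ref{lem:E} to pick up the factor $1/|\im\A|$, and then invoking Lemma~\ref{lem:Anotempty} after noting that $\uu\notin\G$ makes $\G=\G\setminus\{\uu\}$. No gaps.
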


\begin{lem}
 \label{lem:T}
 Assume that $\ehB>\eA>0$.
 For $\bbeta_A$ satisfying $\limn\beta_A(n)=0$ and any $\gamma>0$,
 there is a sequence $\kkappa\equiv\{\kappa(n)\}_{n=1}^{\infty}$ 
 and $\T\subset\T_{U}\subset\T_{X,\gamma}$ such that
 \begin{gather}
  \limn\kappa(n)=\infty
  \label{eq:k1}
  \\
  \limn \kappa(n)\beta_A(n)=0
  \label{eq:k2}
  \\
  \limn\frac{\log\kappa(n)}n=0
  \label{eq:k3}
 \end{gather}
 and
 \begin{gather}
  \gamma \geq D(\nu_{U}\|\mu_{X})
  \label{eq:wiretap-gamma}
  \\
  \kappa
  \leq
  \frac{|\T|}{|\im\A||\im\B||\im\hcB|}
  \leq 
  2\kappa
  \label{eq:T}
 \end{gather}
 for all sufficiently large $n$, where
 $U$ is defined as
 \begin{align}
  U\equiv\arg\min_{U'}D(\nu_{U'}\|\mu_{X}).
  \label{eq:U}
 \end{align}
 In the following, $\kappa$ denotes $\kappa(n)$.
\end{lem}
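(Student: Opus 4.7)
The plan is to first pick a subexponential sequence $\kappa(n)$ that satisfies the three decay conditions (\ref{eq:k1})--(\ref{eq:k3}), and then to exhibit a subset $\T$ of the type class $\T_U$ whose cardinality lies in the window specified by (\ref{eq:T}). The whole argument rests on the exponential gap between $|\T_U|\approx 2^{nH(X)}$ and the bound $|\im\A||\im\B||\im\hcB|\leq|\X|^{\lA+\lB+\lhB}=2^{n[H(X)+\eA-\ehB]}$, which is strictly smaller because $\ehB>\eA$.

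First I would set
\[
 \kappa(n)\equiv\min\bigl\{\lfloor 1/\sqrt{\beta_A(n)}\rfloor,\lfloor\sqrt{n}\rfloor\bigr\}
\]
(using the convention $1/0=\infty$ so that only the second argument is active when $\beta_A(n)=0$). Since $\beta_A(n)\to 0$, both arguments of the minimum tend to infinity, giving $\kappa(n)\to\infty$; the first bound gives $\kappa(n)\beta_A(n)\leq\sqrt{\beta_A(n)}\to 0$; and the second gives $\log\kappa(n)/n\leq(\log n)/(2n)\to 0$. This establishes (\ref{eq:k1})--(\ref{eq:k3}).

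Next I would handle the type class. Defining $U$ as in (\ref{eq:U}) and rounding $\mu_X$ to multiples of $1/n$ within its support produces a type $\nu_{U'}$ with $D(\nu_{U'}\|\mu_X)=O(1/n)$, hence $D(\nu_U\|\mu_X)\to 0$ as well, which gives (\ref{eq:wiretap-gamma}) for sufficiently large $n$ together with $\T_U\subset\T_{X,\gamma}$. By continuity of entropy this also forces $H(U)=H(X)-o(1)$, and the standard type-size lower bound yields $|\T_U|\geq 2^{nH(U)}/[n+1]^{|\X|}$. Combined with
\[
 |\im\A||\im\B||\im\hcB|\ \leq\ |\X|^{\lA+\lB+\lhB}\ =\ 2^{n[H(X)+\eA-\ehB]},
\]
this shows that
\[
 \frac{|\T_U|}{|\im\A||\im\B||\im\hcB|}\ \geq\ \frac{2^{n[\ehB-\eA-o(1)]}}{[n+1]^{|\X|}},
\]
which tends to infinity exponentially fast because $\ehB-\eA>0$. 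Since $2\kappa(n)$ is at most $O(\sqrt{n})$, the closed interval
\[
 \bigl[\,\kappa(n)|\im\A||\im\B||\im\hcB|,\ 2\kappa(n)|\im\A||\im\B||\im\hcB|\,\bigr]
\]
is contained in $[0,|\T_U|]$ for all sufficiently large $n$ and has length at least $1$, so it contains an integer $N$; picking any $N$-element subset $\T\subset\T_U$ yields (\ref{eq:T}).

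The only genuine point of care is the coordination between the three properties demanded of $\kappa$ and the integer-existence requirement on the window: $\kappa$ must grow fast enough to dominate $\beta_A^{-1/2}$ but stay subexponential so that the window fits inside $\T_U$. Once the strict inequality $\ehB>\eA$ is used to generate an exponential buffer between $|\T_U|$ and the denominator, these constraints become compatible and the rest is routine bookkeeping with the method of types.
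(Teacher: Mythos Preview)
Your argument is correct and follows essentially the same route as the paper: choose a subexponential sequence $\kappa(n)$ satisfying (\ref{eq:k1})--(\ref{eq:k3}), invoke the $O(1/n)$ approximability of $\mu_X$ by a type to get (\ref{eq:wiretap-gamma}) and $H(U)=H(X)-o(1)$, and then use the strict gap $\ehB>\eA$ together with the type-class lower bound $|\T_U|\geq 2^{n[H(U)-\lambda_{\X}]}$ to show $|\T_U|\geq 2\kappa\,|\im\A||\im\B||\im\hcB|$, whence a subset $\T$ of the right size exists. The only cosmetic difference is your choice $\kappa(n)=\min\{\lfloor\beta_A(n)^{-1/2}\rfloor,\lfloor\sqrt{n}\rfloor\}$, which cleanly handles all three conditions at once, whereas the paper uses a case split ($\kappa(n)=n^{\xi}$ when $\beta_A(n)=o(n^{-\xi})$, else $\kappa(n)=\beta_A(n)^{-1/2}$); your version is simpler, while the paper's leaves the exponent $\xi$ free for later use in sharpening the error bounds. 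One small slip in your closing commentary: you write that $\kappa$ must ``grow fast enough to dominate $\beta_A^{-1/2}$,'' but in fact your construction (correctly) keeps $\kappa\leq\beta_A^{-1/2}$ so that $\kappa\beta_A\leq\sqrt{\beta_A}\to 0$.
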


\begin{proof}
 Let
 \begin{equation}
  \kappa(n)\equiv
   \begin{cases}
    n^{\xi}
    &\text{if}\ \beta_A(n)=o\lrsb{n^{-\xi}}, \xi>0
    \\
    \frac 1{\sqrt{\beta_A(n)}},
    &\text{otherwise}
   \end{cases}
   \label{eq:kappa}
 \end{equation}
 for every $n$.
 It is clear that $\kkappa$ satisfies (\ref{eq:k1}) and (\ref{eq:k2}).
 It is also clear that $\kkappa$ satisfies (\ref{eq:k3})
 when $\beta_A(n)=o\lrsb{n^{-\xi}}$, $\xi>0$.
 If $\beta_A(n)$ is not $o\lrsb{n^{-\xi}}$,
 there is $\kappa'>0$ such that
 $\beta_A(n)n^{\xi}>\kappa'$
 and
 \begin{align}
  \frac{\log\kappa(n)}n
  &=
  \frac{\log\frac 1{\beta_A(n)}}{2n}
  \notag
  \\
  &\leq \frac{\log\frac{n^{\xi}}{\kappa'}}{2n}
  \notag
  \\
  &=\frac{\xi\log n-\log\kappa'}{2n}
 \end{align}
 for all sufficiently large $n$.
 This implies that $\kkappa$ satisfies (\ref{eq:k3}).
 The inequality (\ref{eq:wiretap-gamma}) comes from Lemma \ref{lem:vd}.
 From Lemma \ref{lem:vd} and $\ehB>\eA>0$, we have
 \begin{equation}
  \RA+\RB+\RhB+\frac{\log\kappa}n
   \leq H(U)-\lambda_{\X}
   \leq H(X)
 \end{equation}
 for all sufficiently large $n$.
 Then we have
 \begin{align}
  |\T_{X,\gamma}|
  &\geq
  |\T_{U}|
  \notag
  \\
  &\geq 2^{n[H(U)-\lambda_{\X}]},
  \notag
  \\
  &\geq
  \kappa
  2^{n[\RA+\RB+\RhB]}
  \notag
  \\
  &=
  \kappa
  |\im\A||\im\B||\im\hcB|
 \end{align}
 for all sufficiently large $n$,
 where the first inequality comes from (\ref{eq:wiretap-gamma}).
 This implies that
 there is $\T\subset\T_{U}\subset\T_{X,\gamma}$ such that
 \begin{align}
  \kappa
  &\leq
  \frac{|\T|}{|\im\A||\im\B||\im\hcB|}
  \leq 
  2\kappa
  \label{eq:wiretap-T}
 \end{align}
 for all sufficiently large $n$.
\end{proof}

\begin{rem}
 It should be noted that
 we can let $\xi$ be arbitrarily large in (\ref{eq:kappa})
 when $\beta_A(n)$ vanishes exponentially fast.
 This parameter $\xi$ affects the upper bound of
 (\ref{eq:error-wiretap-lemma}) and (\ref{eq:error-uwiretap-lemma}).
 \hfill\QED
\end{rem}

\subsection{Proof of Lemma \ref{lem:error-wiretap}}
\label{sec:lemma-wiretap}
In the following, we assume that $\eA$, $\ehB$, and $\gamma>0$ satisfy
\begin{gather}
 \ehB>\eA>
 \max\lrb{\zeta_{\X|\Y}(2\gamma|2\gamma),\zeta_{\X|\Z}(2\gamma|2\gamma)}.
 \label{eq:wiretap-ebea}
\end{gather}
Let $\kkappa\equiv\{\kappa(n)\}_{n=1}^{\infty}$ be a sequence satisfying
(\ref{eq:k1})--(\ref{eq:k3}), and $U$ be defined by (\ref{eq:U}).
Then (\ref{eq:wiretap-gamma}) is satisfied for all $\gamma>0$
and all sufficiently large $n$.
From Lemma \ref{lem:T},
there is $\T\subset\T_{U}\subset\T_{X,\gamma}$ satisfying (\ref{eq:T}).

Let $\xx$ an input of the channel,
and $\yy$ and $\zz$ be the channel outputs
of the receiver and the eavesdropper, respectively.
Let $\mm$ be a message and $\ww$ be a random sequence.
We define
\begin{align}
 &\bullet g_{AB\hB}(\cc,\mm,\ww)\in\T\subset\T_{X,\gamma}
 \tag{W1}
 \\
 &\bullet \yy\in\T_{Y|X,\gamma}(g_{AB\hB}(\cc,\mm,\ww))
 \tag{W2}
 \\
 &\bullet \zz\in\T_{Z|X,\gamma}(g_{AB\hB}(\cc,\mm,\ww))
 \tag{W3}
 \\
 &\bullet g_{A}(\cc|\yy)=g_{AB\hB}(\cc,\mm,\ww)
 \tag{W4}
 \\
 &\bullet g_{AB}(\cc,\mm|\zz)=g_{AB\hB}(\cc,\mm,\ww).
 \tag{W5}
\end{align}
Then the left hand side of (\ref{eq:error-wiretap-lemma}) is upper bounded by
\begin{align}
 \begin{split}
  &p_{MWYZ}\lrsb{\lrb{
  (\mm,\ww,\yy,\zz):
  \begin{aligned}
   &g_{AB\hB}(\cc,\mm,\ww)\notin\T_{X,\gamma}
   \\
   &\text{or}\ \zz\notin\T_{Z|X,\gamma}(g_{AB\hB}(\cc,\mm,\ww))
  \\
  &\text{or}\  g_{AB\hB}(\cc,\mm,\ww)\neq g_{A}(\cc|\yy)
  \\
  &\text{or}\  g_{AB\hB}(\cc,\mm,\ww)\neq g_{AB}(\cc,\mm|\zz)
  \end{aligned}
  }}
  \\
  &\leq
  p_{MWYZ}(\cS_1^c)
  +p_{MWYZ}(\cS_1\cap\cS_2^c)
  +p_{MWYZ}(\cS_1\cap\cS_3^c)
  +p_{MWYZ}(\cS_1\cap\cS_2\cap\cS_4^c)
  \\*
  &\quad
  +p_{MWYZ}(\cS_1\cap\cS_3\cap\cS_5^c),
 \end{split}
 \label{eq:wiretap-error0}
\end{align}
where
\begin{align*}
 \cS_i
 &\equiv
 \lrb{
 (\mm,\ww,\yy,\zz): \text{(W$i$)}
 }.
\end{align*}

First, we evaluate
$E_{AB\hB C}\lrB{p_{MWYZ}(\cS_1^c)}$.
From Lemma \ref{lem:saturating} and (\ref{eq:wiretap-T}),
we have
\begin{align}
 E_{AB\hB C}\lrB{p_{MWYZ}(\cS_1^c)}
 &=
 p_{AB\hB CMW}\lrsb{\lrb{
 (A,B,\hB, \cc,\mm,\ww):
 g_{AB\hB}(\cc,\mm,\ww)\notin\T
 }}
 \notag
 \\
 &\leq
 p_{AB\hB CMW}\lrsb{\lrb{
 (A,B,\hB,\cc,\mm,\ww):
 \T\cap\C_{AB\hB}(\cc,\mm,\ww)=\emptyset
 }}
 \notag
 \\
  &\leq
  \alpha_{AB\hB}-1
  +\frac{|\im\A||\im\B||\im\hcB|\lrB{\beta_{AB\hB}+1}}{|\T|}
 \notag
 \\
 &\leq
 \alpha_{AB\hB}-1
 +\frac{\beta_{AB\hB}+1}{\kappa}
 \notag
 \\
 &\leq
 \frac {\delta'}5
 \label{eq:wiretap-error1}
\end{align}
for all $\delta'>0$ and sufficiently large $n$,
where the last inequality comes from (\ref{eq:k1})
and the properties (\ref{eq:alpha}) and (\ref{eq:beta})
of an ensemble $(\bcA\times\bcB\times\bchB,\bp_{AB\hB})$.

Next, we evaluate
$E_{AB\hB C}\lrB{p_{MWYZ}(\cS_1\cap\cS_2^c)}$ and
$E_{AB\hB C}\lrB{p_{MWYZ}(\cS_1\cap\cS_3^c)}$.
From Lemma \ref{lem:typical-prob}, we have
\begin{align}
 E_{AB\hB C}\lrB{p_{MWYZ}(\cS_1\cap\cS_2^c)}
 &\leq
 \frac {\delta'}5
 \label{eq:w-error2}
 \\
 E_{AB\hB C}\lrB{p_{MWYZ}(\cS_1\cap\cS_3^c)}
 &\leq
 \frac {\delta'}5
 \label{eq:w-error4}
\end{align}
for all $\delta'>0$ and sufficiently large $n$.

Next,
we evaluate
$E_{AB\hB C}\lrB{p_{MWYZ}(\cS_1\cap\cS_2\cap\cS_4^c)}$ and
$E_{AB\hB C}\lrB{p_{MWYZ}(\cS_1\cap\cS_3\cap\cS_5^c)}$.
In the following, we assume that
\begin{align*}
 &\bullet \xx\in\T\subset\T_{X,\gamma}
 \\
 &\bullet \yy\in\T_{Y|X,\gamma}(\xx)
 \\
 &\bullet g_{A}(\cc|\yy)\neq\xx.
\end{align*}
From Lemma \ref{lem:typical-trans}, we have
$(\xx,\yy)\in\T_{XY,2\gamma}$ and $\xx\in\T_{X|Y,2\gamma}(\yy)$.
Then there is $\xx'\in\C_A(\cc)$ such that
$\xx'\neq\xx$ and
\begin{align*}
 \mu_{X|Y}(\xx'|\yy)
 &\geq
 \mu_{X|Y}(\xx|\yy)
 \\
 &\geq
 2^{-n[H(X|Y)+\zeta_{\X|\Y}(2\gamma|2\gamma)]},
\end{align*}
where the second inequality comes from Lemma \ref{lem:typical-prob}.
This implies that $[\G(\yy)\setminus\{\xx\}]\cap\C_A(\cc)\neq\emptyset$,
where
\[
 \G(\yy)\equiv\lrb{
 \xx': \mu_{X|Y}(\xx'|\yy)\geq 2^{-[H(X|Y)+\zeta_{\X|\Y}(2\gamma|2\gamma)]}
 }.
\]
Then we have
\begin{align}
 &E_{AB\hB C}\lrB{
 p_{MWYZ}(\cS_1\cap\cS_2\cap\cS_4^c)
 }
 \notag
 \\*
 &\leq
 E_{AB\hB CMW}\left[
 \sum_{\xx\in\T}
 \chi(g_{AB\hB}(C,M,W)=\xx)
 \sum_{\yy\in\T_{Y|X,\gamma}(\xx)}
 \mu_{Y|X}(\yy|\xx)\chi(g_{A}(C|\yy)\neq \xx)
 \right]
 \notag
 \\
 &\leq
 E_{AB\hB CMW}\left[
 \sum_{\xx\in\T}
 \chi(A\xx=C)\chi(B\xx=M)\chi(\hB\xx=W)
 \sum_{\yy\in\T_{Y|X,\gamma}(\xx)}
 \mu_{Y|X}(\yy|\xx)
 \chi(g_{A}(C|\yy)\neq \xx)
 \right]
 \notag
 \\
 &=
 \sum_{\xx\in\T}
 \sum_{\yy\in\T_{Y|X,\gamma}(\xx)}
 \mu_{Y|X}(\yy|\xx)
 E_{AC}\left[
 \chi(g_{A}(C|\yy)\neq \xx)
 \chi(A\xx=C)
 E_{B\hB MW}\lrB{
 \chi(B\xx=M)
 \chi(\hB\xx=W)
 }
 \right]
 \notag
 \\
 &=
 \frac 1{|\im\B||\im\hcB|}
 \sum_{\xx\in\T}
 \sum_{\yy\in\T_{Y|X,\gamma}(\xx)}
 \mu_{Y|X}(\yy|\xx)
 E_{AC}\left[
 \chi(g_{A}(C|\yy)\neq \xx)
 \chi(A\xx=C)
 \right]
 \notag
 \\
 &\leq
 \frac 1{|\im\B||\im\hcB|}
 \sum_{\xx\in\T}
 \sum_{\yy\in\T_{Y|X,\gamma}(\xx)}
 \mu_{Y|X}(\yy|\xx)
 p_{AC}\lrsb{\lrb{
 (A,\cc):
 \begin{aligned}
  &\lrB{\G(\yy)\setminus\{\xx\}}\cap\C_A(\cc)\neq\emptyset
  \\
  &\xx\in\C_A(\cc)
 \end{aligned}
 }}
 \notag
 \\
 &\leq
 \frac 1{|\im\B||\im\hcB|}
 \sum_{\xx\in\T}
 \sum_{\yy\in\T_{Y|X,\gamma}(\xx)}
 \mu_{Y|X}(\yy|\xx)
 \lrB{
 \frac{2^{n[H(X|Y)+\zeta_{\X|\Y}(2\gamma|2\gamma)]}\alpha_A}
 {|\im\A|^2}
 +\frac{\beta_A}{|\im\A|}
 }
 \notag
 \\
 &\leq
 \lrB{
 \frac{2^{n[H(X|Y)+\zeta_{\X|\Y}(2\gamma|2\gamma)]}\alpha_A}
 {|\im\A|}
 +\beta_A
 }
 \sum_{\xx\in\T}
 \frac 1{|\im\A||\im\B||\im\hcB|}
 \notag
 \\
 &\leq
 \frac{
 2\kappa
 |\X|^{\lA}
 2^{-n[\eA-\zeta_{\X|\Y}(2\gamma|2\gamma)]}\alpha_A
 }
 {|\im\A|}
 +2\kappa\beta_A
 \notag
 \\
 &\leq
 \frac{\delta'}5
 \label{eq:wiretap-error3}
\end{align}
for all $\delta'>0$ and sufficiently large $n$,
where the second inequality comes from Lemma~\ref{lem:E},
the fourth inequality comes from Lemma~\ref{lem:ACnotempty}
and the fact that
\[
 |\G(\yy)|\leq 2^{n[H(X|Y)+\zeta_{\X|\Y}(2\gamma|2\gamma)]},
\]
the sixth inequality comes from (\ref{eq:wiretap-T}),
and the last inequality comes from (\ref{eq:k2}), (\ref{eq:wiretap-ebea})
and the properties (\ref{eq:imA})--(\ref{eq:beta})
of an ensemble $(\bcA,\bp_A)$.
Similarly, we have
\begin{align}
 E_{AB\hB C}\lrB{
 p_{MWYZ}(\cS_1\cap\cS_3\cap\cS_5^c)
 }
 &\leq
 \frac{
 2\kappa|\X|^{\lA+\lB}
 2^{-n[\eA-\zeta_{\X|\Z}(2\gamma|2\gamma)]}\alpha_{AB}
 }
 {|\im\A||\im\B|}
 +2\kappa\beta_{AB}
 \notag
 \\
 &\leq
 \frac{\delta'}5
 \label{eq:wiretap-error5}
\end{align}
for all $\delta'>0$ and sufficiently large $n$.

Finally, from (\ref{eq:wiretap-error0})--(\ref{eq:wiretap-error5}),
we have the fact that
for all $\delta'>0$ and sufficiently large $n$
there are $A\in\A$, $B\in\B$, $\hB\in\hcB$, and $\cc\in\im\A$ such that
\begin{align*}
 p_{MWYZ}\lrsb{\lrb{
 (\mm,\ww,\yy,\zz):
 \begin{aligned}
  &g_{AB\hB}(\cc,\mm,\ww)\notin\T_{X,\gamma}
  \\
  &\text{or}\ \zz\notin\T_{Z|X,\gamma}(g_{AB\hB}(\cc,\mm,\ww))
  \\
  &\text{or}\ g_{AB\hB}(\cc,\mm,\ww)\neq g_{A}(\cc|\yy)
  \\
  &\text{or}\ g_{AB\hB}(\cc,\mm,\ww)\neq g_{AB}(\cc,\mm|\zz)
 \end{aligned}
 }}
 &\leq
 \delta'.
\end{align*}
\hfill\QED

\subsection{Proof of Lemma \ref{lem:R}}
\label{sec:lemma-uwiretap}
The following proof is based on \cite[Lemma 1]{CRYPTLDPC}.
If there is a random variable $X'$ taking values in $\X$ such
that
\begin{equation}
 H(X|X',Z)=R'
  \label{eq:HXctZR}
\end{equation}
for given $(X,Z)$ and $0\leq R'\leq H(X|Z)$
the lemma is proved by letting
\begin{align*}
 R'&\equiv H(X)-R-\e\leq H(X|Z)
 \\
 \tZ&\equiv(X',Z)
 \\
 f(\tz)&\equiv z\ \text{for}\ \tz=(x',z)
\end{align*}
because 
\begin{align}
 I(X;\tZ)
 &=H(X)-H(X|\tZ)
 \notag
 \\
 &=H(X)-R'
 \notag
 \\
 &=R+\e.
\end{align}

The following proves the existence of $X'$
satisfying (\ref{eq:HXctZR}).
It is clear that  $0\leq H(X|X',Z)\leq H(X|Z)$ 
for any $(X,X',Z)$,
$H(X|X',Z)=H(X|Z)$
when $X'$ is independent of $(X,Z)$,
and $H(X|X',Z)=0$ when $X'=X$.
Since $H(X|X',Z)$ is a continuous function
of the conditional distribution $p_{X'|XZ}$,
we have the existence of $p_{X'|XZ}$ satisfying
$H(X|X',Z)=R'$
from the intermediate value theorem,
where $p_{XX'Z}$ is given by
\[
p_{XX'Z}(x,x',v)\equiv \mu_{XZ}(x,z)p_{X'|XZ}(x'|x,z)
\]
for $(x,x',z)\in\X\times\X\times\Z$.
\hfill\qed

\subsection{Proof of Lemma \ref{lem:error-uwiretap}}
Let $\kkappa\equiv\{\kappa(n)\}_{n=1}^{\infty}$ be a sequence satisfying
(\ref{eq:k1})--(\ref{eq:k3}).
Let $U$ be defined by (\ref{eq:U}).
Then (\ref{eq:wiretap-gamma}) is satisfied for all $\gamma>0$
and all sufficiently large $n$.
From Lemma \ref{lem:T}, there is $\T\subset\T_{U}\subset\T_{X,\gamma}$
satisfying (\ref{eq:T}).

We define
\begin{align}
 &\bullet \tg_{AB}(\cc,\mm,\ww)\in\T\subset\T_U\subset\T_{X,\gamma}
 \tag{UW1}
 \\
 &\bullet \tzz\notin\T_{\tZ|X,\gamma}(\tg_{AB\hB}(\cc,\mm,\ww))
 \tag{UW2}
 \\
 &\bullet \tg_{A}(\cc|\yy)=\tg_{AB\hB}(\cc,\mm,\ww)
 \tag{UW3}
 \\
 &\bullet \tg_{AB}(\cc,\mm|\tzz)=\tg_{AB\hB}(\cc,\mm,\ww),
 \tag{UW4}
\end{align}
where we assume that $n$ is large enough to satisfy
$\T_{\tZ|X,\gamma}(\xx)\neq\emptyset$ for all $\xx\in\T_{X,\gamma}$.
Then the left hand side of (\ref{eq:error-uwiretap}) is upper bounded by
\begin{align}
 \begin{split}
  &p_{MWY\tZ}\lrsb{\lrb{
  (\mm,\ww,\yy,\tzz):
  \begin{aligned}
   &\tg_{AB}(\cc,\mm,\ww)\notin\T_{X,\gamma}
   \\
   &\text{or}\ \tzz\notin\T_{\tZ|X,\gamma}(\tg_{AB\hB}(\cc,\mm,\ww))
   \\
   &\text{or}\ \tg_{AB\hB}(\cc,\mm,\ww)\neq\tg_{A}(\cc|\yy)
   \\
   &\text{or}\ \tg_{AB\hB}(\cc,\mm,\ww)\neq\tg_{AB}(\cc,\mm|\tzz)
  \end{aligned}
 }}
 \\*
 &\leq
 p_{MWY\tZ}(\cS_1^c)
 +p_{MWY\tZ}(\cS_1\cap\cS_2^c)
 +p_{MWY\tZ}(\cS_1\cap\cS_3^c)
 +p_{MWY\tZ}(\cS_1\cap\cS_4^c),
 \end{split}
 \label{eq:uwiretap-error0}
\end{align}
where
\begin{align*}
 \cS_i
 &\equiv
 \lrb{
 (\mm,\ww,\yy,\tzz): \text{(UW$i$)}
 }.
\end{align*}

First, we evaluate 
$E_{AB\hB C}\lrB{p_{MWY\tZ}(\cS_1^c)}$.
Similarly to the proof of (\ref{eq:wiretap-error1}),
we have
\begin{align}
 E_{AB\hB C}\lrB{p_{MWY\tZ}(\cS_1^c)}
 &\leq 
 \alpha_{AB\hB}-1
 +
 \frac{\beta_{AB\hB}+1}{\kappa}.
 \label{eq:uwiretap-error1}
\end{align}

Next, we evaluate $E_{AB\hB C}\lrB{p_{MWY\tZ}(\cS_1\cap\cS_2^c)}$.
From Lemma~\ref{lem:typical-prob}, we have
\begin{align}
 E_{AB\hB C}\lrB{p_{MWY\tZ}(\cS_1\cap\cS_2^c)}
 &\leq
 2^{-n[\gamma-\lambda_{\X\tcZ}]}.
 \label{eq:uwiretap-error2}
\end{align}

Next, we evaluate
$E_{AB\hB C}\lrB{p_{MWY\tZ}(\cS_1\cap\cS_3^c)}$.
Let
\[
 \G(\yy)\equiv\{\xx': H(\xx'|\yy)\leq H(U|V)\}
\]
and assume that $(\xx,\yy)\in\T_{UV}$.
Then we have
\begin{align}
 E_{AC}\lrB{
 \chi(A\xx=C)\chi(\tg_A(C|\yy)\neq \xx)
 }
 &=
 p_{AC}\lrsb{\lrb{
 (A,\cc):
 \begin{aligned}
  &A\xx=\cc
  \\
  &\exists\xx'\neq\xx\ \text{s.t.}
  \\
  &H(\xx'|\yy)\leq H(\xx|\yy)\ \text{and}\ A\xx'=\cc
 \end{aligned}
 }}
 \notag
 \\
 &=
 p_{A}\lrsb{\lrb{
 A:
 \begin{aligned}
  &\exists\xx'\neq\xx\ \text{s.t.}
  \\
  &H(\xx'|\yy)\leq H(\xx|\yy)\ \text{and}\ A\xx'=A\xx
 \end{aligned}
 }}
 p_C\lrsb{\lrb{
 \cc: A\xx=\cc
 }}
 \notag
 \\
 &=
 \frac 1{|\im\A|}
 p_{A}\lrsb{\lrb{
 A:
 \begin{aligned}
  &\exists\xx'\neq\xx\ \text{s.t.}\ H(\xx'|\yy)\leq H(U|V)
  \\
  &\text{and}\ A\xx'=A\xx
 \end{aligned}
 }}
 \notag
 \\
 &\leq
 \frac 1{|\im\A|}
 \max\lrb{
 \sum_{\xx'\in\G(\yy)\setminus\{\xx\}}
 p_A\lrsb{\lrb{A: A\xx=A\xx'}},
 1
 }
 \notag
 \\
 &\leq
 \frac 1{|\im\A|}
 \max\lrb{
 \frac{2^{n[H(U|V)+\lambda_{\X\Y}]}\alpha_A}
 {|\im\A|}
 +\beta_A,
 1
 }
 \notag
 \\
 &=
 \frac 1{|\im\A|}
 \max\lrb{
 2^{-n[\RA-H(U|V)-\lambda_{\X\Y}]}\alpha_A
 +\beta_A,
 1
 }
 \notag
 \\
 &\leq
 \frac 1{|\im\A|}
 \lrB{
 \max\lrb{\alpha_A,1}
 2^{-n[|\RA-H(U|V)|^+
 -\lambda_{\X\Y}]}
 +\beta_A
 },
 \label{eq:uwiretap-error3sub}
\end{align}
where $|\cdot|^+$ is defined by (\ref{eq:def-plus}),
the third equality comes from Lemma \ref{lem:E},
and the second inequality comes
from Lemma \ref{lem:typeset-number} and the property (\ref{eq:hash})
of $(\A,p_A)$.
Let
\[
F_{Y|X}(R)\equiv
\min_{V|U}\lrB{D(\nu_{V|U}\|\mu_{Y|X}|\nu_U)+|R-H(U|V)|^+},
\]
where $V|U$ denotes the conditional type given type $U$.
Then we have
\begin{align}
 &E_{AB\hB C}\lrB{p_{MWY\tZ}(\cS_1\cap\cS_3^c)}
 \notag
 \\*
 &\leq
 E_{AB\hB CMW}\lrB{
 \sum_{\xx\in\T}
 \sum_{\yy}\mu_{Y|X}(\yy|\xx)
 \chi(g_{AB}(\cc,\mm,\ww)=\xx)\chi(\tg_A(\cc|\yy)\neq\xx)
 }
 \notag
 \\*
 &=
 E_{AB\hB CMW}\left[
 \sum_{\xx\in\T}\sum_{V|U}
 \sum_{\yy\in\T_{V|U}(\xx)}
 \mu_{Y|X}(\yy|\xx)
 \chi(\tg_{AB}(C,M)=\xx)\chi(\tg_A(C|\yy)\neq \xx)
 \right]
 \notag
 \\
 &\leq
 E_{AB\hB CMW}\left[
 \sum_{\xx\in\T}\sum_{V|U}
 \sum_{\yy\in\T_{V|U}(\xx)}
 \mu_{Y|X}(\yy|\xx)
 \chi(A\xx=C)\chi(B\xx=M)\chi(\hB\xx=W)
 \chi(g_A(C|\yy)\neq \xx)
 \right]
 \notag
 \\
 &=
 \sum_{\xx\in\T}
 \sum_{V|U}\sum_{\yy\in\T_{V|U}(\xx)}
 \mu_{Y|X}(\yy|\xx)
 E_{AC}\left[
 \chi(A\xx=C)\chi(g_A(C|\yy)\neq \xx)
 \right]
 E_{B\hB MW}\lrB{
 \chi(B\xx=M)\chi(\hB\xx=W)
 }
 \notag
 \\
 &
 \leq
 \frac 1{|\im\A||\im\B||\im\hcB|}
 \sum_{\xx\in\T}
 \sum_{V|U}
 \sum_{\yy\in\T_{V|U}(\xx)}
 \mu_{Y|X}(\yy|\xx)
 \lrB{
 \max\lrb{\alpha_A,1}
 2^{-n[|\RA-H(U|V)|^+
 -\lambda_{\X\Y}]}
 +\beta_A
 }
 \notag
 \\
 &
 =
 \frac 1{|\im\A||\im\B||\im\hcB|}
 \sum_{\xx\in\T}
 \lrB{
 \max\lrb{\alpha_A,1}
 \sum_{V|U}
 \sum_{\yy\in\T_{V|U}(\xx)}
 \mu_{Y|X}(\yy|\xx)
 2^{-n[|\RA-H(U|V)|^+
 -\lambda_{\X\Y}]}
 +\beta_A
 }
 \notag
 \\
 &
 \leq
 \frac 1{|\im\A||\im\B||\im\hcB|}
 \sum_{\xx\in\T}
 \lrB{
 \max\lrb{\alpha_A,1}
 \sum_{V|U}
 2^{-n[D(\nu_{V|U}\|\mu_{Y|X}|\nu_U)+|\RA-H(U|V)|^+
 -\lambda_{\X\Y}]}
 +\beta_A
 }
 \notag
 \\
 &
 \leq
 \frac {|\T|}{|\im\A||\im\B||\im\hcB|}
 \lrB{
 \max\lrb{\alpha_A,1}
 2^{-n[F_{Y|X}(\RA)-2\lambda_{\X\Y}]}
 +\beta_A
 }
 \notag
 \\
 &\leq
 2\kappa
 \lrB{
 \max\lrb{\alpha_A,1}
 2^{-n[F_{Y|X}(\RA)-2\lambda_{\X\Y}]}
 +\beta_A
 },
 \label{eq:uwiretap-error3}
\end{align}
where the third inequality comes from
Lemma \ref{lem:E} and (\ref{eq:uwiretap-error3sub}),
the fourth inequality comes from Lemmas~\ref{lem:exprob}
 and~\ref{lem:typenumber}, the fifth inequality comes from
the definition of $F_{Y|X}$ and Lemma~\ref{lem:typebound},
and the last inequality comes from (\ref{eq:wiretap-T}).
Similarly, we have
\begin{align}
 E_{ABC}\lrB{p_{MWY\tZ}(\cS_1\cap\cS_4^c)}
 &\leq
 2\kappa
 \lrB{
 \max\lrb{\alpha_{AB},1}
 2^{-n[F_{\tZ|X}(\RA+\RB)-2\lambda_{\X\tcZ}]}
 +\beta_{AB}
 },
 \label{eq:uwiretap-error4}
\end{align}
where
\[
F_{\tZ|X}(R)\equiv
\min_{V'|U}\lrB{D(\nu_{V'|U}\|\mu_{\tZ|X}|\nu_U)+|R-H(U|V')|^+}.
\]

From (\ref{eq:uwiretap-error0})--(\ref{eq:uwiretap-error2}),
(\ref{eq:uwiretap-error3}) and (\ref{eq:uwiretap-error4}),
we have
\begin{align*}
 &E_{AB\hB C}\lrB{
 p_{MWY\tZ}\lrsb{\lrb{
 (\mm,\ww,\yy,\tzz):
 \begin{aligned}
  &\tg_{AB}(\cc,\mm,\ww)\notin\T_{X,\gamma}
  \\
  &\text{or}\ \tzz\notin\T_{\tZ|X,\gamma}(\tg_{AB\hB}(\cc,\mm,\ww))
  \\
  &\text{or}\ \tg_{AB\hB}(\cc,\mm,\ww)\neq\tg_{A}(\cc|\yy)
  \\
  &\text{or}\ \tg_{AB\hB}(\cc,\mm,\ww)\neq\tg_{AB}(\cc,\mm|\tzz)
 \end{aligned}
 }}
 }
 \notag
 \\*
 &\leq
 \alpha_{AB\hB}-1
 +\frac{\beta_{AB\hB}+1}{\kappa}
 +2^{-n[\gamma-\lambda_{\X\tcZ}]}
 +2\kappa
 \lrB{
 \max\lrb{\alpha_A,1}
 2^{-n[\inf F_{Y|X}(\RA)-2\lambda_{\X\Y}]}
 +\beta_A
  }
 \\*
 &\quad
 +
 2\kappa
 \lrB{
 \max\lrb{\alpha_{AB},1}
 2^{-n[\inf F_{\tZ|X}(\RA+\RB)-2\lambda_{\X\tcZ}]}
 +\beta_{AB}
 },
\end{align*}
where the infimum is taken over all
$\mu_{Y\tZ|X}$ satisfying
(\ref{eq:RABhB-uwiretap-lemma})--(\ref{eq:RAB-uwiretap-lemma}).
This  implies that there are $A\in\A$, $B\in\B$, $\hB\in\hcB$, and
$\cc\in\im\A$ such that
\begin{align}
 \begin{split}
 &p_{MWY\tZ}\lrsb{\lrb{
  (\mm,\ww,\yy,\tzz):
  \begin{aligned}
   &\tg_{AB}(\cc,\mm,\ww)\notin\T_{X,\gamma}
  \\
   &\text{or}\ \tzz\notin\T_{\tZ|X,\gamma}(\tg_{AB\hB}(\cc,\mm,\ww))
   \\
   &\text{or}\ \tg_{AB\hB}(\cc,\mm,\ww)\neq\tg_{A}(\cc|\yy)
   \\
   &\text{or}\ \tg_{AB\hB}(\cc,\mm,\ww)\neq\tg_{AB}(\cc,\mm|\tzz)
  \end{aligned}
  }}
  \\*
  &\leq
  \alpha_{AB\hB}-1
  +\frac{\beta_{AB\hB}+1}{\kappa}
  +2^{-n[\gamma-\lambda_{\X\tcZ}]}
  +2\kappa
  \lrB{
  \max\lrb{\alpha_A,1}
  2^{-n[\inf F_{Y|X}(\RA)-2\lambda_{\X\Y}]}
  +\beta_A
  }
  \\*
  &\quad
  +2\kappa
  \lrB{
  \max\lrb{\alpha_{AB},1}
  2^{-n[\inf F_{\tZ|X}(\RA+\RB)-2\lambda_{\X\tcZ}]}
  +\beta_{AB}
  }.
 \end{split}
 \label{eq:error-uwiretap-lemma-proof}
\end{align}
Since
\begin{gather*}
 \inf_{\substack{
 \mu_{Y|X}:\\
 H(Y|X)<\RA
 }}
 F_{Y|X}(\RA)>0
 \\
 \inf_{\substack{
 \mu_{\tZ|X}:\\
 H(\tZ|X)<\RA+\RB
 }}
 F_{\tZ|X}(\RA+\RB)>0,
\end{gather*}
then the right hand side of (\ref{eq:error-uwiretap-lemma-proof}) goes
to zero as $n\to\infty$
by assuming (\ref{eq:k1})--(\ref{eq:k3}) and
the properties (\ref{eq:alpha}) and (\ref{eq:beta})
of ensembles $(\bcA,\bp_A)$, $(\bcA\times\bcB,\bp_{AB})$ and
$(\bcA\times\bcB\times\bchB,\bp_{AB\hB})$.
\hfill\QED

\subsection{Proof of Lemma \ref{lem:error-ska}}
\label{sec:lemma-ska}

In the following, we assume that $\eA>0$ and $\gamma>0$ satisfy
\begin{gather}
 \eA
 >\max\lrb{\zeta_{\X|\Y}(\gamma|\gamma),\zeta_{\X|\Z}(\gamma|\gamma)}.
 \label{eq:ska-eA}
\end{gather}

Let $\xx$, $\yy$, and $\zz$ be outputs of correlated sources.
We define
\begin{align}
 &\bullet (\xx,\yy,\zz)\in\T_{XYZ,\gamma}
 \tag{SKA1}
 \\
 &\bullet g_{A}(A\xx|\yy)=\xx
 \tag{SKA2}
 \\
 &\bullet g_{AB}(A\xx,B\xx|\zz)=\xx.
 \tag{SKA3}
\end{align}
Then the left hand side of (\ref{eq:error-ska}) is upper bounded by
\begin{align}
 \begin{split}
  p_{XYZ}\lrsb{\lrb{
  (\xx,\yy,\zz):
  \begin{aligned}
   &g_{AB}(A\xx|\yy)\neq \xx
   \\
   &\text{or}\ g_{AB\hB}(A\xx,B\xx|\zz)\neq \xx
  \end{aligned}
  }}
  &\leq
  p_{XYZ}(\cS_1^c)
  +p_{XYZ}(\cS_1\cap\cS_2^c)
  +p_{XYZ}(\cS_1\cap\cS_3^c),
 \end{split}
 \label{eq:ska-error0}
\end{align}
where
\begin{align*}
 \cS_i
 &\equiv
 \lrb{
 (\xx,\yy,\zz): \text{(SKA$i$)}
 }.
\end{align*}

First, we evaluate
$E_{AB}\lrB{p_{XYZ}(\cS_1^c)}$.
From (\ref{eq:wiretap-T}),
we have
\begin{align}
 E_{AB}\lrB{p_{XYZ}(\cS_1^c)}
 &\leq
 2^{-n[\gamma-\lambda_{\X\Y\Z}]}
 \notag
 \\
 &\leq
 \frac {\delta'}3
 \label{eq:ska-error1}
\end{align}
for all $\delta'>0$ and sufficiently large $n$.

Next, we evaluate
$E_{AB}\lrB{p_{XYZ}(\cS_1\cap\cS_2^c)}$ and
$E_{AB}\lrB{p_{XYZ}(\cS_1\cap\cS_3^c)}$.
From Lemma \ref{lem:typical-trans}, we have
$(\xx,\yy)\in\T_{XY,\gamma}$ and $\xx\in\T_{X|Y,\gamma}(\yy)$.
Then there is $\xx'\in\C_A(A\xx)$ such that
$\xx'\neq\xx$ and
\begin{align}
 \mu_{X|Y}(\xx'|\yy)
 &\geq
 \mu_{X|Y}(\xx|\yy)
 \notag
 \\
 &\geq
 2^{-n[H(X|Y)+\zeta_{\X|\Y}(\gamma|\gamma)]},
\end{align}
where the second inequality comes from Lemma \ref{lem:typical-prob}.
This implies that $[\G(\yy)\setminus\{\xx\}]\cap\C_A(A\xx)\neq\emptyset$,
where
\[
 \G(\yy)\equiv\lrb{
 \xx': \mu_{X|Y}(\xx'|\yy)\geq 2^{-[H(X|Y)+\zeta_{\X|\Y}(\gamma|\gamma)]}
 }.
\]
Then we have
\begin{align}
 E_{AB}\lrB{\mu_{XYZ}(\cS_1\cap\cS_2^c)}
 &\leq
 \sum_{(\xx,\yy,\zz)\in\T_{XYZ,\gamma}}
 \mu_{XYZ}(\xx,\yy,\zz)
 p_{A}\lrsb{\lrb{
 A:
 \lrB{\G(\yy)\setminus\{\xx\}}\cap\C_{A}(A\xx)
 \neq\emptyset
 }}
 \notag
 \\
 &\leq
 \sum_{(\xx,\yy,\zz)\in\T_{XYZ,\gamma}}
 \mu_{XYZ}(\xx,\yy,\zz)
 \lrB{
 \frac{|\G(\yy)|\alpha_A}{|\im\A|}
 +\beta_A
 }
 \notag
 \\
 &\leq
 \sum_{(\xx,\yy,\zz)\in\T_{XYZ,\gamma}}
 \mu_{XYZ}(\xx,\yy,\zz)
 \lrB{
 \frac{2^{n[H(X|Y)+\zeta_{\X|\Y}(\gamma|\gamma)]}\alpha_A}{|\im\A|}
 +\beta_A
 }
 \notag
 \\
 &\leq
 \frac{|\X|^{\lA}\alpha_A}{|\im\A|}
 2^{n[H(X|Y)+\zeta_{\X|\Y}(\gamma|\gamma)]}|\X|^{-\lA}
 +\beta_A
 \notag
 \\
 &\leq
 \frac{|\X|^{\lA}\alpha_A}{|\im\A|}
 2^{-n[\eA-\zeta_{\X|\Y}(\gamma|\gamma)]}
 +\beta_A
 \notag
 \\
 &\leq\frac{\delta'}3
 \label{eq:ska-error2}
\end{align}
for all $\delta'>0$ and sufficiently large $n$ by taking an appropriate
$\gamma>0$,
where the second inequality comes from Lemma~\ref{lem:Anotempty}
and the third inequality comes from the fact that
\[
 |\G(\yy)|\leq 2^{n[H(X|Y)+\zeta_{\X|\Y}(\gamma|\gamma)]},
\]
the fifth inequality comes from the definition of $\lA$,
and the last inequality comes from 
(\ref{eq:ska-eA}) and
the properties (\ref{eq:imA})--(\ref{eq:beta}) of an ensemble $(\bcA,\bp_A)$.
Similarly, we have
\begin{align}
 E_{AB}\lrB{
 p_{XYZ}(\cS_1\cap\cS_3^c)
 }
 &
 \leq
 \frac{|\X|^{\lA+\lB}\alpha_{AB}}{|\im\A||\im\B|}
 2^{-n[\eA-\zeta_{\X|\Z}(\gamma|\gamma)]}
 +\beta_{AB}
 \notag
 \\
 &\leq\frac{\delta'}3
 \label{eq:ska-error3}
\end{align}
for all $\delta'>0$ and sufficiently large $n$.

Finally, from (\ref{eq:ska-error0})--(\ref{eq:ska-error3}),
we have the fact that
for all $\delta'>0$ and sufficiently large $n$
there are $A\in\A$ and $B\in\B$ such that
\begin{align*}
 p_{XYZ}\lrsb{\lrb{
 (\xx,\yy,\zz):
 \begin{aligned}
  &g_{A}(A\xx|\yy)\neq \xx
  \\
  &\text{or}\ g_{AB}(A\xx,B\xx|\zz)\neq \xx
 \end{aligned}
 }}
 \leq
 \delta'
\end{align*}
for all $\delta'>0$ and sufficiently large $n$.
\hfill\QED

\begin{rem}
It should be noted that
the property (\ref{eq:alpha}) of ensembles $(\bcA,\bp_A)$ and
 $(\bcA\times\bcB,\bp_{AB})$
can be replaced by
\begin{align*}
 &\limn\frac{\log\alpha_A(n)}n=1
 \\
 &\limn\frac{\log\alpha_{AB}(n)}n=1,
\end{align*}
respectively.
In particular, there are expurgated ensembles $(\bcA,\bp_A)$ and $(\bcB,\bp_B)$
of sparse matrices that have an $(\aalpha_A,\zero)$-hash property,
where the condition (\ref{eq:alpha}) for $\aalpha_A$ and $\aalpha_{AB}$
is replaced by the above respective conditions
(see \cite{BB}).
\end{rem}

\subsection{Proof of Lemma \ref{lem:R-uska}}
It has already been proved in the proof of Lemma \ref{lem:R}
that there is a random variable $X'$ taking values in $\X$ such
that
\begin{equation*}
 H(X|X',Z)=R'
\end{equation*}
for given $(X,Z)$ and $0\leq R'\leq H(X|Z)$.
The lemma is proved by letting
\begin{align*}
 R'&\equiv R-\e
 \\
 \tZ&\equiv(X',Z)
 \\
 f(\tz)&\equiv z\ \text{for}\ \tz=(x',z).
\end{align*}
\hfill\QED

\subsection{Proof of Lemma \ref{lem:error-uska}}

Let $\xx$, $\yy$, $\tzz$ be outputs of the correlated sources.
We define
\begin{align}
 &\bullet \tg_{A}(A\xx|\yy)=\xx
 \tag{USKA1}
 \\
 &\bullet \tg_{AB}(A\xx,A\xx|\tzz)=\xx.
 \tag{USKA2}
\end{align}
Then the left hand side of (\ref{eq:error-uska-lemma}) is upper bounded by
\begin{equation}
 p_{XY\tZ}\lrsb{\lrb{
 (\xx,\yy,\tzz):
 \begin{aligned}
  &g_{AB}(A\xx|\yy)\neq \xx
  \\
  &\text{or}\ g_{AB\hB}(A\xx,B\xx|\tzz)\neq \xx
 \end{aligned}
 }}
 \leq
 p_{XYZ}(\cS_1^c)+p_{XYZ}(cS_2^c),
 \label{eq:uska-error0}
\end{equation}
where
\begin{align*}
 \cS_i
 &\equiv
 \lrb{
 (\xx,\yy,\tzz): \text{(USKA$i$)}
 }.
\end{align*}

In the following, we evaluate
$E_{AB}\lrB{p_{XY\tZ}(\cS_1^c)}$
and 
$E_{AB}\lrB{p_{XY\tZ}(\cS_2^c)}$.
Let $UV$ be the type of sequence $(\xx,\yy)\in\X^n\times\Y^n$
and $V|U$ be the conditional type given type $U$.
In the following, we assume that $(\xx,\yy)\in\T_{UV}$.
If $\tg_{A}(A\xx|\yy)\neq \xx$,
then there is $\xx'\in\C_A(A\xx)$ such that
$\xx'\neq\xx$ and
\begin{align*}
 H(\xx'|\yy)
 \leq
 H(\xx|\yy)
 \leq
 H(U|V).
\end{align*}
This implies that $[\G(\yy)\setminus\{\xx\}]\cap\C_A(A\xx)\neq\emptyset$, where
\[
 \G(\yy)\equiv\lrb{
 \xx': H(\xx'|\yy)\leq H(U|V)
 }.
\]
Then we have
\begin{align}
 E_{AB}\lrB{\chi(\tg_{A}(A\xx|\yy)\neq \xx)}
 &\leq
 p_{A}\lrsb{\lrb{
 A:
 \lrB{\G(\yy)\setminus\{\xx\}}\cap\C_{A}(A\xx)
 \neq\emptyset
 }}
 \notag
 \\
 &\leq
 \max\lrb{
 \frac{|\G(\yy)|\alpha_A}{|\im\A|}+\beta_A,
 1
 }
 \notag
 \\
 &\leq
 \max\lrb{
 \frac{2^{n[H(U|V)+\lambda_{\X\Y}]}\alpha_A}{|\im\A|}+\beta_A,
 1}
 \notag
 \\
 &\leq
 \max\lrb{
 \frac{|\X|^{\lA}\alpha_A}{|\im\A|},
 1
 }
 2^{-n[|\RA-H(U|V)|^+ -\lambda_{\X\Y}]}
 +\beta_A,
 \label{eq:uska-error1sub}
\end{align}
where $|\cdot|^+$ is defined by (\ref{eq:def-plus}),
the second inequality comes from Lemma~\ref{lem:Anotempty},
and the third inequality comes from Lemma \ref{lem:typeset-number}.
Let
\[
 F_{XY}(R)\equiv\min_{UV}\lrB{D(\nu_{\xx\yy}\|\mu_{XY})+|R-H(U|V)|^+}.
\]
Then we have
\begin{align}
 E_{AB}\lrB{\mu_{XY\tZ}(\cS_1^c)}
 &=
 \sum_{UV}
 \sum_{(\xx,\yy)\in\T_{UV}}
 \mu_{XY}(\xx,\yy)
 E_{AB}\lrB{
 \chi(\tg_A(A\xx|\yy)\neq\xx)
 }
 \notag
 \\
 &\leq
 \sum_{UV}
 \sum_{(\xx,\yy)\in\T_{UV}}
 \mu_{XY}(\xx,\yy)
 \lrB{
 \max\lrb{
 \frac{|\X|^{\lA}\alpha_A}{|\im\A|},
 1
 }
 2^{-n[|\RA-H(U|V)|^+ -\lambda_{\X\Y}]}
 +\beta_A
 }
 \notag
 \\
 &\leq
 \max\lrb{
 \frac{|\X|^{\lA}\alpha_A}{|\im\A|},
 1
 }
 \sum_{UV}
 2^{-n[D(\nu_{\xx\yy}\|\mu_{XY})+|\RA-H(U|V)|^+ -\lambda_{\X\Y}]}
 +\beta_A
 \notag
 \\
 &\leq
 \max\lrb{
 \frac{|\X|^{\lA}\alpha_A}{|\im\A|},
 1
 }
 2^{-n[F_{XY}(\RA)-2\lambda_{\X\Y}]}
 +\beta_A,
 \label{eq:uska-error1}
\end{align}
where
the first inequality comes from (\ref{eq:uska-error1sub}),
the second inequality comes from 
Lemmas \ref{lem:exprob} and \ref{lem:typenumber},
and the last inequality comes from Lemma \ref{lem:typebound}
and the definition of $F_{XY}$.
Similarly, we have
\begin{align}
 E_{AB}\lrB{
 p_{XY\tZ}(\cS_2^c)
 }
 &\leq
 \max\lrb{
 \frac{|\X|^{\lA+\lB}\alpha_{AB}}{|\im\A||\im\B|},
 1
 }
 2^{-n[F_{X\tZ}(\RA+\RB)-2\lambda_{\X\tcZ}]}
 +\beta_{AB},
 \label{eq:uska-error2}
\end{align}
where
\[
 F_{X\tZ}(R)\equiv\min_{UV'}\lrB{D(\nu_{\xx\tzz}\|\mu_{X\tZ})+|R-H(U|V')|^+}.
\]

Finally, from (\ref{eq:uska-error0}), (\ref{eq:uska-error1}), and
(\ref{eq:uska-error2}),
we have
\begin{align*}
 &E_{AB}\lrB{
 p_{XY\tZ}\lrsb{\lrb{
 (\xx,\yy,\tzz):
 \begin{aligned}
  &g_{AB}(A\xx|\yy)\neq \xx
  \\
  &\text{or}\ g_{AB\hB}(A\xx,B\xx|\tzz)\neq \xx
 \end{aligned}
 }}
 }
 \notag
 \\*
 &\leq
 \max\lrb{
 \frac{|\X|^{\lA}\alpha_A}{|\im\A|},
 1
 }
 2^{-n[\inf F_{XY}(\RA)-2\lambda_{\X\Y}]}
 +
 \max\lrb{
 \frac{|\X|^{\lA+\lB}\alpha_{AB}}{|\im\A||\im\B|},
 1
 }
 2^{-n[\inf F_{X\tZ}(\RA+\RB)-2\lambda_{\X\tcZ}]}
 \\*
 &\quad
 +\beta_A
 +\beta_{AB},
\end{align*}
where the infimum is taken over all $\mu_{XY\tZ}$ satisfying
(\ref{eq:RA-uska-lemma}) and (\ref{eq:RAB-uska-lemma}).
This  implies that there are $A\in\A$ and $B\in\B$ such that
\begin{align}
 &
 p_{XY\tZ}\lrsb{\lrb{
 (\xx,\yy,\tzz):
 \begin{aligned}
  &g_{AB}(A\xx|\yy)\neq \xx
  \\
  &\text{or}\ g_{AB\hB}(A\xx,B\xx|\tzz)\neq \xx
 \end{aligned}
 }}
 \notag
 \\*
 &\leq
 \max\lrb{
 \frac{|\X|^{\lA}\alpha_A}{|\im\A|},
 1
 }
 2^{-n[\inf F_{XY}(\RA)-2\lambda_{\X\Y}]}
 +
 \max\lrb{
 \frac{|\X|^{\lA+\lB}\alpha_{AB}}{|\im\A||\im\B|},
 1
 }
 2^{-n[\inf F_{X\tZ}(\RA+\RB)-2\lambda_{\X\tcZ}]}
 \notag
 \\*
 &\quad
 +\beta_A
 +\beta_{AB}.
 \label{eq:error-uska-lemma-proof}
\end{align}
Since
\begin{gather*}
 \inf_{\substack{
 \mu_{XY}:\\
 H(X|Y)<\RA
 }}
 F_{XY}(\RA)>0
 \\
 \inf_{\substack{
 \mu_{X\tZ}:\\
 H(X|\tZ)<\RA+\RB
 }}
 F_{X\tZ}(\RA+\RB)>0,
\end{gather*}
then the right hand side of (\ref{eq:error-uska-lemma-proof}) goes to
zero as $n\to\infty$
by assuming the properties (\ref{eq:imA})--(\ref{eq:beta})
of $(\bcA,\bp_A)$ and $(\bcA\times\bcB,\bp_{AB})$.
\hfill\QED

\subsection{Method of Types}
\label{sec:type-theory}

We use the following lemmas for a set of typical sequences.

\begin{lem}[{\cite[Lemma 2.6]{CK}\cite[Lemma 21]{HASH}}]
 \label{lem:exprob}
 \begin{align*}
  \frac 1n\log \frac 1{\mu_{UV}(\uu,\vv)}
  &= H(\nu_{\uu\vv})+D(\nu_{\uu\vv}\|\mu_{UV})
  \\
  \frac 1n\log\frac 1{\mu_{U|V}(\uu|\vv)}
  &= H(\nu_{\uu|\vv}|\nu_{\vv})
  +D(\nu_{\uu|\vv}\|\mu_{U|V}|\nu_{\vv}).
 \end{align*}
\end{lem}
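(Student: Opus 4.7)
The plan is a direct calculation leveraging the factorization of the memoryless distribution over the sequence and the identity $\nu_{\uu\vv}(u,v) = \nu_{\vv}(v)\nu_{\uu|\vv}(u|v)$. For the first identity, I would start from
\[
\mu_{UV}(\uu,\vv) = \prod_{i=1}^n\mu_{UV}(u_i,v_i) = \prod_{(u,v)}\mu_{UV}(u,v)^{n\nu_{\uu\vv}(u,v)},
\]
where the second equality regroups the $n$ factors according to the type. Taking $-\frac{1}{n}\log(\cdot)$ turns this product into the sum $-\sum_{u,v}\nu_{\uu\vv}(u,v)\log\mu_{UV}(u,v)$.

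Next I would split the logarithm by multiplying and dividing by $\nu_{\uu\vv}(u,v)$ inside, which gives
\[
\frac{1}{n}\log\frac{1}{\mu_{UV}(\uu,\vv)}
=\sum_{u,v}\nu_{\uu\vv}(u,v)\log\frac{1}{\nu_{\uu\vv}(u,v)}
+\sum_{u,v}\nu_{\uu\vv}(u,v)\log\frac{\nu_{\uu\vv}(u,v)}{\mu_{UV}(u,v)}.
\]
The first sum is $H(\nu_{\uu\vv})$ and the second is $D(\nu_{\uu\vv}\|\mu_{UV})$ by the definitions given in the paper, which establishes the first claim.

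For the second identity, I would apply the same decomposition to $\mu_{U|V}(\uu|\vv)=\prod_{i=1}^n\mu_{U|V}(u_i|v_i)$, then reindex the sum using $\nu_{\uu\vv}(u,v)=\nu_{\vv}(v)\nu_{\uu|\vv}(u|v)$. This yields
\[
\frac{1}{n}\log\frac{1}{\mu_{U|V}(\uu|\vv)}
=\sum_v\nu_{\vv}(v)\sum_u\nu_{\uu|\vv}(u|v)\log\frac{1}{\mu_{U|V}(u|v)},
\]
after which the same trick of inserting $\nu_{\uu|\vv}(u|v)/\nu_{\uu|\vv}(u|v)$ separates the expression into $H(\nu_{\uu|\vv}|\nu_{\vv})+D(\nu_{\uu|\vv}\|\mu_{U|V}|\nu_{\vv})$ by the definitions of conditional entropy and conditional divergence. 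There is no real obstacle here—the lemma is a bookkeeping identity; the only subtlety is being careful when $\mu_{UV}(u,v)=0$ for some $(u,v)$ with $\nu_{\uu\vv}(u,v)>0$, in which case both sides are $+\infty$ and the identity still holds with the usual conventions on $\log 0$ and $0\log 0$.
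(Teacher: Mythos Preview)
Your argument is correct and is exactly the standard derivation of this type identity: factor the memoryless probability over coordinates, regroup by the empirical type, and split the resulting cross-entropy into entropy plus divergence. The paper itself does not give a proof of this lemma; it merely states it and cites \cite[Lemma~2.6]{CK} and \cite[Lemma~21]{HASH}, where the same computation appears, so there is nothing further to compare.
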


\begin{lem}[{\cite[Theorem 2.5]{UYE}\cite[Lemma 22]{HASH}}]
 \label{lem:typical-trans}
 If $\vv\in\T_{V,\gamma}$ and $\uu\in\T_{U|V,\gamma'}(\vv)$,
 then $(\uu,\vv)\in\T_{UV,\gamma+\gamma'}$.
 If $(\uu,\vv)\in\T_{UV,\gamma}$, then $\uu\in\T_{U,\gamma}$
 and $\uu\in\T_{U|V,\gamma}(\vv)$.
\end{lem}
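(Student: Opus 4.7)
The plan is to reduce both claims to the chain rule for divergence applied to the empirical joint distribution. For any distribution $\nu$ and $\mu$ on $\U\times\V$, one has the identity
\begin{equation*}
 D(\nu_{UV}\|\mu_{UV}) = D(\nu_V\|\mu_V) + D(\nu_{U|V}\|\mu_{U|V}|\nu_V),
\end{equation*}
which holds purely by separating the logarithm $\log\frac{\nu(u,v)}{\mu(u,v)}=\log\frac{\nu(v)}{\mu(v)}+\log\frac{\nu(u|v)}{\mu(u|v)}$. I would first verify that when this is specialized to $\nu=\nu_{\uu\vv}$ and $\mu=\mu_{UV}$, both sides are well-defined (both $\nu_{\vv}(v)>0$ factors and $\mu_V(v)>0$ factors appear consistently by the standard convention), giving
\begin{equation*}
 D(\nu_{\uu\vv}\|\mu_{UV}) = D(\nu_{\vv}\|\mu_V) + D(\nu_{\uu|\vv}\|\mu_{U|V}|\nu_{\vv}).
\end{equation*}

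For the first implication, I would just substitute the hypotheses $D(\nu_{\vv}\|\mu_V)<\gamma$ and $D(\nu_{\uu|\vv}\|\mu_{U|V}|\nu_{\vv})<\gamma'$ into this identity, which immediately gives $D(\nu_{\uu\vv}\|\mu_{UV})<\gamma+\gamma'$, i.e.\ $(\uu,\vv)\in\T_{UV,\gamma+\gamma'}$.

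For the second implication, the chain-rule identity shows that $D(\nu_{\uu|\vv}\|\mu_{U|V}|\nu_{\vv})\leq D(\nu_{\uu\vv}\|\mu_{UV})<\gamma$ because $D(\nu_{\vv}\|\mu_V)\geq 0$; this gives $\uu\in\T_{U|V,\gamma}(\vv)$. To get $\uu\in\T_{U,\gamma}$, I would invoke the data-processing inequality for divergence: marginalizing out $V$ from $\nu_{\uu\vv}$ and $\mu_{UV}$ can only decrease the divergence, so $D(\nu_{\uu}\|\mu_U)\leq D(\nu_{\uu\vv}\|\mu_{UV})<\gamma$. If preferred, this last step can be proven directly from the log-sum inequality applied term-by-term to $\sum_v \nu_{\uu\vv}(u,v)\log[\nu_{\uu\vv}(u,v)/\mu_{UV}(u,v)]$ and then summed over $u$.

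The only subtle point is handling cases where $\mu_{UV}(u,v)=0$ for some $(u,v)$ that the empirical type charges — but under the standing assumption that the divergences appearing in the hypotheses are finite, every such joint mass is absent from $\nu_{\uu\vv}$ and the chain rule decomposition carries through without issue. I do not expect a significant obstacle here; the result is essentially a restatement of the chain rule plus monotonicity under marginalization.
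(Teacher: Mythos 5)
Your argument is correct: the chain rule $D(\nu_{\uu\vv}\|\mu_{UV}) = D(\nu_{\vv}\|\mu_V) + D(\nu_{\uu|\vv}\|\mu_{U|V}|\nu_{\vv})$, nonnegativity of each term, and monotonicity of divergence under marginalization give exactly the two implications, and your handling of the zero-mass cases is right since an infinite divergence would violate the hypotheses anyway. The paper itself does not prove this lemma but cites it from \cite{UYE} and \cite{HASH}, and your chain-rule derivation is the standard proof of that cited result.
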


\begin{lem}[{\cite[Theorem 2.7]{UYE}\cite[Lemma 24]{HASH}}]
 \label{lem:typical-aep}
 Let $0<\gamma\leq 1/8$.
 Then,
 \begin{align*}
  \begin{split}
   \left|
   \frac 1{n}\log\frac 1{\mu_{U}(\uu)} - H(U)
   \right|
   &\leq
   \zeta_{\U}(\gamma)
  \end{split}
 \end{align*}
 for all $\uu\in\T_{U,\gamma}$,
 and
 \begin{align*}
  \begin{split}
   \left|
   \frac 1{n}\log\frac 1{\mu_{U|V}(\uu|\vv)} - H(U|V)
   \right|
   &\leq
   \zeta_{\U|\V}(\gamma'|\gamma)
  \end{split}
 \end{align*}
 for $\vv\in\T_{V,\gamma}$ and $\uu\in\T_{U|V,\gamma'}(\vv)$,
 where $\zeta_{\U}(\gamma)$ and $\zeta_{\U|\V}(\gamma'|\gamma)$
 are defined in (\ref{eq:zeta}) and (\ref{eq:zetac}), respectively.
\end{lem}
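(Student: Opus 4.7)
The plan is to prove each of the two inequalities by combining the exact identity from Lemma~\ref{lem:exprob} (which expresses $\frac{1}{n}\log\frac{1}{\mu(\cdot)}$ as an empirical entropy plus a divergence) with Pinsker's inequality and the classical $L^1$-continuity estimate for the entropy functional of the form
$|H(p)-H(q)| \leq -\theta\log(\theta/|\U|)$ whenever $\|p-q\|_1 \leq \theta \leq 1/2$.
The hypothesis $\gamma \leq 1/8$ is exactly what delivers $\sqrt{2\gamma}\leq 1/2$, placing the relevant $L^1$-perturbations in the regime where this estimate applies.

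For the unconditional bound, I would write, via Lemma~\ref{lem:exprob},
\[
\frac{1}{n}\log\frac{1}{\mu_U(\uu)} - H(U) \;=\; [H(\nu_{\uu}) - H(\mu_U)] + D(\nu_\uu \,\|\, \mu_U),
\]
bound the divergence by $\gamma$ using $\uu \in \T_{U,\gamma}$, apply Pinsker to obtain $\|\nu_\uu - \mu_U\|_1 \leq \sqrt{2\gamma}$, then invoke entropy continuity to get $|H(\nu_\uu) - H(\mu_U)| \leq -\sqrt{2\gamma}\log(\sqrt{2\gamma}/|\U|)$. The triangle inequality then yields exactly $\zeta_\U(\gamma)$.

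For the conditional bound, I would again apply Lemma~\ref{lem:exprob} to decompose the left-hand side into $H(\nu_{\uu|\vv}|\nu_\vv)$ plus a divergence $<\gamma'$, and split the entropy gap with the intermediate quantity $H(\mu_{U|V}|\nu_\vv)$:
(i) For $|H(\nu_{\uu|\vv}|\nu_\vv) - H(\mu_{U|V}|\nu_\vv)|$, apply Pinsker slice-by-slice to get $\theta_v \equiv \|\nu_{\uu|\vv}(\cdot|v) - \mu_{U|V}(\cdot|v)\|_1 \leq \sqrt{2 D(\nu_{\uu|\vv}(\cdot|v)\|\mu_{U|V}(\cdot|v))}$, then use Jensen (concavity of $\sqrt{\cdot}$) to deduce $\sum_v \nu_\vv(v)\theta_v \leq \sqrt{2\gamma'}$; applying entropy continuity slice-by-slice and then Jensen once more — this time using the concavity and monotonicity of $f(\theta) = -\theta\log(\theta/|\U|)$ for $\theta \leq |\U|/e$ — collapses the weighted sum to $-\sqrt{2\gamma'}\log(\sqrt{2\gamma'}/|\U|)$.
(ii) For $|H(\mu_{U|V}|\nu_\vv) - H(U|V)| = \bigl|\sum_v[\nu_\vv(v)-\mu_V(v)]H(\mu_{U|V}(\cdot|v))\bigr|$, bound each $H(\mu_{U|V}(\cdot|v)) \leq \log|\U|$ and use Pinsker on $D(\nu_\vv\|\mu_V)<\gamma$ to conclude the quantity is at most $\sqrt{2\gamma}\log|\U|$.
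Summing the divergence bound with (i) and (ii) gives $\gamma' - \sqrt{2\gamma'}\log(\sqrt{2\gamma'}/|\U|) + \sqrt{2\gamma}\log|\U|$, which is at most $\zeta_{\U|\V}(\gamma'|\gamma)$ because enlarging the denominator in the log from $|\U|$ to $|\U||\V|$ only weakens the bound (the factor $-\sqrt{2\gamma'}$ is negative and $\log$ is increasing).

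The main obstacle is the aggregation step in (i): naively applying slice-wise entropy continuity and averaging under $\nu_\vv$ only gives the unwieldy expression $\sum_v \nu_\vv(v)\bigl(-\theta_v\log(\theta_v/|\U|)\bigr)$, and one must invoke concavity and monotonicity of the map $\theta\mapsto -\theta\log(\theta/|\U|)$ together with the two-stage Jensen bound $\sum_v\nu_\vv(v)\theta_v \leq \sqrt{2\gamma'}$ in order to obtain the clean closed form stated in the lemma. Everything else is a straightforward application of Pinsker, Lemma~\ref{lem:exprob}, and the triangle inequality.
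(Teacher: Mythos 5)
The paper itself contains no proof of this lemma---it is imported verbatim from \cite[Theorem 2.7]{UYE} and \cite[Lemma 24]{HASH}---so your proposal can only be measured against the standard argument behind those references, which is indeed the combination of Lemma~\ref{lem:exprob}, Pinsker's inequality, and the $L^1$-continuity of entropy that you describe. Your treatment of the unconditional bound and of your term (ii) is correct. (Two minor remarks: base-$2$ Pinsker gives $\|\nu_{\uu}-\mu_U\|_1\leq\sqrt{2\gamma\ln 2}\leq\sqrt{2\gamma}$, so $\sqrt{2\gamma}$ is a slightly loosened but valid bound; and the hypothesis should implicitly read $\gamma'\leq 1/8$ as well, which you tacitly assume.)

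The genuine gap is in step (i). You apply the continuity estimate $|H(p)-H(q)|\leq-\theta\log(\theta/|\U|)$ slice by slice, but that estimate requires $\theta_v\equiv\|\nu_{\uu|\vv}(\cdot|v)-\mu_{U|V}(\cdot|v)\|_1\leq 1/2$, and the hypothesis $D(\nu_{\uu|\vv}\|\mu_{U|V}|\nu_{\vv})<\gamma'$ only controls the $\nu_{\vv}$-\emph{average} of the slice divergences: for a symbol $v$ with small $\nu_{\vv}(v)$ the slice divergence, and hence $\theta_v$, can be large, and the estimate is then simply false (for $|\U|=3$, $p$ uniform and $q$ a point mass, $\theta=4/3$ and $-\theta\log(\theta/3)\approx 1.56<\log 3=|H(p)-H(q)|$). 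So the slice-wise bound you feed into Jensen fails for some $v$, and concavity cannot rescue an inequality that is violated before the averaging. The repair---and the reason $\zeta_{\U|\V}$ carries $|\U||\V|$ rather than $|\U|$ inside the logarithm---is to apply the continuity lemma \emph{once}, on the product alphabet $\U\times\V$, to the two joint distributions $\nu_{\uu\vv}$ and $\mu'(u,v)\equiv\mu_{U|V}(u|v)\nu_{\vv}(v)$: their entropies differ by exactly $H(\nu_{\uu|\vv}|\nu_{\vv})-H(\mu_{U|V}|\nu_{\vv})$ (the common term $H(\nu_{\vv})$ cancels), and their $L^1$ distance equals $\sum_v\nu_{\vv}(v)\theta_v\leq\sqrt{2\gamma'}\leq 1/2$ by precisely the slice-wise Pinsker-plus-Jensen computation you already carried out. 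This yields the term $-\sqrt{2\gamma'}\log\bigl(\sqrt{2\gamma'}/(|\U||\V|)\bigr)$ directly, after which the rest of your argument goes through unchanged.
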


\begin{lem}[{\cite[Theorem 2.8]{UYE}\cite[Lemma 25]{HASH}}]
 \label{lem:typical-prob}
 For any $\gamma>0$ and $\vv\in\V^n$,
 \begin{align*}
  \mu_U([\T_{U,\gamma}]^c)
  &\leq
  2^{-n[\gamma-\lambda_{\U}]}
  \\
  \mu_{U|V}([\T_{U|V,\gamma}(\vv)]^c|\vv)
  &\leq
  2^{-n[\gamma-\lambda_{\U\V}]},
 \end{align*}
 where $\lambda_{\U}$ and $\lambda_{\U\V}$ are defined in (\ref{eq:lambda}).
\end{lem}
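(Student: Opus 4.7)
The plan is to prove both bounds by the standard method-of-types partition argument. For the unconditional bound, I would write
\[
\mu_U\lrsb{[\T_{U,\gamma}]^c}
=\sum_{U': D(\nu_{U'}\|\mu_U)\geq \gamma}\mu_U(\T_{U'}),
\]
where the outer sum ranges over the finitely many types $U'$ on $\U^n$ satisfying $D(\nu_{U'}\|\mu_U)\geq\gamma$. By Lemma~\ref{lem:exprob}, every sequence $\uu\in\T_{U'}$ has $\mu_U(\uu)=2^{-n[H(\nu_{U'})+D(\nu_{U'}\|\mu_U)]}$, and by the standard type-class size bound $|\T_{U'}|\leq 2^{nH(\nu_{U'})}$, so that $\mu_U(\T_{U'})\leq 2^{-nD(\nu_{U'}\|\mu_U)}\leq 2^{-n\gamma}$. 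The number of types on $\U^n$ is at most $[n+1]^{|\U|}$, hence
\[
\mu_U\lrsb{[\T_{U,\gamma}]^c}\leq [n+1]^{|\U|}\,2^{-n\gamma}=2^{-n[\gamma-\lambda_\U]},
\]
which is the first inequality.

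For the conditional bound I would repeat the same argument but partition by the conditional type of $\uu$ given the fixed sequence $\vv\in\V^n$. Since $\vv$ is fixed, $\nu_{\vv}$ is fixed, and I group the sum as
\[
\mu_{U|V}\lrsb{[\T_{U|V,\gamma}(\vv)]^c\,|\,\vv}
=\sum_{U'|V: D(\nu_{U'|V}\|\mu_{U|V}|\nu_{\vv})\geq \gamma}\mu_{U|V}(\T_{U'|V}(\vv)\,|\,\vv),
\]
where the sum is over conditional types $U'|V$ compatible with $\nu_{\vv}$. Again by Lemma~\ref{lem:exprob}, each $\uu\in\T_{U'|V}(\vv)$ satisfies $\mu_{U|V}(\uu|\vv)=2^{-n[H(\nu_{U'|V}|\nu_{\vv})+D(\nu_{U'|V}\|\mu_{U|V}|\nu_{\vv})]}$, while the standard bound on a conditional type class gives $|\T_{U'|V}(\vv)|\leq 2^{nH(\nu_{U'|V}|\nu_{\vv})}$. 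Multiplying, the mass of each conditional type class is at most $2^{-n\gamma}$, and the total number of conditional types on $\U^n$ with respect to sequences in $\V^n$ is at most $[n+1]^{|\U||\V|}$, yielding
\[
\mu_{U|V}\lrsb{[\T_{U|V,\gamma}(\vv)]^c\,|\,\vv}\leq [n+1]^{|\U||\V|}\,2^{-n\gamma}=2^{-n[\gamma-\lambda_{\U\V}]}.
\]

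There is essentially no serious obstacle here: the proof is just the textbook divergence-typicality bound combined with polynomial counting of (conditional) types. The only minor point to be careful about is to use the conditional-type counting bound $[n+1]^{|\U||\V|}$ (valid uniformly over $\vv$) rather than the unconditional bound, so that the exponent $\lambda_{\U\V}$ in the statement matches the definition in~(\ref{eq:lambda}) when the subscript is the product alphabet $\U\V$.
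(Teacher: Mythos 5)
Your proof is correct and is exactly the standard method-of-types argument (type-class partition, per-class bound $2^{-nD}$ via Lemma~\ref{lem:exprob} and Lemma~\ref{lem:typenumber}, then polynomial counting via Lemma~\ref{lem:typebound}) that underlies the cited sources; the paper itself states this lemma by citation without reproducing a proof. The only cosmetic remark is that for sequences $\uu$ with $\mu_U(\uu)=0$ (divergence infinite) the per-sequence identity should be read as giving zero mass, which does not affect the bound.
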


\begin{lem}[{\cite[Theorem 2.9]{UYE}\cite[Lemma 26]{HASH}}]
 \label{lem:typical-number}
 For any $\gamma>0$,
 \begin{align*}
  \left|
  \frac 1{n}\log |\T_{U,\gamma}| - H(U)
  \right|
  &\leq
  \eta_{\U}(\gamma),
 \end{align*}
 where $\eta_{\U}(\gamma)$ is defined in (\ref{eq:def-eta}).
\end{lem}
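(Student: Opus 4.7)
The plan is to prove the two-sided bound by the standard method of types, decomposing $\T_{U,\gamma}$ into type classes and combining the classical cardinality bounds on each class with the continuity of Shannon entropy under the total-variation metric. I would start by writing
\[
 \T_{U,\gamma} = \bigcup_{\nu:\, D(\nu\|\mu_U)<\gamma}\T_{\nu},
\]
where $\nu$ ranges over empirical distributions realizable by sequences in $\U^n$. The two ingredients I need are the classical bounds $(n+1)^{-|\U|}2^{nH(\nu)} \le |\T_{\nu}| \le 2^{nH(\nu)}$ on each type class and the polynomial bound of $(n+1)^{|\U|}$ on the total number of types.

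For the upper half of the claimed inequality, I would chain
\[
 |\T_{U,\gamma}| \le (n+1)^{|\U|}\max_{\nu:\,D(\nu\|\mu_U)<\gamma}2^{nH(\nu)}.
\]
To convert the divergence constraint into a bound on $H(\nu)-H(\mu_U)$, I would invoke Pinsker's inequality to get $\|\nu-\mu_U\|_1 \le \sqrt{2\gamma}$ and then apply the standard Fannes-type continuity estimate to conclude $|H(\nu)-H(\mu_U)|\le -\sqrt{2\gamma}\log(\sqrt{2\gamma}/|\U|)$, provided $\sqrt{2\gamma}$ is small enough. Taking logarithms, dividing by $n$, and absorbing the $|\U|\log(n+1)/n$ contribution from the type count yields precisely the upper deviation $\eta_\U(\gamma)$.

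For the lower half, I would exhibit a single realizable type $\nu^{\star}$ lying inside the divergence-$\gamma$ ball: the best approximation of $\mu_U$ on the $1/n$-grid satisfies $\|\nu^{\star}-\mu_U\|_1\le |\U|/n$, so $D(\nu^{\star}\|\mu_U)\to 0$ and drops below $\gamma$ for all sufficiently large $n$. Using $|\T_{U,\gamma}|\ge|\T_{\nu^{\star}}|\ge (n+1)^{-|\U|}2^{nH(\nu^{\star})}$ and a second application of the continuity estimate then gives the matching lower deviation.

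The main obstacle is calibrating the entropy-continuity constant so that it matches the precise form $-\sqrt{2\gamma}\log(\sqrt{2\gamma}/|\U|)$ appearing in $\eta_\U(\gamma)$, rather than the coarser $\sqrt{2\gamma}\log|\U|$ that a naive Fannes argument yields. This requires the sharper per-coordinate estimate $|p\log p - q\log q|\le -|p-q|\log(|p-q|/|\U|)$ valid whenever $|p-q|\le 1/2$, combined with the concavity of $-\theta\log\theta$ on $[0,1/e]$, together with verification that the regime $\sqrt{2\gamma}\le 1/2$ covers the values of $\gamma$ used elsewhere in the paper; everything else is routine bookkeeping with the two classical type-counting bounds.
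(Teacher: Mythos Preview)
The paper does not supply its own proof of this lemma: it is simply quoted, with citations to \cite[Theorem~2.9]{UYE} and \cite[Lemma~26]{HASH}, as one of several standard method-of-types facts collected in the appendix. So there is no in-paper argument to compare against.

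Your proposed argument is the standard one that appears in those references: decompose $\T_{U,\gamma}$ into type classes, use the polynomial bound on the number of types together with $(n+1)^{-|\U|}2^{nH(\nu)}\le|\T_\nu|\le 2^{nH(\nu)}$, and translate the divergence constraint into an entropy deviation via Pinsker plus the Csisz\'ar--K\"orner continuity estimate $|H(\nu)-H(\mu)|\le -\theta\log(\theta/|\U|)$ for $\theta=\|\nu-\mu\|_1\le 1/2$. This recovers exactly the form of $\eta_{\U}(\gamma)$ in the paper. The only caveat worth noting is that your lower bound needs a realizable type inside the divergence-$\gamma$ ball, which (as you observe) is guaranteed only for sufficiently large $n$; this is consistent with how the lemma is actually invoked in the paper, and the cited sources state the result under the same implicit regime.
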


\begin{lem}[{\cite[Lemma 2.2]{CK}}]
\label{lem:typebound}
 The number of different types of sequences in $\U^n$
 is fewer than $[n+1]^{|\U|}$.
 The number of conditional types of sequences in $\U^n\times\V^n$
 is fewer than $[n+1]^{|\U||\V|}$.
\end{lem}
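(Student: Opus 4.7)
\subsection*{Proof proposal for Lemma \ref{lem:typebound}}

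The plan is a direct combinatorial counting argument: parametrize types by their integer-valued count vectors and bound the number of such vectors coordinate-by-coordinate. For the first statement, recall that a type $\nu_{\uu}$ of $\uu\in\U^n$ is completely determined by the tuple of counts $\bigl(n\nu_{\uu}(u)\bigr)_{u\in\U}$, each of which is an integer in $\{0,1,\ldots,n\}$. Hence the map from types to $\{0,1,\ldots,n\}^{\U}$ is injective, and the number of types is at most $(n+1)^{|\U|}$. The strict inequality ``fewer than'' follows because the counts must additionally satisfy $\sum_{u}n\nu_{\uu}(u)=n$, so at least one tuple in the product set (e.g.\ the all-zero tuple) does not correspond to any type.

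For the second statement, a conditional type $\nu_{\uu|\vv}$ given a sequence $\vv$ of fixed type $\nu_{\vv}$ is determined by the joint counts $\bigl(n\nu_{\uu\vv}(u,v)\bigr)_{(u,v)\in\U\times\V}$, since $\nu_{\uu|\vv}(u|v)=\nu_{\uu\vv}(u,v)/\nu_{\vv}(v)$ whenever $\nu_{\vv}(v)>0$. Each joint count is an integer in $\{0,1,\ldots,n\}$, so there are at most $(n+1)^{|\U||\V|}$ possible tuples, and hence at most that many conditional types. Strictness again follows from the sum constraints $\sum_{u}n\nu_{\uu\vv}(u,v)=n\nu_{\vv}(v)$ for each $v$, which exclude most tuples in the product set.

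I do not expect any real obstacle here; this is essentially a restatement of the fact that the simplex of rational probability vectors with denominator $n$ on an alphabet of size $k$ has at most $\binom{n+k-1}{k-1}<(n+1)^{k}$ elements, applied once to $\U$ and once to $\U\times\V$. No hash property, no typicality estimates, and no concentration inequalities are needed. I would simply state the injection into $\{0,\ldots,n\}^{\U}$ (resp.\ $\{0,\ldots,n\}^{\U\times\V}$) and read off the bound, noting in each case the trivial tuple excluded by the normalization constraint to justify the strict inequality.
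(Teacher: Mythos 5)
Your proof is correct and is the standard counting argument (injection of types into count vectors in $\{0,\ldots,n\}^{\U}$, resp.\ $\{0,\ldots,n\}^{\U\times\V}$); the paper itself gives no proof but simply cites Lemma~2.2 of Csisz\'ar--K\"orner, whose argument is exactly this one. Your justification of the strict inequality via the normalization constraint is a fine touch, though for the purposes of this paper the non-strict bound would already suffice everywhere the lemma is used.
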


\begin{lem}[{\cite[Lemma 2.3]{CK}}]
 \label{lem:typenumber}
 For a type $U$ of a sequence in $\X^n$,
 \begin{align*}
  2^{n[H(U)-\lambda_{\X}]}\leq |\T_U|\leq 2^{nH(U)},
 \end{align*}
 where $\lambda_{X}$ is defined in (\ref{eq:lambda}).
\end{lem}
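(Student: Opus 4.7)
The plan is to prove both bounds by the standard method-of-types trick: introduce the i.i.d.\ product distribution $p_U$ on $\X^n$ whose single-symbol marginal coincides with the type $U$ itself. A direct calculation shows that every $\uu\in\T_U$ has the same probability under this distribution, namely $p_U(\uu) = \prod_{x\in\X}\nu_U(x)^{n\nu_U(x)} = 2^{-nH(U)}$, so $p_U(\T_U) = |\T_U|\cdot 2^{-nH(U)}$. The upper bound $|\T_U|\leq 2^{nH(U)}$ follows immediately from $p_U(\T_U)\leq 1$.

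For the lower bound, the first step is to establish the ``maximum likelihood'' property $p_U(\T_U)\geq p_U(\T_V)$ for every type $V$ of sequences in $\X^n$ (it suffices to consider $V$ supported on the support of $U$, since otherwise $p_U(\T_V)=0$). Writing both type class sizes as multinomial coefficients yields
\[
\frac{|\T_V|}{|\T_U|} = \prod_{x\in\X}\frac{(n\nu_U(x))!}{(n\nu_V(x))!}.
\]
Applying the elementary factorial bound $a!/b!\leq a^{a-b}$ term by term, and using $\sum_x[\nu_U(x)-\nu_V(x)]=0$ to kill the spurious factor $n^{n\sum_x[\nu_U(x)-\nu_V(x)]}$, the right-hand side is bounded by $\prod_x\nu_U(x)^{n[\nu_U(x)-\nu_V(x)]}$. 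Multiplying through by $\prod_x\nu_U(x)^{n\nu_V(x)}$ converts this inequality into $p_U(\T_V)\leq |\T_U|\prod_x\nu_U(x)^{n\nu_U(x)} = p_U(\T_U)$, as required.

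The lower bound is then finished by combining with Lemma~\ref{lem:typebound}: since the total number of distinct types on $\X^n$ is at most $[n+1]^{|\X|}$, summing the maximum-likelihood inequality over all $V$ gives $1 = \sum_V p_U(\T_V) \leq [n+1]^{|\X|}\cdot p_U(\T_U)$. Hence $|\T_U|\cdot 2^{-nH(U)}\geq [n+1]^{-|\X|} = 2^{-n\lambda_{\X}}$, which rearranges to $|\T_U|\geq 2^{n[H(U)-\lambda_{\X}]}$.

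The main obstacle — really the only non-routine combinatorial step — is verifying the factorial inequality $a!/b!\leq a^{a-b}$ in both cases: for $a\geq b$ the ratio is a product of $a-b$ factors each at most $a$, while for $a<b$ the denominator $\prod_{i=1}^{b-a}(a+i)$ is at least $a^{b-a}$, so the reciprocal is at most $a^{a-b}$. Together with the support reduction mentioned above, this handles the edge case $\nu_U(x)=0$ cleanly without having to invoke the convention $0^0=1$ inside the main estimate.
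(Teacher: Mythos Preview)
Your proof is correct and is precisely the classical argument from Csisz\'ar--K\"orner \cite[Lemma~2.3]{CK}. The paper does not supply its own proof of this lemma; it merely quotes the result with a citation, so there is nothing further to compare.
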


\begin{lem}[{\cite[Lemma 7]{HASH-UNIV}\cite[Lemma 2]{CRYPTLDPC}}]
\label{lem:typeset-number}
 For $\yy\in\T_V$,
 \begin{align*}
  |\lrb{\xx': H(\xx')\leq H(U)}|
  &\leq
  2^{n[H(U)+\lambda_{\X}]}
  \\
  |\lrb{\xx': H(\xx'|\yy)\leq H(U|V)}|
  &\leq
  2^{n[H(U|V)+\lambda_{\X\Y}]},
 \end{align*}
 where $\lambda_{X}$ and $\lambda_{\X\Y}$ are defined in (\ref{eq:lambda}).
\end{lem}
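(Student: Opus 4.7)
The plan is to prove both inequalities by partitioning the set in question into type classes (resp.\ conditional-type classes given $\yy$), bounding the cardinality of each class by Lemma~\ref{lem:typenumber} (or its conditional analogue), and bounding the number of classes by Lemma~\ref{lem:typebound}. The two key numerical identities, both immediate from the definition (\ref{eq:lambda}) of $\lambda$, are $[n+1]^{|\X|} = 2^{n\lambda_{\X}}$ and $[n+1]^{|\X||\Y|} = 2^{n\lambda_{\X\Y}}$.

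For the first inequality I would decompose
\[
 \lrb{\xx': H(\xx')\leq H(U)} = \bigcup_{U': H(U')\leq H(U)} \T_{U'},
\]
where the union is over types $U'$ of sequences in $\X^n$. By Lemma~\ref{lem:typenumber} each class satisfies $|\T_{U'}|\leq 2^{nH(U')}\leq 2^{nH(U)}$, and by Lemma~\ref{lem:typebound} there are at most $[n+1]^{|\X|}$ terms in the union. Multiplying and rewriting $[n+1]^{|\X|}=2^{n\lambda_{\X}}$ yields the bound $2^{n[H(U)+\lambda_{\X}]}$.

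For the second inequality, fix $\yy\in\T_V$. Every $\xx'\in\X^n$ belongs to a unique conditional type class $\T_{U'|V}(\yy)$, and every element of that class has $H(\xx'|\yy) = H(U'|V)$. Hence
\[
 \lrb{\xx': H(\xx'|\yy)\leq H(U|V)} = \bigcup_{U'|V: H(U'|V)\leq H(U|V)} \T_{U'|V}(\yy),
\]
where the union ranges over conditional types whose $\V$-marginal equals the type $V$ of $\yy$. Lemma~\ref{lem:typebound} bounds the number of such conditional types by $[n+1]^{|\X||\Y|} = 2^{n\lambda_{\X\Y}}$. For the size of each class I would use the standard conditional analogue of Lemma~\ref{lem:typenumber}, namely $|\T_{U'|V}(\yy)|\leq 2^{nH(U'|V)}\leq 2^{nH(U|V)}$, which follows by applying the unconditional cardinality bound separately to each $\V$-value class of $\yy$ (see e.g.~\cite[Lemma~2.5]{CK}). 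Multiplying gives the claimed $2^{n[H(U|V)+\lambda_{\X\Y}]}$.

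The argument is essentially mechanical type counting, so there is no real obstacle. The only step that does not directly quote a previously stated lemma is the conditional-type cardinality estimate $|\T_{U'|V}(\yy)|\leq 2^{nH(U'|V)}$; this is the standard method-of-types bound obtained by writing $|\T_{U'|V}(\yy)|$ as a product of multinomial coefficients indexed by the $\V$-value classes of $\yy$ and applying Lemma~\ref{lem:typenumber} within each class, so it adds no substantive difficulty.
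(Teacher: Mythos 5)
Your proof is correct: the decomposition into (conditional) type classes, the per-class cardinality bound $|\T_{U'}|\leq 2^{nH(U')}$ and its conditional analogue, and the count of classes via $[n+1]^{|\X|}=2^{n\lambda_{\X}}$ together give exactly the stated bounds. The paper itself does not prove this lemma but cites it from \cite{HASH-UNIV} and \cite{CRYPTLDPC}; your argument is the standard method-of-types proof one would expect to find there, so there is nothing further to reconcile.
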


\begin{lem}[{\cite[Lemma 7]{HASH-UNIV}}]
 \label{lem:vd}
 For any probability distribution $\mu_X$ on $\X$,
 \begin{gather*}
  \min_{U}d(\nu_U,\mu_X)\leq \frac{|\X|}{n}
  \\
  \min_{U}D(\nu_U\|\mu_X)\leq 
  \frac{|\X|}{n\min_{x:\mu_X(x)>0}\mu_X(x)}
  \\
  \min_{U}\left|H(X)-H(U)\right|\leq 
  \frac{2|\X|}{n\min_{x:\mu_X(x)>0}\mu_X(x)}
 \end{gather*}
 where minimum is taken over all types $U$ of the sequence in $\X^n$.
\end{lem}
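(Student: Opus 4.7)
The plan is to construct one type $U$ that handles all three claims simultaneously by rounding $n\mu_X$ to nearby integers. Set $n_x\equiv\lfloor n\mu_X(x)\rfloor$ for each $x\in\X$, let $r\equiv n-\sum_x n_x$ (which satisfies $0\leq r<|\X|$), and distribute these $r$ extra units one each to $r$ distinct symbols, for instance those with the largest fractional parts of $n\mu_X(x)$. The resulting $\nu_U(x)\equiv n_x/n$ is a valid type that vanishes wherever $\mu_X$ does and satisfies $|\nu_U(x)-\mu_X(x)|\leq 1/n$ for every $x$. The first claim $d(\nu_U,\mu_X)\leq|\X|/n$ is then immediate by summing this pointwise estimate over the $|\X|$ symbols.

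For the divergence, write $\delta_x\equiv\nu_U(x)-\mu_X(x)$, so $|\delta_x|\leq 1/n$ and $\sum_x\delta_x=0$. I would apply $\log(1+t)\leq t/\ln 2$ (recall $\log$ is base $2$) to $\log(\nu_U(x)/\mu_X(x))=\log(1+\delta_x/\mu_X(x))$, obtaining
\[
D(\nu_U\|\mu_X)\leq \frac{1}{\ln 2}\sum_{x:\mu_X(x)>0}\frac{\nu_U(x)\delta_x}{\mu_X(x)}=\frac{1}{\ln 2}\sum_{x:\mu_X(x)>0}\frac{\delta_x^2}{\mu_X(x)},
\]
where $\sum_x\delta_x=0$ cancels the first-order term. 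Substituting $\delta_x^2\leq 1/n^2$ and $1/\mu_X(x)\leq 1/\mu_{\min}$, with $\mu_{\min}\equiv\min_{x:\mu_X(x)>0}\mu_X(x)$, gives $D(\nu_U\|\mu_X)\leq|\X|/(n^2\mu_{\min}\ln 2)$, which is at least as strong as the stated $|\X|/(n\mu_{\min})$ for all $n\geq 2$.

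For the entropy gap I would use the algebraic identity
\[
H(U)-H(X)=-D(\nu_U\|\mu_X)+\sum_x\delta_x\log\frac{1}{\mu_X(x)},
\]
which follows by adding and subtracting $\sum_x\nu_U(x)\log\mu_X(x)$ inside $-\sum_x\nu_U(x)\log\nu_U(x)$. Triangle inequality, $\sum_x|\delta_x|\leq|\X|/n$, and the elementary bound $\log(1/\mu_{\min})\leq 1/\mu_{\min}$ (valid for $\mu_{\min}\in(0,1]$) together with the divergence bound give $|H(U)-H(X)|\leq|\X|/(n^2\mu_{\min}\ln 2)+|\X|/(n\mu_{\min})\leq 2|\X|/(n\mu_{\min})$ for $n\geq 2$, matching the claim.

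There is no substantial obstacle: the lemma is essentially a careful rounding calculation. The only points requiring attention are handling symbols with $\mu_X(x)=0$ correctly (they contribute zero in every sum under the convention $0\log 0=0$) and confirming that each Taylor-style bound is tight enough to fit under the stated right-hand sides.
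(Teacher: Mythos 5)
The paper does not actually prove this lemma --- it is imported verbatim from \cite[Lemma 7]{HASH-UNIV} --- so there is no internal proof to compare against; your rounding construction is the standard route and your argument is correct: the pointwise bound $|\nu_U(x)-\mu_X(x)|\le 1/n$ (with the extra units safely assignable to symbols in the support of $\mu_X$, since the fractional parts sum to $r$ and each is below $1$), the second-order divergence estimate using $\sum_x\delta_x=0$, and the identity relating $H(U)-H(X)$ to the divergence all check out. The only caveat is that your final comparisons rely on $n\ge 2$ (your intermediate bounds are in fact sharper, carrying $n^2$ in the denominator), so $n=1$ needs a separate one-line verification, which is immediate since the right-hand sides then exceed $\log|\X|$.
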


\section*{Acknowledgements}
This paper was written while one of authors J.\ M.\ was a visiting
researcher at ETH, Z\"urich.
He wishes to thank Prof.\ Maurer for arranging for his stay.
The authors wish to thank
Prof.\ Uyematsu, Prof.\ Matsumoto, and Prof.\ Watanabe for helpful discussions.

\end{document}